\newtheorem{theo}{Theorem}[section]
\newtheorem{lem}{Lemma}[section]
\newtheorem{pro}{Proposition}[section]
\newtheorem{cor}{Corollary}[section]
\newtheorem{example}{Example}[section]
\numberwithin{equation}{section}
\def\ZZ{\mathbb{Z}}
\def\RR{\mathbb{R}}
\def\NN{\mathbb{N}}
\def\Z{\mathbb{Z}}
\def\R{\mathbb{R}}
\def\N{\mathbb{N}}
\def\C{\mathbb{C}}
\def\bx{{\bm x}}
\def\b0{{\bf 0}}
\def\gb{\beta}
\def\go{\omega}
\def\gbo{{\boldsymbol \omega}}
\def\ol{\overline}
\def\tp{\top}
\def\wh{\widehat}
\newcommand{\bhanignore}[1]{} 
\newcommand{\lpf}{a} 
\newcommand{\hpf}{b} 
\newcommand{\AS}{\operatorname{AS}}
\newcommand{\bo}{\mathcal{O}}
\newcommand{\sr}{\operatorname{sr}}  
\newcommand{\sm}{\operatorname{sm}}  
\newcommand{\vmo}{\operatorname{vm}} 
\newcommand{\lpm}{\operatorname{lpm}}  
\newcommand{\gl}{\lambda}
\newcommand{\pv}{\mathsf{v}}
\newcommand{\sym}{\mathsf{S}}
\begin{document}
\date{}

\title{Symmetric Canonical Quincunx Tight Framelets with High Vanishing Moments and Smoothness
\footnote{Research of B. Han was supported by
the Natural Sciences and Engineering Research Council of Canada (NSERC Canada) under
Grant 05865.
Research of Z. Shen was supported by several grants from Singapore.
Research of X. Zhuang was  supported by the Research Grants Council of Hong Kong (Project No. CityU 11304414).}
}

\author{Bin Han${}^a$, \;  Qingtang Jiang${}^b$, \; Zuowei
Shen${}^c$, \; and Xiaosheng Zhuang${}^d$\\
\\
${}^a${\it {\small  Department of Mathematical and Statistical Sciences}} \\
{\it {\small University of Alberta,  Edmonton, Alberta, Canada T6G 2G1}}\\
{\small bhan@ualberta.ca}\\
${}^b${\it {\small Department of Mathematics and Computer Science}}\\
{\it {\small University of Missouri--St. Louis, St. Louis, MO 63121, USA}}\\
{\small jiangq@umsl.edu}\\
${}^c${\it {\small Department of Mathematics, National University of Singapore}}\\
{\small {\it 10 Lower Kent Ridge Road, Singapore, 119076 }}\\
{\small matzuows@nus.edu.sg}\\
${}^d${\it {\small Department of Mathematics,
City University of Hong Kong}}\\
{\small {\it Tat Chee Avenue,  Kowloon Tong, Hong Kong}}\\
{\small xzhuang7@cityu.edu.hk }
}
 \maketitle

\begin{abstract}

In this paper, we propose an approach to construct a family of two-dimensional compactly supported real-valued \emph{symmetric} quincunx tight framelets $\{\phi; \psi_1,\psi_2,\psi_3\}$ in $L_2(\R^2)$ with  arbitrarily high orders of vanishing moments. Such symmetric quincunx tight framelets are associated with quincunx tight framelet filter banks $\{\lpf;\hpf_1,\hpf_2,\hpf_3\}$ having increasing orders of vanishing moments and enjoying the additional double canonical properties:
\[
\hpf_1(k_1,k_2)=(-1)^{1+k_1+k_2} \lpf(1-k_1,-k_2),\qquad \hpf_3(k_1,k_2)=(-1)^{1+k_1+k_2} \hpf_2(1-k_1,-k_2),\qquad \forall\, k_1,k_2\in \Z.
\]
Moreover, the supports of all the high-pass filters $\hpf_1, \hpf_2,\hpf_3$ are no larger than that of the low-pass filter $\lpf$. For a low-pass filter $\lpf$ which is not a quincunx orthonormal wavelet filter, we show that a quincunx tight framelet filter bank $\{\lpf;\hpf_1,\ldots,\hpf_L\}$ with $\hpf_1$ taking the above canonical form must have $L\ge 3$ high-pass filters. Thus, our family of symmetric double canonical quincunx tight  framelets has the minimum number of generators. Numerical calculation indicates that this family of symmetric double canonical quincunx tight framelets can be arbitrarily smooth. Using one-dimensional filters having linear-phase moments, in this paper we also provide a second approach to construct multiple  canonical quincunx tight  framelets with symmetry. In particular, the second approach yields a family of $6$-multiple canonical real-valued quincunx tight framelets in $L_2(\R^2)$ and a family of double canonical complex-valued quincunx tight framelets in $L_2(\R^2)$ such that both of them have symmetry and arbitrarily increasing orders of smoothness and vanishing moments. Several examples are provided to illustrate our general construction and theoretical results on canonical quincunx tight  framelets in $L_2(\R^2)$ with symmetry, high vanishing moments, and smoothness.
Symmetric quincunx tight framelets constructed by both approaches in this paper are of particular interest for their applications in computer graphics and image processing due to their polynomial preserving property, full symmetry, short support, and high smoothness and vanishing moments.

\bigskip

{\it Key words and phrases}:
Quincunx tight framelets,  canonical tight  framelets, symmetry, linear-phase moments, vanishing moments,  sum rule orders, smoothness exponents, wavelet analysis.

{\it 2010 Mathematics Subject Classification:} 42C15, 42C40, 42B99, 41A30, 41A63.


\end{abstract}

\section{Introduction and motivations}
\label{sec:intro}

\setcounter{equation}{0}


In this paper we study quincunx tight  framelets having full symmetry, short support, high vanishing moments and smoothness.
We say that a $d\times d$ matrix $M$ is \emph{a dilation matrix} if $M$ is an integer matrix having all its eigenvalues greater than one in modulus. In dimension two, typical and important dilation matrices $M$ include
\begin{equation}\label{quincunxmatrix}
2I_2:=\left[ \begin{matrix} 2 &0\\ 0 &2\end{matrix}\right],\qquad
M_{\sqrt{2}}:=\left[ \begin{matrix}
1&1\\ 1&-1 \end{matrix}
\right], \qquad N_{\sqrt{2}}:=\left[ \begin{matrix}
1&-1\\ 1&1 \end{matrix}
\right],
\end{equation}
where $M_{\sqrt{2}}$ and $N_{\sqrt{2}}$ are called \emph{quincunx dilation matrices}.
For functions $\phi, \psi_{1},\ldots,\psi_{L}$ in $L_2(\R^d)$, we say that $\{\phi; \psi_{1},\ldots, \psi_{L}\}$ is \emph{a tight $M$-framelet} for $L_2(\R^d)$ if the affine system
$\AS(\{\phi; \psi_{1},\ldots, \psi_{L}\})$ is a normalized tight frame of $L_2(\R^d)$; that is,
\begin{equation}\label{tf}
\|f\|_{L_2(\R^d)}^2=\sum_{k\in \Z^d} |\langle f, \phi(\cdot-k)\rangle|^2+\sum_{j=0}^\infty \sum_{\ell=1}^L \sum_{k\in \Z^d} |\langle f, |\det(M)|^{j/2} \psi_{\ell}(M^j\cdot-k)\rangle|^2,\qquad \forall\, f\in L_2(\R^d),
\end{equation}
where the affine system generated by the functions $\phi, \psi_{1},\ldots, \psi_{L}$ is defined to be
\[
\AS(\{\phi; \psi_{1},\ldots, \psi_{L}\}):=
\{\phi(\cdot-k)\; : \; k\in \Z^d\}\cup\{ |\det(M)|^{j/2} \psi_{\ell}(M^j\cdot-k) \; : \;
j\in \N\cup\{0\}, k\in \Z^d, \ell=1,\ldots, L\},
\]
$\langle f,g\rangle:=\int_{\R^d} f(x)\ol{g(x)}dx$ is the  inner product, and $\|f\|_{L_2(\R^d)}:=\sqrt{\langle f, f\rangle}$ is the $L_2$-norm.
If $\AS(\{\phi; \psi_{1},\ldots, \psi_{L}\})$
is an orthonormal basis of $L_2(\R^d)$, then $\{\phi; \psi_{1},\ldots, \psi_{L}\}$ is called \emph{an orthonormal $M$-wavelet}. It is known in \cite[Proposition~4]{Han:acha:2012} that if $\AS(\{\phi;\psi_1,\ldots,\psi_L\})$ is a normalized tight frame (or an orthonormal basis) for $L_2(\R^d)$, then the homogeneous affine system $\AS(\{\psi_1,\ldots,\psi_L\})$ must be a normalized tight frame (or an orthonormal basis) for $L_2(\R^d)$ as well, where
\begin{equation} \label{has}
\AS(\{\psi_{1},\ldots, \psi_{L}\}):=\{ |\det(M)|^{j/2} \psi_{\ell}(M^j\cdot-k) \; : \;
j\in \Z, k\in \Z^d, \ell=1,\ldots, L\}.
\end{equation}
Tight $M$-framelets and orthonormal $M$-wavelets are often derived from $M$-refinable functions. By $l_0(\Z^d)$ we denote the set of all finitely supported sequences $u=\{u(k)\}_{k\in \Z^d}$ on $\Z^d$. For $u\in l_0(\Z^d)$, its Fourier series (or symbol) $\wh{u}$ is a $2\pi\Z^d$-periodic trigonometric polynomial defined by $\wh{u}(\gbo):=\sum_{k\in \Z^d} u(k) e^{-ik\cdot\gbo}$, $\gbo\in \R^d$. For $\lpf,\hpf_{1},\ldots, \hpf_{L}\in l_0(\Z^d)$ such that $\wh{\lpf}(0)=\sum_{k\in \Z^d} \lpf(k)=1$, the following functions
\begin{equation}\label{phi:psi}
\wh{\phi}(\gbo):=\prod_{j=1}^\infty \wh{\lpf}((M^\tp)^{-j}\gbo),\qquad
\wh{\psi_{\ell}}(\gbo):=\wh{\hpf_\ell}((M^\tp)^{-1}\gbo) \wh{\phi}((M^\tp)^{-1}\gbo),\qquad \gbo\in \R^d, \ell=1,\ldots, L
\end{equation}
are well defined (\cite{Dau}). In the spatial domain, $\phi$ satisfies the following \emph{refinement equation}
\[
\phi=|\det(M)|\sum_{k\in \Z^d} \lpf(k) \phi(M\cdot-k)
\]
and $\phi$ is called \emph{the 
$M$-refinable function/distribution} associated with the \emph{filter/mask} $\lpf$.
For the functions $\phi$, $\psi_1$, $\ldots$, $\psi_L$ defined in \eqref{phi:psi} through the filters $\lpf,\hpf_1,\ldots,\hpf_L\in l_0(\Z^d)$ satisfying $\wh{\lpf}(0)=1$,
$\{\phi; \psi_{1},\ldots, \psi_{L}\}$ is a tight $M$-framelet for $L_2(\R^d)$ if and only if $\{\lpf; \hpf_{1},\ldots, \hpf_{L}\}$ is \emph{a tight $M$-framelet filter bank}; that is,
\begin{equation}\label{tffb}
|\wh{\lpf}(\gbo)|^2+\sum_{\ell=1}^L |\wh{\hpf_{\ell}}(\gbo)|^2=1
\quad \mbox{and}\quad
\ol{\wh{\lpf}(\gbo)}\wh{\lpf}(\gbo+2\pi \xi)+
\sum_{\ell=1}^L \ol{\wh{\hpf_{\ell}}(\gbo)}\wh{\hpf_{\ell}}(\gbo+2\pi \xi)=0,\qquad \xi\in \Omega_{M}\backslash \{0\},
\end{equation}
where $\Omega_{M}$ is a set of representatives of the distinct cosets of the quotient group $[(M^\tp)^{-1}\Z^d]/\Z^d$ and is given by
\begin{equation}\label{Omega}
\Omega_{M}:=[(M^\tp)^{-1} \Z^d] \cap [0,1)^d.
\end{equation}
As observed in \cite{Han:laa:2002,Han:jcam:2003}, the equations in \eqref{tffb} for a tight $M$-framelet filter bank only depend on the lattice $M\Z^d$ instead of $M$ itself. That is, for two $d\times d$ integer matrices $M$ and $N$ satisfying
\begin{equation}\label{same:dilation}
M\Z^d=N\Z^d,
\end{equation}
$\{\lpf;\hpf_1,\ldots,\hpf_L\}$ is a tight $M$-framelet filter bank if and only if it is a tight $N$-framelet filter bank.
This simple observation in \cite{Han:laa:2002,Han:jcam:2003} comes from the fact that \eqref{same:dilation} is equivalent to $M=NE$ for some integer matrix $E$ with $|\det(E)|=1$, which trivially implies $(M^\tp)^{-1}\Z^d=(N^\tp)^{-1}\Z^d$. For example,
the two quincunx dilation matrices in \eqref{quincunxmatrix} satisfy $M_{\sqrt{2}}\Z^2=N_{\sqrt{2}}\Z^2$, which is the quincunx lattice $\{ (j,k)\in \Z^2 \, : \, j+k\; \mbox{is even}\}$.

When \eqref{tffb} holds, it was proved in \cite{RS1} that the corresponding homogeneous affine system $\AS(\{\psi_1,\ldots,\psi_L\})$ forms a normalized tight frame in $L_2(\R^d)$, which is called the unitary extension principle.
Under various conditions on $\phi,\psi_1,\ldots,\psi_L$ and $\lpf,\hpf_1,\ldots,\hpf_L$, tight framelets have been studied in \cite{CHS_2002,Daubechies2003,Han_1997,RS1} and references therein.
Under the natural and necessary condition $\wh{\lpf}(0)=1$,
the above one-to-one correspondence between a tight $M$-framelet $\{\phi; \psi_1,\ldots,\psi_L\}$ and a tight $M$-framelet filter bank $\{a;b_1,\ldots,b_L\}$ has been presented in \cite[Lemma~2.1, Theorems~2.2 and 2.3]{Han:jcam:2003} or more generally, \cite[Corollary~12 and Theorem~17]{Han:acha:2012} for fully nonstationary tight framelets. In particular, if $\{\lpf;\hpf_1,\ldots,\hpf_L\}$ is a tight $M$-framelet filter bank with $\wh{\lpf}(0)=1$, then the functions $\phi,\psi_1,\ldots,\psi_L$ defined in \eqref{phi:psi} must be square integrable functions in $L_2(\R^d)$ (see \cite[Lemma~2.1]{Han:jcam:2003}).
Due to this one-to-one correspondence between tight $M$-framelets and tight $M$-framelet filter banks, in this paper we shall concentrate on tight $M$-framelet filter banks. Wavelets and framelets using the quincunx dilation matrices in \eqref{quincunxmatrix} are called quincunx wavelets or quincunx framelets in this paper.

For some applications such as computer graphics and computer aided geometric design,
symmetry of framelets and wavelets is highly desired.
Let us now discuss the general symmetry of a filter.
Let $G$ be a finite set of $d\times d$ integer matrices that forms a group under the usual matrix multiplication. We say that a filter $\lpf\in l_0(\Z^d)$ is \emph{$G$-symmetric about a point  $\mathbf{c}\in\RR^d$} if
\begin{equation}
\label{def:mask-sym-group-time}
\lpf(E(k-\mathbf{c})+\mathbf{c}) = \lpf(k),\qquad\forall k\in \ZZ^d\; \mbox{ and }\; \forall E\in G.
\end{equation}
%
However, the symmetry of a low-pass filter $\lpf$ does not automatically guarantee the symmetry of the $M$-refinable function $\phi$ defined in \eqref{phi:psi}. As discussed in \cite{Han:laa:2002,Han:simaa:2003,Han:acha:2004}, some compatibility condition is needed.
We say that a dilation matrix $M$ is \emph{compatible} with a symmetry group $G$ if $MEM^{-1}\in G$ for all $E\in G$.
If $M$ is compatible with a symmetry group $G$,
then $\phi$ in \eqref{phi:psi} is $G$-symmetric about $c_\phi:=(M-I_d)^{-1}\mathbf{c}$ (i.e.,
$\phi(E(\cdot-c_\phi)+c_\phi)=\phi$ for all $E\in G$) if and only if $\lpf$ is $G$-symmetric about $\mathbf{c}$ (see \cite[Proposition~2.1]{Han:acha:2004} and \cite{Han:laa:2002,Han:simaa:2003}). One of the commonly used two-dimensional symmetry groups in computer graphics is the dihedral group $D_4$ given by
\begin{equation}\label{def:D4}
D_4:=\left \{\pm\left[\begin{matrix}1& 0\\0 &1\end{matrix}\right],
\pm\left[\begin{matrix}1& 0\\0 &-1\end{matrix}\right],
\pm\left[\begin{matrix}0& 1\\1 &0\end{matrix}\right],
\pm\left[\begin{matrix}0& 1\\-1 &0\end{matrix}\right]\right\}.
\end{equation}
Note that $M_{\sqrt{2}}$ is compatible with the symmetry group $D_4$ and its subgroup $\{I_2,-I_2\}$, but it is not compatible with the symmetry group
$D_4^+:=\{\pm \mbox{diag}(1, 1), \pm \mbox{diag}(1,-1)\}$. A matrix $N$ is \emph{$G$-equivalent} to $M$ if $N=EMF$ for some $E,F\in G$. Note that $N_{\sqrt{2}}$ in \eqref{quincunxmatrix} is $D_4$-equivalent to $M_{\sqrt{2}}$. It is of interest to point out here that \cite[Theorem~2]{Han:stmalo:2003} shows that every $2\times 2$ matrix $M$ compatible with $D_4$ must be $D_4$-equivalent to either $M=cI_2$ or $M=cM_{\sqrt{2}}$ for some $c\in \Z$.
This makes the quincunx dilation matrices $M_{\sqrt{2}}$ and $N_{\sqrt{2}}$ particularly interesting for constructing tight framelets having the full symmetry $D_4$.
For a low-pass $D_4$-symmetric filter $a$, since $N_{\sqrt{2}}$ is $D_4$-equivalent to $M_{\sqrt{2}}$, we shall see in this paper that the $N_{\sqrt{2}}$-refinable function is just a shifted version of the $M_{\sqrt{2}}$-refinable function. However, the $M_{\sqrt{2}}$-refinable function and
the $N_{\sqrt{2}}$-refinable function associated with a low-pass filter $a$ without symmetry could be completely different (\cite{CD_1993,Han:laa:2002}). Because we are mainly interested in symmetric quincunx tight framelet filter banks, as a consequence, there are no essential differences for using either $M_{\sqrt{2}}$ or $N_{\sqrt{2}}$. Therefore, for simplicity, we mainly discuss the dilation matrix $M_{\sqrt{2}}$ in this paper.

A tight $M$-framelet filter bank $\{\lpf; \hpf_{1},\ldots, \hpf_{L}\}$ with $L=|\det(M)|-1$ is called \emph{an orthonormal $M$-wavelet filter bank}. It is a simple consequence of the equations in \eqref{tffb} (by rewriting the equations in \eqref{tffb} in a matrix form) that the low-pass filter $\lpf$ in a tight $M$-framelet filter bank must satisfy
\begin{equation} \label{tffb:lpf}
\sum_{\xi\in \Omega_{M}} |\wh{\lpf}(\gbo+2\pi \xi)|^2\le 1, \qquad \forall\, \gbo\in \R^d.
\end{equation}
If the above inequality becomes an identity for all $\gbo\in \R^d$, then the low-pass filter $\lpf$ is called \emph{an orthonormal $M$-wavelet filter}. If $\{\lpf; \hpf_{1},\ldots, \hpf_{L}\}$ is an orthonormal $M$-wavelet filter bank, then $\lpf$ must be an orthonormal $M$-wavelet filter and its corresponding $\{\phi; \psi_1,\ldots, \psi_L\}$ in \eqref{phi:psi}
is a tight $M$-framelet for $L_2(\R^d)$ but it may fail to be an orthonormal $M$-wavelet for $L_2(\R^d)$ (\cite{Dau}). For a filter bank $\{\lpf; \hpf_{1},\ldots, \hpf_{L}\}$ with $L=|\det(M)|-1$ and $\wh{\lpf}(0)=1$,
$\{\phi; \psi_1,\ldots, \psi_L\}$ in \eqref{phi:psi} is an orthonormal $M$-wavelet for $L_2(\R^d)$ if and only if $\{\lpf; \hpf_{1},\ldots, \hpf_{L}\}$ is an orthonormal $M$-wavelet filter bank and $\sm(a,M)>0$, where the technical quantity $\sm(a,M)$ is defined in \eqref{sm:a}. See \cite{BW,CD_1993,Dau,Han:simaa:2003,Han:jat:2003,Han:acha:2004,Han:jcam:2011,Rie_S1,Rie_S2} and references therein for orthonormal wavelets.
For a $d\times d$ dilation matrix $M$, it is trivial to see that $|\det(M)|\ge 2$.
For $|\det(M)|=2$, an orthonormal $M$-wavelet filter bank $\{\lpf; \hpf_{1},\ldots, \hpf_{L}\}$ with $L=|\det(M)|-1$ has only one high-pass filter $\hpf_1$ which is derived from the low-pass filter $\lpf$ by
\begin{equation} \label{canonical}
\wh{\hpf_1}(\gbo):=e^{-i\gbo\cdot \gamma} \ol{\wh{\lpf}(\gbo+2\pi \xi)},\qquad \gbo\in \R^d \quad \mbox{with}\quad \gamma\in \Z^d\backslash [M\Z^d],\; \xi\in \Omega_{M}\backslash\{0\}.
\end{equation}
Therefore, for a dilation matrix $M$ with $|\det(M)|=2$,
an orthonormal $M$-wavelet $\{\phi; \psi_{1},\ldots, \psi_{L}\}$ with $L=|\det(M)|-1$ has only one wavelet function $\psi_{1}$.
Hence, it is of interest in both theory and application to consider dilation matrices $M$ with $|\det(M)|=2$. This is another motivation for us to consider the quincunx dilation matrices in \eqref{quincunxmatrix}.


Due to the importance of high dimensional problems, multivariate wavelets and framelets have been studied for many years now. For example, quincunx orthonormal wavelets have been investigated in \cite{CD_1993,Han:laa:2002} and quincunx biorthogonal wavelets have been studied in \cite{CD_1993,HanJia:MCOM:2000, Jiang_bio_quad_2011}. Using the dilation matrix $M_{\sqrt{2}}$ and perturbation of the Daubechies orthonormal wavelets, a family of quincunx orthonormal wavelets with arbitrarily smoothness orders has been reported in \cite{BW}.
However, compactly supported continuous quincunx orthonormal wavelets cannot have symmetry (see \cite{CD_1993} and \cite[Proposition~2.2]{Han:acha:2004}). Moreover, it still remains unknown so far whether there exists a $C^1$ compactly supported orthonormal $N_{\sqrt{2}}$-refinable function (\cite{CD_1993} and \cite[Example~3.6]{Han:laa:2002}).
In fact, if the dilation matrix $M_{\sqrt{2}}$ is changed into $N_{\sqrt{2}}$ for the family of quincunx wavelet filter banks in \cite{BW}, as a known phenomenon observed in \cite{CD_1993}, their smoothness orders are no more than one and however decreases to zero. The quincunx biorthogonal wavelets constructed in some literature such as \cite{HanJia:MCOM:2000,Jiang_bio_quad_2011} have nice smoothness and/or full $D_4$-symmetry. However the biorthogonal wavelets usually have large supports and the corresponding wavelet transforms have large condition numbers. Pairs of quincunx dual frames have been obtained in \cite[Corollary~3.4]{EH} having only three wavelet functions without symmetry.
Due to the difficulty in constructing multivariate wavelets with desirable properties such as symmetry, short support and high vanishing moments (see \cite{CD_1993,Dau,Han:laa:2002,Han:acha:2004,Han:jcam:2011} and references therein), the current interest has been focusing on the construction of tight $M$-framelets with various dilation matrices and properties.
Tight $M$-framelets have been studied and constructed in many articles. For example, the topic of wavelet frames has been investigated in \cite{CHS_2002,Dau,Daubechies2003,Han_1997,Han:acha:2012,Rie_S2, RS1} and references therein.
The theory and construction of one-dimensional tight $2$-framelets are quite complete so far, for example, see
\cite{Chui_He_2000,CHS_2002,Daubechies2003,Dong_Shen_2007,Dong2010IASNotes,Han:acha:2013,Han:acha:2014,HanMo:simaa:2004,Jiang_2003, MoZhuang:LAA, RS1} and many references therein.
In particular, if $\lpf$ is $\{1,-1\}$-symmetric, the construction of symmetric $2$-framelet filter bank $\{\lpf;\hpf_1,\ldots, \hpf_L\}$ with $L=2$ or $L=3$ and with short support have been completely solved in
\cite{Han:acha:2013,Han:acha:2014,HanMo:simaa:2004} with efficient algorithms.
The construction of multivariate tight framelets
has been reported in \cite{GRon:pams:1998,Han_1997,Han:laa:2002,Han:jcam:2003,Han:mmnp:2014,Jiang_2009,JS_2014,Lai_S_2006,RS_1998} and references therein. The applications of tight framelets to various applications such as image restoration have been investigated in
\cite{Dong2010IASNotes,HanZhao:siims:2014,ShenICM2010,TSZ_2013}. Recently, wavelet frames have been used for surface processing \cite{DJLS_2014,Shen_L_04}. Furthermore, the connections of  wavelet frame based, especially spline tight wavelet  frames based, approach for image restoration to PDE based methods have been established in \cite{CDOS_2012} for the total variational method and extended in \cite{DJS_2013} for the nonlinear diffusion partial differential equation based methods, as well as  in \cite{CDS_2014} for variational
models on the space of piecewise smooth functions.

We now explain our motivations to study quincunx tight framelets and quincunx tight framelet filter banks. From the viewpoint of theory and application for particular areas such as computer aided geometric design and image processing, the following are some key desirable features of a tight $M$-framelet filter bank $\{\lpf; \hpf_1,\ldots,\hpf_L\}$:

\begin{enumerate}
\item[(i)] The high-pass filters $\hpf_{1},\ldots,\hpf_{L}$ have desired high orders of vanishing moments.

\item[(ii)] The low-pass filter $\lpf$ has full symmetry and all the high-pass filters $\hpf_{1},\ldots,\hpf_{L}$ possess desired symmetry.

\item[(iii)] The number $L$ of high-pass filters should be relatively small for computational efficiency.

\item[(iv)] The low-pass filter $\lpf$ should have short support, while the supports of all high-pass filters $\hpf_{1},\ldots,\hpf_{L}$ should not be larger than the support of the low-pass filter $\lpf$.

\item[(v)] The smoothness exponent $\sm(a,M)$ (see \eqref{sm:a}) can be arbitrarily large.
\end{enumerate}

Let $\{\phi;\psi_1,\ldots,\psi_L\}$ be its associated tight $M$-framelet for $L_2(\R^d)$, where $\phi,\psi_1,\ldots,\psi_L$ are defined in \eqref{phi:psi}.
Item (i) implies that all the wavelet generators $\psi_1,\ldots,\psi_L$ have high orders of vanishing moments. The high order of vanishing moments in item (i) is closely related to sparse approximation by tight framelets and necessarily requires that the low-pass filter $\lpf$ should have high order of sum rules.
Item (v) implies that the smoothness exponents of all the functions $\phi,\psi_1,\ldots,\psi_L$ can be arbitrarily large since $\sm(\phi)\ge \sm(a,M)$ and $\sm(\psi_1)=\cdots=\sm(\psi_L)=\sm(\phi)$ (see \eqref{sm:phi}). The definitions of vanishing moments $\vmo(a)$, sum rules $\sr(a,M)$, and smoothness exponents $\sm(\phi)$ and $\sm(a,M)$ will be defined in Section~2.
High orders of vanishing moments in item (i) and smoothness in item (v) are of theoretical interest and importance for characterizing function spaces by framelets.
Item (ii) implies that all the functions $\phi,\psi_1,\ldots,\psi_L$ have symmetry.
The symmetry property in item (ii) is indispensable for applications of tight framelets to certain areas such as computer graphics and is often strongly desired in areas such as image processing for better visual quality. Item (iv) implies that all $\phi,\psi_1,\ldots,\psi_L$ have shortest possible support. Items (iii) and (iv) are important in applications for computational efficiency.
We also point out here that because different applications require different desirable properties of framelets and wavelets, it is not surprising that the above outlined desirable properties in items (i)--(v) may not be needed or should be changed accordingly for a particular application. For example, instead of high orders of vanishing moments in item (i), consecutive orders of vanishing moments starting from vanishing moment one are found to be very useful in image processing \cite{CDOS_2012,Dong2010IASNotes,ShenICM2010}. To achieve directionality in \cite{Han:mmnp:2013,HanZhao:siims:2014} for applications of complex tight framelets in image/video denoising, symmetry of the high-pass filters in item (ii) is sacrificed  (but the low-pass filter has symmetry and the high-pass filters have pairwise symmetry).
Nevertheless, the outlined properties in items (i)--(v) are highly desired for applications in computer graphics, computer aided geometric design as well as other applications.

Despite numerous effort by many researchers on constructions of multivariate tight $M$-framelets and tight $M$-framelet filter banks in many papers, none of them can really achieve all the above desirable properties in items (i)--(v).
For example, tight $M$-framelet filter banks with short supports have been constructed in \cite{GRon:pams:1998,RS_1998} from a special class of almost separable low-pass filters. For a $d$-dimensional filter $\lpf\in l_0(\Z^d)$, we say that $\lpf$ is \emph{an almost separable filter} if its symbol is a finite product of symbols of one-dimensional filters as follows:
\begin{equation}\label{separablefilter}
\wh{\lpf}(\gbo)=\prod_{\ell=1}^K \wh{\lpf_\ell}(\gamma_\ell\cdot\gbo),\qquad \gbo\in \R^d \quad \mbox{with}\quad \lpf_\ell\in l_0(\Z), \gamma_\ell \in \Z^d.
\end{equation}
Because the one-dimensional filters $\lpf_\ell$ used in \cite{GRon:pams:1998,RS_1998} are Haar type low-pass filters with sum rule order one, it is not surprising that all the constructed tight framelets in \cite{GRon:pams:1998,RS_1998} have only one vanishing moment.
For every $d\times d$ dilation matrix $M$,
tight $M$-framelet filter banks with arbitrarily high vanishing moments have been reported in \cite{Han:laa:2002,Han:jcam:2003} by employing the simple observation in \eqref{same:dilation} on the role of a dilation matrix $M$ in a tight $M$-framelet filter bank.
Note that every dilation matrix $M$ can be written as $M=E\Lambda F$ (see \cite{Han:laa:2002,Han:jcam:2003}), where $E,\Lambda, F$ are integer matrices such that $|\det(E)|=|\det(F)|=1$ and $\Lambda$ is diagonal. This allows \cite[Theorem~1.1 and Lemma~3.1]{Han:jcam:2003} and \cite[Corollary~3.4]{Han:laa:2002} to trivially have a tight $\Lambda$-framelet (or orthonormal $\Lambda$-wavelet) filter bank $\{\lpf;\hpf_1,\ldots,\hpf_L\}$ with arbitrarily high vanishing moments and short support through tensor product of one-dimensional ones and consequently, $\{\lpf(E\cdot);\hpf_1(E\cdot),\ldots,\hpf_L(E\cdot)\}$ is a
tight $M$-framelet (or orthonormal $M$-wavelet) filter bank. Note that $\lpf(E\cdot)$ is an almost separable filter by $\wh{\lpf(E\cdot)}(\gbo)=\wh{\lpf}((E^\tp)^{-1}\gbo)$.
Tight $M$-framelet filter banks derived from almost separable low-pass filters can be also trivially constructed in \cite{Han:mmnp:2014} through projecting tensor product tight framelet filter banks. In particular, tight $2I_d$-framelet filter banks for every box spline filter having at least order one sum rule can be painlessly constructed (see \cite[Theorem~2.5]{Han:mmnp:2014}). In fact, all the constructions in \cite{GRon:pams:1998,Han:laa:2002,Han:jcam:2003,Han:mmnp:2014,RS_1998} can be regarded as various special cases of the projection method developed in \cite{Han:mmnp:2014}.
Using sum of squares, for a (two-dimensional) low-pass filter $\lpf$ satisfying \eqref{tffb:lpf}, a general method has been proposed in \cite{CPSS:ca:2013,Lai_S_2006}.
From any box-spline filter $\lpf$ having at least order one sum rules, recently \cite{FJS:mc:2015} constructs a tight $2I_d$-framelet filter bank whose high-pass filters have short support as that of the low-pass filter $\lpf$ and the number $L-1$ is equal to the number of nonzero coefficients in $\lpf$. But all the constructed tight $2I_d$-framelet filter banks in \cite{FJS:mc:2015} cannot have more than one vanishing moment, since the method in \cite{FJS:mc:2015} requires a low-pass filter to have nonnegative coefficients.
However, all the constructed tight framelets in \cite{CPSS:ca:2013,FJS:mc:2015,GRon:pams:1998,Han:laa:2002,Han:jcam:2003,Han:mmnp:2014,Lai_S_2006,RS_1998}
either lack symmetry or have a very large number $L$ of high-pass filters, while
the supports of the constructed high-pass filters in \cite{Lai_S_2006} could be much larger than the support of the low-pass filter.
Beyond the above constructions of multivariate tight $M$-framelet filter banks, particular examples of tight $M$-framelet filter banks have been given in \cite{Jiang_2009,JS_2014} and other references. However, it remains unclear whether one can construct a family of tight $M$-framelet filter banks (in particular, for $M=M_{\sqrt{2}}$ due to \cite[Theorem~2]{Han:stmalo:2003} on all dilation matrices compatible with the symmetry group $D_4$) achieving all the desirable properties in items (i)--(v).

By \eqref{tffb}, the equations for
a tight $M_{\sqrt{2}}$-framelet filter bank $\{\lpf;\hpf_1,\ldots,\hpf_L\}$ become
\begin{align}
&|\wh{\lpf}(\gbo)|^2+|\wh{\hpf_1}(\gbo)|^2+\sum_{\ell=2}^L |\wh{\hpf_\ell}(\gbo)|^2=1,\label{tffb:1}\\
&\ol{\wh{\lpf}(\gbo)}\wh{\lpf}(\gbo+(\pi,\pi))+\ol{\wh{\hpf_1}(\gbo)} \wh{\hpf_1}(\gbo+(\pi,\pi))+\sum_{\ell=2}^L \ol{\wh{\hpf_\ell}(\gbo)}\wh{\hpf_\ell}(\gbo+(\pi,\pi))=0.\label{tffb:0}
\end{align}
If in addition the following relation (which is a special case of \eqref{canonical}) holds:
\begin{equation}\label{canonical:1}
\wh{\hpf_1}(\gbo)=e^{-i\gbo\cdot (1,0)} \ol{\wh{\lpf}(\gbo+(\pi,\pi))},\qquad \gbo\in \R^2,
\end{equation}
we call $\{\lpf;\hpf_1,\ldots,\hpf_L\}$ \emph{a canonical quincunx tight framelet filter bank}.
Moreover, if $\{\lpf;\hpf_1,\ldots,\hpf_{2s-1}\}$ is a tight $M_{\sqrt{2}}$-framelet filter bank satisfying \eqref{canonical:1} and
\begin{equation} \label{canonical:2}
\wh{\hpf_{2\ell+1}}(\gbo)=e^{-i\gbo\cdot (1,0)} \ol{\wh{\hpf_{2\ell}}(\gbo+(\pi,\pi))},\qquad
\ell=1,\ldots,s-1,\; \gbo\in \R^2,
\end{equation}
then it is called \emph{an $s$-multiple canonical quincunx tight framelet filter bank}.
In particular, for $s=2$, it is called \emph{a double canonical quincunx tight framelet filter bank}. Note that the particular vector $(1,0)$ in \eqref{canonical:1} and \eqref{canonical:2} can be replaced by any vector from $\Z^2\backslash [M_{\sqrt{2}}\Z^2]$. Also note that \eqref{canonical:1} is equivalent to
\[
\hpf_1(k_1,k_2)=(-1)^{1+k_1+k_2} \ol{\lpf(1-k_1,-k_2)},\qquad k_1,k_2\in \Z
\]
and \eqref{canonical:2} is equivalent to
\[
\hpf_{2\ell+1}(k_1,k_2)=(-1)^{1+k_1+k_2} \ol{\hpf_{2\ell}(1-k_1,-k_2)},\qquad k_1,k_2\in \Z,\, \ell=1,\ldots,s-1.
\]

The goal of this paper is to construct a family of quincunx tight framelet filter banks
achieving all the above desirable properties in items (i)--(v) with the additional canonical property in
\eqref{canonical:1} and \eqref{canonical:2}.
For an $s$-multiple canonical quincunx tight framelet filter bank $\{\lpf;\hpf_1,\ldots,\hpf_{2s-1}\}$, the conditions in \eqref{canonical:1} and \eqref{canonical:2} automatically imply \eqref{tffb:0} with $L=2s-1$. Hence, $\{\lpf;\hpf_1,\ldots,\hpf_{2s-1}\}$ is
an $s$-multiple canonical quincunx tight framelet filter bank if and only if
\begin{equation}\label{sumofsquare}
\sum_{\ell=1}^{s-1} \Big[|\wh{\hpf_{2\ell}}(\gbo)|^2+|\wh{\hpf_{2\ell}}(\gbo+(\pi,\pi))|^2\Big]=1-|\wh\lpf(\gbo)|^2-|\wh\lpf(\gbo+(\pi,\pi))|^2,
\end{equation}
which is simply a problem of sum of squares.
If $\{\lpf;\hpf_1,\ldots,\hpf_L\}$ is a canonical quincunx tight framelet filter bank satisfying \eqref{canonical:1} and if $\lpf$ is not an orthonormal $M_{\sqrt{2}}$-wavelet filter, then it is quite trivial to show that $L\ge 3$. Indeed, if $L=1$, then $\{\lpf; \hpf_1\}$ must be an orthonormal $M_{\sqrt{2}}$-wavelet filter bank and consequently, $\lpf$ must be an orthonormal $M_{\sqrt{2}}$-wavelet filter, which is a contradiction to our assumption on $\lpf$. Hence, $L\ge 2$. Suppose that $L=2$. By \eqref{canonical:1}, the equation in \eqref{tffb:0} with $L=2$ becomes $\ol{\wh{\hpf_2}(\gbo)}\wh{\hpf_2}(\gbo+(\pi,\pi))=0$, from which we must have $\hpf_2=0$. This implies $L=1$, a contradiction. Therefore, we must have $L\ge 3$.
On the other hand, as shown in \cite{Han:acha:2013,HanMo:simaa:2004}, there is a very restrictive necessary and sufficient condition for a symmetric tight $2$-framelet filter bank $\{\lpf; \hpf_1,\ldots,\hpf_L\}$ with $L=2$. Due to similar reasons, it is natural that $L=3$ is the smallest possible number of high-pass filters for a symmetric quincunx tight framelet filter bank $\{\lpf; \hpf_1,\ldots,\hpf_L\}$.
One of the main goals of this paper is to construct a family of double canonical quincunx tight framelet filter banks $\{\lpf;\hpf_1,\hpf_2,\hpf_3\}$ with symmetry, short supports, and increasing orders of vanishing moments achieving all the desirable properties in items (i)--(v).

%
%
%
%


The structure of the paper is as follows. In Section~\ref{sec:double-canonical}, we shall first introduce a family of symmetric minimally supported two-dimensional low-pass filters with arbitrarily high sum rule orders and linear-phase moments.
Then we shall employ such symmetric low-pass filters to construct a family of compactly supported tight framelets with double canonical quincunx tight framelet filter banks $\{\lpf;\hpf_1,\hpf_2,\hpf_3\}$ with symmetry and arbitrarily high orders of vanishing moments. Numerical calculation also indicates that the smoothness exponents of this family of compactly supported tight framelets can be arbitrarily large.
In Section~\ref{sec:double-canonical:2}, we shall generalize the particular construction in Section~\ref{sec:double-canonical} and propose a general construction of double canonical quincunx tight framelet filter banks with symmetry and vanishing moments which are derived from one-dimensional filters with linear-phase moments.
A few illustrative examples of such double canonical quincunx tight framelet filter banks $\{\lpf;\hpf_1,\hpf_2,\hpf_3\}$ are given in Sections~\ref{sec:double-canonical} and~\ref{sec:double-canonical:2}.
In Section~\ref{sec:multiple-canonical}, we shall take another approach by studying multiple  canonical symmetric quincunx tight framelet filter banks using almost separable low-pass filters. In particular, we present a family of compactly supported $6$-multiple canonical real-valued quincunx tight framelets and a family of compactly supported double canonical complex-valued quincunx tight framelets such that both of them have symmetry and arbitrarily high orders of smoothness exponents and vanishing moments.
We complete this paper by providing a detailed proof to Theorems~\ref{thm:a2D2n2n} and~\ref{thm:sm} in Appendix~\ref{sec:proofs}.

\section{Double canonical symmetric quincunx tight framelets with minimal support}
\label{sec:double-canonical}

In this section we shall first discuss how to construct a family of minimally supported symmetric low-pass filters with increasing orders of sum rules and linear-phase moments.
Such a family of low-pass filters is of particular interest in their applications to computer graphics and computer aided geometric design, due to their polynomial preservation property, short support and high smoothness.
Then we shall use such low-pass filters to build double canonical symmetric quincunx tight framelet filter banks with increasing order of vanishing moments.

For an integer $j$ such that $1\le j\le d$, by $\partial_j$ we denote the partial derivative with respect to the $j$th coordinate of $\R^d$. Define $\N_{0}:=\N \cup\{0\}$. For any $\mu=(\mu_1, \ldots, \mu_d)\in \N_{0}^d$, we define $|\mu|:=|\mu_1|+\cdots+|\mu_d|$ and $\partial^\mu$ the differentiation operator $\partial_1^{\mu_1}\cdots \partial_d^{\mu_d}$. For a nonnegative integer $m$ and two smooth functions $f, g$, we shall use the following big $\bo$ notation
\begin{equation} \label{deriv:O}
f(\gbo)=g(\gbo)+\bo(\|\gbo-\gbo_0\|^m), \qquad \gbo\to \gbo_0
\end{equation}
to mean the following relation:
\begin{equation} \label{deriv:1}
\partial^\mu f(\gbo_0)=\partial^\mu g(\gbo_0), \qquad \forall\; \mu\in \N_0^d \;\; \mbox{satisfying}\;\; |\mu|<m.
\end{equation}
For smooth functions, as shown in \cite[Lemma~1]{Han:jcam:2011},
using the big $\bo$ notation in \eqref{deriv:O} to mean \eqref{deriv:1} agrees with the commonly accepted big $\bo$ notation in the literature.

Let $\lpf\in l_0(\Z^d)$ be a filter.
We say that the filter $a$ has \emph{order $m$ sum rules} with respect to a dilation matrix $M$ if $\wh a(0)=1$ and $\wh{\lpf}(\gbo+2\pi \xi)=\bo(\|\gbo\|^m)$ as $\gbo\to 0$ for all $\xi\in \Omega_{M}\backslash \{0\}$. In particular, we define $\sr(a,M):=m$ with $m$ being the largest such integer.
We say that the filter $a$ has \emph{order $n$ vanishing moments} if $\wh{\lpf}(\gbo)=\bo(\|\gbo\|^n)$ as $\gbo\to 0$. In particular, we define $\vmo(a):=n$ with $n$ being the largest such integer. We say that a filter $\lpf\in l_0(\Z^d)$ has \emph{order $n$ linear-phase moments with phase $\mathbf{c}\in \R^d$} if $\wh{\lpf}(\gbo)=e^{-i\mathbf{c}\cdot \gbo}+\bo(\|\gbo\|^n)$ as $\gbo\to 0$. In particular, we define $\lpm(a):=n$ with $n$ being the largest such integer.
The notion of linear-phase moments has been introduced in \cite{Han:adv:2010} for studying symmetric complex orthonormal $2$-wavelets and plays a central role in the construction of complex symmetric orthonormal wavelets, subdivision schemes with polynomial preservation property in computer graphics, and symmetric tight framelets with vanishing moments (see \cite{Dong_Shen_2007,DHSS:jat:2008,Han:adv:2010,Han:jcam:2011,Han:mmnp:2013}).

Suppose that $\{\lpf;\hpf_1,\ldots,\hpf_L\}$ is a tight $M$-framelet filter bank. Through the equations in \eqref{tffb} and assume that $\wh{\lpf}(0)=1$, it is shown (see e.g. \cite{Daubechies2003,Han:mmnp:2013}) that
\begin{equation}\label{tffb:vm}
\min(\vmo(\hpf_1),\ldots,\vmo(\hpf_L))=\min(\sr(\lpf,M),\tfrac{1}{2}\lpm(\lpf*\lpf^\star)),
\end{equation}
where $\wh{\lpf*\lpf^\star}(\gbo):=|\wh{\lpf}(\gbo)|^2$. It is straightforward to see that $\lpm(\lpf*\lpf^\star)\ge \lpm(\lpf)$.
If the low-pass filter $\lpf$ is symmetric about a point $\mathbf{c}\in \R^d$:
$\lpf(2\mathbf{c}-k)=\ol{\lpf(k)}$ for all $k\in \Z^d$,
it has been shown in \cite[Proposition~5.3]{Han:mmnp:2013} that $\lpm(\lpf*\lpf^\star)=\lpm(\lpf)$ and for a tight $M$-framelet filter bank $\{\lpf;\hpf_1,\ldots,\hpf_L\}$ with $\wh{\lpf}(0)=1$,
\begin{equation}\label{lpf:lpm}
\min(\vmo(\hpf_1),\ldots,\vmo(\hpf_L))=\min(\sr(a,M),\tfrac{1}{2}\lpm(\lpf)).
\end{equation}
Therefore, to construct quincunx tight framelet filter banks with symmetry and high vanishing moments, it is necessary to have low-pass filters having high orders of sum rules and linear-phase moments.

The following result presents a family of minimally supported $D_4$-symmetric low-pass filters having increasing orders of sum rules and linear-phase moments.

\begin{theo}\label{thm:a2D2n2n}
For every positive integer $n$, there exists a unique two-dimensional filter $\lpf_{2n,2n}^{2D}$ such that $\lpf_{2n,2n}^{2D}$ is supported inside $[1-n,n]^2\cap\Z^2$, has order $2n$ sum rules with respect to $M_{\sqrt{2}}$ and order $2n$ linear-phase moments with phase $\mathbf{c}:=(1/2,1/2)$. Moreover,
\begin{enumerate}
\item[{\rm(i)}] the filter $\lpf^{2D}_{2n,2n}$ is real-valued and is given by
\begin{equation} \label{a2D2n2n}
\wh{\lpf^{2D}_{2n,2n}}(\omega_1,\omega_2)=\frac{1}{2}
[\wh{u}(\omega_1+\omega_2)+\wh{u}(\omega_1-\omega_2)e^{-i\omega_2}],
\end{equation}
where $\wh{u}(\omega):=(\wh{a^{I}_{2n}}(\omega/2)-\wh{a^I_{2n}}(\omega/2+\pi))e^{-i\omega/2}$ and $a^{I}_{2n}$ is the interpolatory $2$-wavelet filter given by
\begin{equation}\label{aI2n}
\wh{a^I_{2n}}(\omega):=\cos^{2n}(\omega/2) \sum_{j=0}^{n-1} \binom{n-1+j}{j} \sin^{2j}(\omega/2),\quad \omega\in\R;
\end{equation}
\item[{\rm(ii)}] the filter $\lpf^{2D}_{2n,2n}$ is $D_4$-symmetric about the point $\mathbf{c}=(1/2,1/2)$;
\item[{\rm(iii)}] $\phi^{N_{\sqrt{2}}}=\phi^{M_{\sqrt{2}}}(\cdot+(1,1))$ and
$\phi^{M_{\sqrt{2}}}$ is real-valued with the following symmetry property:
\begin{equation}\label{phi:sym}
\phi^{M_{\sqrt{2}}}(E(\cdot-\mathbf{c}_\phi)+\mathbf{c}_\phi)=\phi^{M_{\sqrt{2}}},\qquad \forall\, E\in D_4
\end{equation}
%
with $\mathbf{c}_\phi:=(M_{\sqrt{2}}-I_2)^{-1}\mathbf{c}=(3/2,1/2)$,
where $\phi^{M_{\sqrt{2}}}$ and $\phi^{N_{\sqrt{2}}}$ are the refinable functions associated with the filter $\lpf$ and the dilation matrices $M_{\sqrt{2}},N_{\sqrt{2}}$ in \eqref{quincunxmatrix}, respectively, and are defined in the frequency domain through
\begin{equation} \label{phi:psi:a2Da2n2n}
\wh{\phi^{M_{\sqrt{2}}}}(\gbo):=\prod_{j=1}^\infty \wh{\lpf^{2D}_{2n,2n}}((M_{\sqrt{2}}^\tp)^{-j}\gbo)\quad\mbox{and}\quad \wh{\phi^{N_{\sqrt{2}}}}(\gbo):=\prod_{j=1}^\infty \wh{\lpf^{2D}_{2n,2n}}((N_{\sqrt{2}}^\tp)^{-j}\gbo),\qquad \gbo\in \R^2.
\end{equation}
\end{enumerate}
\end{theo}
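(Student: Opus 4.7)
My plan is to prove Theorem~\ref{thm:a2D2n2n} in three stages: (i) verify that the explicit formula \eqref{a2D2n2n} yields a filter with the claimed support, sum rules, linear-phase moments, and $D_4$-symmetry; (ii) establish uniqueness; and (iii) deduce the refinable-function identities in part~(iii) from the symmetry of $\lpf^{2D}_{2n,2n}$.

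For stage~(i), I would first simplify $\wh u$. Since $a^I_{2n}$ is interpolatory (so $\wh{a^I_{2n}}(\omega)+\wh{a^I_{2n}}(\omega+\pi)=1$), the even-indexed coefficients cancel in $\wh{a^I_{2n}}(\omega/2)-\wh{a^I_{2n}}(\omega/2+\pi)$, giving $\wh u(\omega)=2\sum_j a^I_{2n}(2j-1)e^{-ij\omega}$, a real-valued trigonometric polynomial supported on $[1-n,n]\cap\Z$. Hence the two summands in \eqref{a2D2n2n} place coefficients on the diagonal $\{(j,j):1-n\le j\le n\}$ and the anti-diagonal $\{(j,1-j):1-n\le j\le n\}$ respectively, both contained in $[1-n,n]^2\cap\Z^2$, and $\lpf^{2D}_{2n,2n}$ is real-valued. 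The order-$2n$ sum rules of $a^I_{2n}$ at~$2$ combined with the interpolatory identity yield $\wh u(\omega)=e^{-i\omega/2}+\bo(|\omega|^{2n})$ as $\omega\to 0$; substituting this expansion into \eqref{a2D2n2n} evaluated at $\omega$ and at $\omega+(\pi,\pi)$ shows that the two leading exponentials reinforce in the first case (producing order-$2n$ linear-phase moments with phase $(1/2,1/2)$) and cancel in the second (producing order-$2n$ sum rules with respect to $M_{\sqrt2}$). The symmetry of $a^I_{2n}$ about~$0$ propagates to $u(j)=u(1-j)$, so $\wh v(s):=e^{is/2}\wh u(s)$ is an even cosine polynomial, and the identity
\[
e^{i(\omega_1+\omega_2)/2}\wh{\lpf^{2D}_{2n,2n}}(\omega_1,\omega_2)=\tfrac12\bigl[\wh v(\omega_1+\omega_2)+\wh v(\omega_1-\omega_2)\bigr]
\]
makes invariance under the generators $\omega\mapsto-\omega$, $\omega\mapsto(\omega_1,-\omega_2)$, and $\omega\mapsto(\omega_2,\omega_1)$ of $D_4$ (acting on the frequency variables) transparent; this is the Fourier-side form of $D_4$-symmetry of $\lpf^{2D}_{2n,2n}$ about $(1/2,1/2)$.

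The main obstacle is stage~(ii). Given two candidate filters $\lpf_1,\lpf_2$, their difference $c:=\lpf_1-\lpf_2$ is supported in $[1-n,n]^2\cap\Z^2$ with $\wh c(\omega)=\bo(\|\omega\|^{2n})$ at both $\omega=0$ and $\omega=(\pi,\pi)$. Setting $Q(z_1,z_2):=z_1^nz_2^n\sum_k c(k)z_1^{-k_1}z_2^{-k_2}$, uniqueness reduces to the algebraic statement that every polynomial $Q$ of bi-degree at most $(2n-1,2n-1)$ vanishing to total order $2n$ at both $(1,1)$ and $(-1,-1)$ is identically zero. The count is $4n^2$ parameters against $4n^2+2n$ conditions, so the system is overdetermined by $2n$. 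I would decompose $Q=Q_e+Q_o$ according to parity under $(z_1,z_2)\mapsto(-z_1,-z_2)$; using $\partial^\mu Q_e(-z)=(-1)^{|\mu|}\partial^\mu Q_e(z)$ and the analogous relation $\partial^\mu Q_o(-z)=(-1)^{|\mu|+1}\partial^\mu Q_o(z)$, vanishing at both points is equivalent to $Q_e$ and $Q_o$ each vanishing to order $2n$ at the single point $(1,1)$. Each piece lies in a $2n^2$-dimensional monomial subspace; using the presentations $Q_e(z_1,z_2)=P_1(z_1^2,z_2^2)+z_1z_2P_2(z_1^2,z_2^2)$ and $Q_o(z_1,z_2)=z_1P_3(z_1^2,z_2^2)+z_2P_4(z_1^2,z_2^2)$ with $P_j$ of bi-degree $(n-1,n-1)$, the Hermite conditions at $(1,1)$ decouple into tensor-product interpolation problems for the $P_j$ that force each $P_j=0$; the base case $n=1$ is verified by direct linear algebra and extended to general $n$ by induction.

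For stage~(iii), once $D_4$-symmetry of $\lpf^{2D}_{2n,2n}$ about $(1/2,1/2)$ is in hand, the $D_4$-symmetry of $\phi^{M_{\sqrt2}}$ about $\mathbf c_\phi=(M_{\sqrt2}-I_2)^{-1}(1/2,1/2)=(3/2,1/2)$, together with realness, follows immediately from the general principle recalled in the introduction (cf.~\cite{Han:acha:2004,Han:laa:2002}) applied to the pair $(M_{\sqrt2},D_4)$. For the relation $\phi^{N_{\sqrt2}}=\phi^{M_{\sqrt2}}(\cdot+(1,1))$, I would verify directly that the shifted distribution $\wt\phi(x):=\phi^{M_{\sqrt2}}(x+(1,1))$ satisfies the $N_{\sqrt2}$-refinement equation with mask $\lpf^{2D}_{2n,2n}$: using $M_{\sqrt2}(1,1)^\top=(2,0)^\top$ to rewrite the $M_{\sqrt2}$-refinement of $\phi^{M_{\sqrt2}}$ at $x+(1,1)$ as a sum involving $\phi^{M_{\sqrt2}}(M_{\sqrt2}x+(2,0)-k)$, and then applying the full $D_4$-symmetry of $\phi^{M_{\sqrt2}}$ together with the $D_4$-symmetry of $\lpf^{2D}_{2n,2n}$ to re-index the summation, yields the required $N_{\sqrt2}$-refinement identity. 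Since refinable distributions with $\wh{\wt\phi}(0)=1$ are uniquely determined by their mask and dilation, $\wt\phi=\phi^{N_{\sqrt2}}$ follows.
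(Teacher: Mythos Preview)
Your stages (i) and (iii) match the paper's argument closely: the paper likewise deduces the support, sum rules, linear-phase moments, and $D_4$-symmetry from the structure of $\wh u$ in terms of $a^I_{2n}$ (your function $\wh v$ is the paper's $\wh{\mathring a}$), and for (iii) the paper cites \cite[Proposition~2.1]{Han:acha:2004} together with $N_{\sqrt 2}=EM_{\sqrt 2}$ for the swap matrix $E$, which is exactly the frequency-domain version of your direct verification of the $N_{\sqrt 2}$-refinement equation for the shifted function.

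For stage (ii), your parity decomposition $Q=Q_e+Q_o$ under $(z_1,z_2)\mapsto(-z_1,-z_2)$ is precisely the paper's coset decomposition: $Q_e$ carries the coefficients $c(k)$ with $k_1+k_2$ even (the coset $\Lambda_0=M_{\sqrt 2}\Z^2\cap[1-n,n]^2$) and $Q_o$ the coefficients with $k_1+k_2$ odd ($\Lambda_1$). Your reduction ``vanishing at both $(\pm1,\pm1)$ $\Leftrightarrow$ each piece vanishes at $(1,1)$'' is correct and corresponds to the paper's observation that the combined sum-rule and linear-phase-moment conditions split into separate moment conditions on each coset.

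The gap is in what comes next. Your claim that after writing $Q_e=P_1(z_1^2,z_2^2)+z_1z_2P_2(z_1^2,z_2^2)$ the Hermite conditions at $(1,1)$ ``decouple into tensor-product interpolation problems for the $P_j$'' is not justified and, taken at face value, is false: the derivatives $\partial^\mu Q_e(1,1)$ genuinely mix the Taylor data of $P_1$ and $P_2$ at $(1,1)$, since $(1,1)$ is not a fixed point of any further parity that would separate $P_1$ from $P_2$. No mechanism for the decoupling or for the announced induction is given, so the uniqueness argument is incomplete as written. The paper handles this step quite differently: having reduced to showing that a filter supported on $\Lambda_\epsilon$ (a set of $2n^2$ points) and satisfying the $2n^2$ moment conditions $\sum_{k\in\Lambda_\epsilon}a(k)k^\mu=\tfrac12\mathbf c^\mu$ for $\mu$ in a carefully chosen index set
\[
\Gamma_n=\{\mu\in\N_0^2:|\mu|<2n,\ \mu_2<2n-1\}\setminus\{(0,2j-1):1\le j\le n-1\}
\]
must be unique, the paper counts the number of points of $\Lambda_\epsilon$ on each line $x_1\pm x_2=\text{const}$ and invokes \cite[Lemma~3.1]{HanJia:MCOM:2000} to conclude that the generalized Vandermonde matrix $(k^\mu)_{k\in\Lambda_\epsilon,\,\mu\in\Gamma_n}$ is nonsingular. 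If you want a self-contained argument, you would need to replace your ``decoupling'' assertion with an explicit proof of this nonsingularity (or an equivalent statement about bivariate Hermite interpolation on $\Lambda_\epsilon$); your further $P_j$-decomposition does not obviously help with that.
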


The proof of Theorem~\ref{thm:a2D2n2n} is given in Appendix~\ref{sec:proofs}.
We now derive double canonical quincunx symmetric tight framelet filter banks from the low-pass filters $\lpf^{2D}_{2n,2n}$ constructed in Theorem~\ref{thm:a2D2n2n}.

\begin{theo}\label{thm:b2D2n2n}
Let $\lpf=\lpf^{2D}_{2n,2n}$ with $n\in \N$ be the filter constructed in \eqref{a2D2n2n} of Theorem~\ref{thm:a2D2n2n}. Define a high-pass filter $\hpf_2$ by
\begin{equation}\label{def:b2}
\wh{\hpf_2}(\omega_1,\omega_2):=\frac{1}{2}[\wh{v}(\go_1+\go_2)+\wh{v}(\go_1-\go_2)e^{-i\go_2}]
\quad \mbox{with}\quad \wh{v}(\go):=2\wh{\lpf^{D}_n} (\omega/2)\wh{\lpf^{D}_n} (\omega/2+\pi),
\end{equation}
and define high-pass filters $\hpf_1, \hpf_3$ as in \eqref{canonical:1} and \eqref{canonical:2}, where
$\lpf_n^{D}\in l_0(\Z)$ is a real-valued Daubechies orthonormal $2$-wavelet filter satisfying $|\wh{\lpf^{D}_{n}}(\omega)|^2=\wh{\lpf^I_{2n}}(\omega)$.
Then $\{\lpf; \hpf_1,\hpf_2,\hpf_3\}$ is a double canonical quincunx tight framelet filter bank satisfying
\begin{enumerate}
\item[{\rm(i)}] all high-pass filters $\hpf_1,\hpf_2,\hpf_3$ have real coefficients and the following symmetry:
\begin{equation}\label{sym:b1}
\hpf_1(E(k-\mathring{\mathbf{c}})+\mathring{\mathbf{c}})=\det(E) \hpf_1(k),\qquad \forall\, k\in \Z^2, E\in D_4 \quad \mbox{with}\quad \mathring{\mathbf{c}}:=(1/2,-1/2)
\end{equation}
and
\begin{equation}\label{sym:b2b3}
\hpf_2(k_1,1-k_2)=\hpf_2(k_1,k_2)\quad \mbox{and}\quad
\hpf_3(k_1,-1-k_2)=-\hpf_3(k_1,k_2),\qquad \forall\, k_1, k_2\in \Z;
\end{equation}

\item[{\rm(ii)}]  all high-pass filters $\hpf_1,\hpf_2,\hpf_3$  have at least order $n$ vanishing moments;
\item[{\rm(iii)}]  the supports of $\hpf_1, \hpf_2, \hpf_3$ are no larger than that of the low-pass filter $\lpf$.
\end{enumerate}
 Moreover, $\{\phi^{M_{\sqrt{2}}}; \psi_1,\psi_2,\psi_3\}$ is a tight $M_{\sqrt{2}}$-framelet in $L_2(\R^2)$ such that $\phi^{M_{\sqrt{2}}}$ has the symmetry in \eqref{phi:sym},
\begin{equation}\label{psi1:sym}
\psi_1(E(\cdot-\mathbf{c}_1)+\mathbf{c}_1)=\det(E) \psi_1,\qquad \forall\, E\in D_4\quad \mbox{with}\quad \mathbf{c}_1:=(1,1)
\end{equation}
and
\begin{equation}\label{psi2psi3:sym}
\psi_2(x_2+1,x_1-1)=\psi_2(x_1,x_2),\qquad
\psi_3(x_2,x_1)=-\psi_3(x_1,x_2),
\end{equation}
where $\phi^{M_{\sqrt{2}}}$ is defined in \eqref{phi:psi:a2Da2n2n} and $\wh{\psi_\ell}(\gbo):=\wh{\hpf_\ell}((M_{\sqrt{2}}^\tp)^{-1}\gbo) \wh{\phi^{M_{\sqrt{2}}}}((M_{\sqrt{2}}^\tp)^{-1}\gbo)$ for $\ell=1,2,3$.
\end{theo}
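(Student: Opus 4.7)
The plan is to first reduce the tight-filter-bank assertion to the sum-of-squares identity \eqref{sumofsquare}, and then separately verify symmetries, vanishing moments, and supports at the filter level. Because $\hpf_1$ and $\hpf_3$ are defined via the canonical relations \eqref{canonical:1}--\eqref{canonical:2}, the orthogonality equation \eqref{tffb:0} is automatic, so the only nontrivial condition for a tight $M_{\sqrt 2}$-framelet filter bank is
\[
|\wh{\hpf_2}(\gbo)|^2+|\wh{\hpf_2}(\gbo+(\pi,\pi))|^2
=1-|\wh{\lpf}(\gbo)|^2-|\wh{\lpf}(\gbo+(\pi,\pi))|^2.
\]
Both $\wh{\lpf}$ and $\wh{\hpf_2}$ share the common McClellan-type form $F(\go_1,\go_2)=\tfrac12[\wh{f}(\go_1+\go_2)+\wh{f}(\go_1-\go_2)e^{-i\go_2}]$, and a direct expansion gives the key identity
\[
|F(\gbo)|^2+|F(\gbo+(\pi,\pi))|^2=\tfrac12\bigl(|\wh{f}(\go_1+\go_2)|^2+|\wh{f}(\go_1-\go_2)|^2\bigr),
\]
so the target identity collapses to the one-variable statement $|\wh u(\go)|^2+|\wh v(\go)|^2=1$. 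Using the interpolatory identity $\wh{a^I_{2n}}(\go)+\wh{a^I_{2n}}(\go+\pi)=1$ together with $|\wh{\lpf^D_n}(\go)|^2=\wh{a^I_{2n}}(\go)$, this becomes algebraically $(\wh{a^I_{2n}}(\go/2)-\wh{a^I_{2n}}(\go/2+\pi))^2+4\wh{a^I_{2n}}(\go/2)\wh{a^I_{2n}}(\go/2+\pi)=(\wh{a^I_{2n}}(\go/2)+\wh{a^I_{2n}}(\go/2+\pi))^2=1$.

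For item~(i), the symmetry of $\hpf_2$ in \eqref{sym:b2b3} follows by a direct frequency-domain check that $\wh{\hpf_2}(\go_1,-\go_2)e^{-i\go_2}=\wh{\hpf_2}(\go_1,\go_2)$, and the antisymmetry of $\hpf_3$ then follows from its canonical definition \eqref{canonical:2} combined with the symmetry just established for $\hpf_2$. For $\hpf_1$, I combine its canonical formula $\hpf_1(k_1,k_2)=(-1)^{1+k_1+k_2}\lpf(1-k_1,-k_2)$ with the $D_4$-symmetry of $\lpf$ about $(1/2,1/2)$ guaranteed by Theorem~\ref{thm:a2D2n2n}(ii); the sign $\det(E)$ in \eqref{sym:b1} arises from the parity factor $(-1)^{1+k_1+k_2}$ under the reflections of $D_4$, once the center is shifted to $\mathring{\mathbf{c}}=(1/2,-1/2)$.

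For item~(ii), Theorem~\ref{thm:a2D2n2n} gives that $\lpf$ has order $2n$ sum rules with respect to $M_{\sqrt 2}$, so $\wh{\hpf_1}(\gbo)=e^{-i\go_1}\wh{\lpf}(\gbo+(\pi,\pi))=\bo(\|\gbo\|^{2n})$. Since $\wh{\lpf^D_n}$ has $n$-th order vanishing moments at $\pi$, the factor $\wh{\lpf^D_n}(\go/2+\pi)$ forces $\wh v(\go)=\bo(|\go|^n)$, which lifts through the common form to $\wh{\hpf_2}(\gbo)=\bo(\|\gbo\|^n)$; the same order for $\hpf_3$ is immediate from $\wh{\hpf_3}(\gbo)=e^{-i\go_1}\wh{\hpf_2}(\gbo+(\pi,\pi))$ and the already-established structure of $\wh{\hpf_2}$. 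For item~(iii), I read off the one-variable supports of $u$ and $v$ from those of $a^I_{2n}$ and $\lpf^D_n$, propagate them through the common form, and confirm via the canonical shifts that the supports of $\hpf_1,\hpf_2,\hpf_3$ fit inside $[1-n,n]^2$.

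Finally, for the statement in $L_2(\R^2)$, the one-to-one correspondence between tight framelet filter banks and tight framelets recalled in Section~\ref{sec:intro} upgrades the tight filter bank to a tight $M_{\sqrt 2}$-framelet $\{\phi^{M_{\sqrt 2}};\psi_1,\psi_2,\psi_3\}$ for $L_2(\R^2)$, since $\wh{\lpf}(0)=1$. The symmetry \eqref{phi:sym} of $\phi^{M_{\sqrt 2}}$ is precisely Theorem~\ref{thm:a2D2n2n}(iii), and the symmetries \eqref{psi1:sym}--\eqref{psi2psi3:sym} of the wavelet generators follow from the defining relation $\wh{\psi_\ell}(\gbo)=\wh{\hpf_\ell}(M_{\sqrt 2}^{-\tp}\gbo)\wh{\phi^{M_{\sqrt 2}}}(M_{\sqrt 2}^{-\tp}\gbo)$, the filter symmetries just established, and the compatibility of $M_{\sqrt 2}$ with $D_4$. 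I expect the main obstacles to be the careful sign tracking in \eqref{sym:b1} across all eight elements of $D_4$ and the bookkeeping for the support inclusion in item~(iii); everything else reduces to the McClellan-style one-variable identity above and standard refinable-function arguments.
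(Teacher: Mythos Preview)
Your proposal is correct and follows essentially the same route as the paper: both reduce the tight-filter-bank claim to the one-variable identity $|\wh u|^2+|\wh v|^2=1$ via the McClellan-type squared-sum formula, then verify it from the interpolatory relation $\wh{a^I_{2n}}(\go/2)+\wh{a^I_{2n}}(\go/2+\pi)=1$ and $|\wh{\lpf^D_n}|^2=\wh{a^I_{2n}}$; the symmetry checks for $\hpf_1,\hpf_2,\hpf_3$ and the passage to $\psi_1,\psi_2,\psi_3$ via \cite[Proposition~2.1]{Han:acha:2004} are likewise the same. The one genuine difference is in item~(ii): the paper dispatches it in one line by invoking the general identity \eqref{lpf:lpm}, $\min(\vmo(\hpf_1),\vmo(\hpf_2),\vmo(\hpf_3))=\min(\sr(\lpf,M_{\sqrt2}),\tfrac12\lpm(\lpf))=n$, whereas you argue directly filter by filter from the $n$-th order zero of $\wh{\lpf^D_n}$ at $\pi$; your approach is more self-contained but requires the extra observation that $\wh{\hpf_2}$ also vanishes to order $n$ at $(\pi,\pi)$ to handle $\hpf_3$.
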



\begin{proof}
Let $\wh {u}(\go) =(\wh{a^{I}_{2n}}(\omega/2)-\wh{a^I_{2n}}(\omega/2+\pi))e^{-i\omega/2}=
(2\wh{a_{2n}^{I}}(\go/2)-1)e^{-i\go/2}$, where we use $\wh{\lpf^I_{2n}}(\go/2)+\wh{\lpf^I_{2n}}(\go/2+\pi)=1$.
By the definition of $\lpf=\lpf_{2n,2n}^{2D}$ in \eqref{a2D2n2n}, we have
\begin{equation}\label{a:square}
|\wh{\lpf}(\go_1,\go_2)|^2+|\wh{\lpf}(\go_1+\pi,
\go_2+\pi)|^2=\frac{1}{2}\Big[|\wh{u}(\go_1+\go_2)|^2+|\wh{u}(\go_1-\go_2)|^2\Big].
\end{equation}
Similarly, by the definition of $\hpf_2$, we have
\begin{equation}\label{b:square}
|\wh{\hpf_2}(\go_1,\go_2)|^2+|\wh{\hpf_2}(\go_1+\pi,
\go_2+\pi)|^2=\frac{1}{2}\Big[|\wh{v}(\go_1+\go_2)|^2+|\wh{v}(\go_1-\go_2)|^2\Big].
\end{equation}
Since $|\wh{\lpf^D_n}(\go)|^2=\wh{\lpf^I_{2n}}(\go)$, we have
\[
\begin{aligned}
1-|\wh{u}(\go)|^2&=1-|2\wh{\lpf^I_{2n}}(\go/2)-1|^2
= 1-4(\wh{\lpf^I_{2n}}(\go/2))^2+4\wh{\lpf^I_{2n}}(\go/2)-1
\\&= 4\wh{\lpf^I_{2n}}(\go/2)(1-\wh{\lpf^I_{2n}}(\go/2))
=4\wh{\lpf^I_{2n}}(\go/2)\wh{\lpf^I_{2n}}(\go/2+\pi)
=|\wh{v}(\go)|^2.
\end{aligned}
\]
Consequently, \eqref{sumofsquare} holds with $s=2$. Therefore, $\{\lpf;\hpf_1,\hpf_2,\hpf_3\}$ is a  double canonical quincunx tight framelet filter bank.

Since $\lpf$ is $D_4$-symmetric about the point $\mathbf{c}=(1/2,1/2)$, \eqref{def:mask-sym-group-time} is equivalent to
\begin{equation}\label{D4}
\wh{\lpf}(E^\tp \gbo)=e^{i(I_2-E)\mathbf{c}\cdot \gbo}\wh{\lpf}(\gbo), \quad \gbo\in \R^2, E\in D_4.
\end{equation}
%
For $E\in D_4$, we have $(I-E)(1,1)^\tp\in 2\Z^2$ and by the definition of $b_1$,
\begin{align*}
\wh{\hpf_1}(E^\tp \gbo)&=e^{-i\gbo\cdot E(1,0)} \ol{\wh{\lpf}(E^\tp\gbo+(\pi,\pi))}=
e^{-i\gbo\cdot E(1,0)}\ol{\wh{\lpf}(E^\tp(\gbo+(\pi,\pi)))}\\
&=
e^{-i\gbo\cdot E(1,0)} e^{-i(I-E)\mathbf{c}\cdot(\gbo+(\pi,\pi))}\ol{\wh{\lpf}(\gbo+(\pi,\pi))}=
\det(E) e^{i(I-E)\mathring{\mathbf{c}}\cdot \gbo}\wh{\hpf_1}(\gbo).
\end{align*}
This proves \eqref{sym:b1}.
By the definitions of $\hpf_2$ and $\wh{\hpf_3}(\gbo) = e^{-i\go_1}\ol{\wh{\hpf_2}(\gbo+(\pi,\pi))}$, we have
\[
\wh{\hpf_2}(\go_1,-\go_2) = \wh{\hpf_2}(\go_1,\go_2)e^{i\go_2}\quad \mbox{and}\quad
\hpf_3(\go_1,-\go_2)=-\wh{\hpf_3}(\go_1,\go_2)e^{-i\go_2},
\]
which are equivalent to \eqref{sym:b2b3}. Therefore, item (i) holds.

Item (ii) follows directly from $\min(\vmo(\hpf_1),\vmo(\hpf_2),\vmo(\hpf_3))=\min(\sr(\lpf,M_{\sqrt{2}}),\frac12\lpm(\lpf))=n$ due to $\sr(\lpf,M_{\sqrt{2}})=\lpm(\lpf)=2n$. Item (iii) can be directly checked.

By \cite[Proposition~2.1]{Han:acha:2004}, the identity in \eqref{psi1:sym} follows directly from \eqref{phi:sym} and \eqref{sym:b1}, while the identities in \eqref{psi2psi3:sym} follows directly from \eqref{phi:sym} and \eqref{sym:b2b3}.
\end{proof}

For a function $\phi\in L_2(\R^d)$, its Sobolev \emph{smoothness exponent} $\sm(\phi)$ is defined to be
\begin{equation}\label{sm:phi}
\sm(\phi):=\sup\Big\{ \tau \in \R \; : \; \int_{\R^d} |\wh{\phi}(\xi)|^2(1+\|\xi\|^2)^\tau d\xi<\infty\Big\}.
\end{equation}
%
If $\phi$ is an $M$-refinable function associated with a filter $a\in l_0(\Z^d)$, then the smoothness exponent $\sm(\phi)$ is closely linked to a quantity $\sm(a,M)$ introduced in \cite{Han:jat:2003}.
For $u\in l_0(\Z^d)$ and $\mu=(\mu_1,\ldots,\mu_d) \in \N_0^d$, we define
\begin{equation} \label{d:diff}
\nabla_k u:=u-u(\cdot-k), \qquad k \in \Z^d \quad \mbox{and}\quad
\nabla^\mu:=\nabla_{e_1}^{\mu_1}\cdots\nabla_{e_d}^{\mu_d},
\end{equation}
where $e_j=(0,\ldots,0,1,0,\ldots,0)\in \R^d$ has its only nonzero entry $1$ at the $j$th coordinate. By $\delta$ we denote \emph{the Dirac sequence} such that $\delta(0)=1$ and $\delta(k)=0$ for all $k\in \Z^d\backslash \{0\}$.
For $a\in l_0(\Z^d)$ and a $d\times d$ dilation matrix $M$, let $m:=\sr(a,M)$. For $1\le p\le \infty$, the \emph{smoothness exponent} $\sm_p(a,M)$ (see \cite{Han:jat:2003}) is defined to be
\begin{equation} \label{sm:a}
\sm_p(a,M):=\tfrac{d}{p}-d\log_{|\det(M)|} \rho_{m}(a,M)_p\qquad \mbox{and}\qquad
\sm(a,M):=\sm_2(a,M),
\end{equation}
where
\begin{equation} \label{rhoa}
\rho_{m}(a,M)_p:=\sup\left\{\lim_{n\to \infty} \|\nabla^\mu \mathcal{S}_{a,M}^n \delta\|_{l_p(\Z^d)}^{1/n} \; : \; \mu\in \N_0^d, |\mu|=m\right\}
\end{equation}
and the subdivision operator $\mathcal{S}_{a,M}$ is defined to be
\begin{equation}\label{sd:op}
[\mathcal{S}_{a,M} v](n):=|\det(M)|\sum_{k\in \Z^d} v(k) a(n-Mk),\qquad n\in \Z^d.
\end{equation}
The quantity $\sm(a,M)$ can be computed by
\cite[Algorithm~2.1]{Han:simaa:2003}.
We say that $M$ is \emph{isotropic} if $M$ is similar to a diagonal matrix $\mbox{diag}(\lambda_1,\ldots,\lambda_d)$ with $|\lambda_1|=\cdots=|\lambda_d|$.
Note that the two quincunx matrices $M_{\sqrt{2}}$ and $N_{\sqrt{2}}$ in \eqref{quincunxmatrix} are isotropic.
For an isotropic dilation matrix $M$, we have $\sm(\phi)\ge \sm(a,M)$ and if in addition the integer shifts of $\phi$ are stable (i.e., $\sum_{k\in \Z^d} |\wh{\phi}(\gbo+2\pi k)|^2\ne 0$ for all $\gbo\in \R^d$), then $\sm(\phi)=\sm(a,M)$ (e.g., see \cite{Han:simaa:2003,Han:jat:2003} and many references therein).

The smoothness exponents $\sm(\lpf^{2D}_{2n,2n},M_{\sqrt{2}})$ and $\sm(\lpf^{I}_{2n},2)$ for $n=1,\ldots,10$ in Table~\ref{tab:smoothness} are calculated by \cite[Algorithm~2.1]{Han:simaa:2003} using $D_4$ symmetry group.
Note that $\sm(\lpf,N_{\sqrt{2}})=\sm(a,M_{\sqrt{2}})$ since $\lpf$ is $D_4$-symmetric.

\begin{table}[ht]
\begin{center}
\begin{tabular}{|c|c|c|c|c|c|c|c|c|c|c|c|c|} \hline
$n$        &$1$    &$2$       &$3$   &$4$  &$5$ &$6$ & $7$ & $8$ & $9$ & $10$\\ \hline
{ $\sm(\lpf^{2D}_{2n,2n},M_{\sqrt{2}})$}  &$2.0$  &$3.0365$  &$3.5457$ &$4.0269$ &$4.4970$ &$4.9658$  & $5.4350$ & $5.9038$ & $6.3714$ & $6.8374$
\\ \hline
{ $\sm(\lpf_{2n}^I,2)$} &$1.5$  &$2.4408$  &$3.1751$ &$3.7931$ &$4.3441$ &$4.8620$ &$5.3628$ & $5.8529$ & $6.3352$ &$6.8115$
\\ \hline
\end{tabular}
\caption{
The smoothness exponents of the quincunx low-pass filters $\lpf_{2n,2n}^{2D}$ in \eqref{a2D2n2n} and of interpolatory $2$-wavelet filters $\lpf_{2n}^I$ in \eqref{aI2n} for $n=1,\ldots,10$, computed by \cite[Algorithm~2.1]{Han:simaa:2003}. Note that $\sm(\lpf^{2D}_{2n,2n},N_{\sqrt{2}})=\sm(\lpf^{2D}_{2n,2n},M_{\sqrt{2}})$.
}
\label{tab:smoothness}
\end{center}
\end{table}

We complete this section by presenting two examples to illustrate the results in Theorems~\ref{thm:a2D2n2n} and ~\ref{thm:b2D2n2n}.

\begin{example}\label{exam:p_20}\rm{
Take $n=1$ in Theorems~\ref{thm:a2D2n2n} and ~\ref{thm:b2D2n2n}.
Then $\lpf =\lpf_{2,2}^{2D}$ in \eqref{a2D2n2n} with $n=1$ is given by
\[
\wh{\lpf}(\omega_1,\omega_2)=\frac 1{4}(1+e^{-i\go_1})(1+e^{-i\go_2})
\]
and
\[
\wh{\hpf_1}(\gbo):=e^{-i\go_1}\ol{\wh{\lpf}(\gbo+(\pi,\pi))} = \frac14(1-e^{-i\go_1})(e^{i\go_2}-1).
\]
By
$\wh{\lpf^D_1}(\go):=\frac 12(1+e^{-i\go})$,
we have $\wh{v}(\go):=2\wh{\lpf^D_1}(\go/2)\wh{\lpf^D_1}(\go/2+\pi)=\frac12(1-e^{-i\go})$.
Then
\[
\wh{\hpf_2}(\go_1,\go_2):= \frac{1}{2}(\wh{v}(\go_1+\go_2)+\wh{v}(\go_1-\go_2)e^{-i\go_2})=\frac14(1-e^{-i\go_1})(1+e^{-i\go_2})
\]
and
\[
\wh{\hpf_3}(\gbo):=e^{-i\go_1}\ol{\wh{\hpf_2}(\gbo+(\pi,\pi))} =\frac14(1+e^{-i\go_1})(1-e^{i\go_2}).
\]
The double canonical quincunx tight framelet filter bank $\{\lpf; \hpf_1,\hpf_2,\hpf_3\}$
is given by
\[
a=\frac{1}{4}\left[ \begin{matrix} 1 &1\\ \boxed{1} & 1 \end{matrix}
 \right]_{[0,1]^2},\quad
\hpf_1=\frac{1}{4}\left[ \begin{matrix} \boxed{-1} & 1\\ 1 & -1\end{matrix}
 \right]_{[0,1]\times[-1,0]},\quad
\hpf_2=\frac{1}{4}\left[ \begin{matrix} 1& -1\\ \boxed{1} & -1 \end{matrix} \right]_{[0,1]^2},\quad
\hpf_3=\frac{1}{4}\left[ \begin{matrix} \boxed{1} & 1\\ -1 & -1\end{matrix} \right]_{[0,1]\times[-1,0]}.
\]
Note that $\sr(\lpf,M_{\sqrt{2}})=2$, $\lpm(\lpf)=2$, and $\sm(\lpf,M_{\sqrt{2}})=\sm(\lpf,N_{\sqrt{2}})=2$.
The filter $\lpf$ is $D_4$-symmetric about $(\frac{1}{2},\frac{1}{2})$, while $\hpf_1$ has the symmetry in \eqref{sym:b1} and $\hpf_2, \hpf_3$ have the symmetry in \eqref{sym:b2b3} with $\vmo(\hpf_1)=2$ and $\vmo(\hpf_2)=\vmo(\hpf_3)=1$.
Let $\phi,\psi_1,\psi_2,\psi_3$ be defined in \eqref{phi:psi} with $M=M_{\sqrt{2}}$, $L=3$ and $a=a^{2D}_{2,2}$. Then $\{\phi;\psi_1,\psi_2,\psi_3\}$ is a tight $M_{\sqrt{2}}$-framelet in $L_2(\R^2)$ such that $\phi,\psi_1,\psi_2,\psi_3$ have symmetry property as in  \eqref{phi:sym}, \eqref{psi1:sym}, and \eqref{psi2psi3:sym}.
}\end{example}

\begin{example}\label{exam:p_21}\rm{
Take $n=2$ in Theorems~\ref{thm:a2D2n2n} and ~\ref{thm:b2D2n2n}.
Then $\lpf =\lpf_{4,4}^{2D}$ in \eqref{a2D2n2n} with $n=2$ is given by
\[
\wh{\lpf}(\omega_1,\omega_2)=\frac 1{32}\Big(9+9e^{-i\go_1}+9e^{-i\go_2}+9e^{-i(\go_1+\go_2)}-e^{i(\go_1+\go_2)}-e^{i(\go_2-2\go_1)}
-e^{i(\go_1-2\go_2)}-e^{-i2(\go_1+\go_2)}\Big).
\]
By $\wh{\hpf_1}(\gbo):=e^{-i\go_1}\ol{\wh{\lpf}(\gbo+(\pi,\pi))}$, the filters $\lpf$ and $\hpf_1$ are given by
\[
a=\frac{1}{32}\left[ \begin{matrix} -1&0&0&-1\\ 0&9&9&0
\\ 0&\boxed{9}&9&0\\ -1&0&0&-1\end{matrix}
 \right]_{[-1,2]^2},\qquad
\hpf_1=\frac{1}{32}\left[ \begin{matrix} 1&0&0&-1\\ 0&\boxed{-9}&9&0
\\ 0&9&-9&0\\ -1&0&0&1\end{matrix}
 \right]_{[-1,2]\times[-2,1]}.
\]
Let $\lpf^D_2$ be the Daubechies orthonormal $2$-wavelet filter given by
\begin{equation}
\label{def_Daub2}
\wh{\lpf^D_2}(\go)=\frac 18\Big((1-\sqrt 3)e^{i\go}+(3-\sqrt 3)+(3+\sqrt 3)e^{-i\go}+(1+\sqrt 3)e^{-i2\go}\Big).
\end{equation}
Define $\wh{v}(\go):=2\wh{\lpf^D_2}(\go/2)\wh{\lpf^D_2}(\go/2+\pi)$.
Then $\wh{\hpf_2}(\go_1,\go_2) := \frac{1}{2}(\wh{v}(\go_1+\go_2)+\wh{v}(\go_1-\go_2)e^{-i\go_2})$ is given by
\[
\hpf_2=\frac{1}{32}\left[ \begin{matrix} \sqrt{3}-2&0&0&2+\sqrt {3}
\\ 0&-\sqrt {3}+6&-\sqrt {3}-6&0
\\ 0&\boxed{-\sqrt {3}+6}&-\sqrt {3}-6&0
\\ \sqrt {3}-2&0&0&2+\sqrt {3}\end{matrix} \right]_{[-1,2]^2}.
\]
By $\wh{\hpf_3}(\gbo):=e^{-i\go_1}\ol{\wh{\hpf_2}(\gbo+(\pi,\pi))}$, the filter $\hpf_3$ is given by
\[
\hpf_3=\frac{1}{32}\left[ \begin{matrix} -2-\sqrt {3}&0&0&\sqrt {3}-2
\\ 0&\sqrt {3}+6&\boxed{-\sqrt {3}+6}&0\\ 0
&-\sqrt {3}-6&\sqrt {3}-6&0\\ 2+\sqrt {3}&0&0&2-
\sqrt {3}\end{matrix} \right]_{[-1,2]\times[-2,1]}.
\]
Note that $\sr(\lpf,M_{\sqrt{2}})=4$, $\lpm(a)=4$, and $\sm(\lpf,M_{\sqrt{2}})=\sm(\lpf,N_{\sqrt{2}})\approx 3.03654$. Hence $\phi^{M_{\sqrt{2}}}, \phi^{N_{\sqrt{2}}} \in C^2(\R^2)$.
The filter $\lpf$ is $D_4$-symmetric about $(1/2,1/2)$, while $\hpf_1$ has the symmetry in \eqref{sym:b1} and $\hpf_2,\hpf_3$ have the symmetry in \eqref{sym:b2b3} with
$\vmo(\hpf_1)=4$ and $\vmo(\hpf_2)=\vmo(\hpf_3)=2$.
The filter bank $\{\lpf; \hpf_1, \hpf_2,\hpf_3\}$ is a double canonical quincunx tight frame filter bank.
Let $\phi,\psi_1,\psi_2,\psi_3$ be defined in \eqref{phi:psi} with $M=M_{\sqrt{2}}, L=3$ and $a=a^{2D}_{4,4}$. Then $\{\phi;\psi_1,\psi_2,\psi_3\}$ is a tight $M_{\sqrt{2}}$-framelet in $L_2(\R^2)$ such that all $\phi,\psi_1,\psi_2,\psi_3$ all have symmetry property as in \eqref{phi:sym}, \eqref{psi1:sym}, and \eqref{psi2psi3:sym}.
}\end{example}

\section{Double canonical symmetric quincunx tight framelets derived from one-dimensional filters}
\label{sec:double-canonical:2}

Motivated by the special form in \eqref{a2D2n2n} for the two-dimensional quincunx low-pass filters $\lpf^{2D}_{2n,2n}$,
we now further generalize the construction and results in Section~\ref{sec:double-canonical} for building double canonical symmetric quincunx tight framelets from one-dimensional filters.

\begin{theo}\label{thm:a2D}
Let $u\in l_0(\Z)$ be a one-dimensional finitely supported filter with $\wh{u}(0)=1$. Define a two-dimensional filter $a^{2D}$ by
\begin{equation} \label{a2D}
\wh{\lpf^{2D}}(\omega_1,\omega_2)=\frac{1}{2}
[\wh{u}(\omega_1+\omega_2)+\wh{u}(\omega_1-\omega_2)e^{-i\omega_2}].
\end{equation}
Then
\begin{enumerate}
\item[{\rm (i)}] $a^{2D}$ has order $n$ sum rules with respect to $M_{\sqrt{2}}$ if and only if $u$ has $n$ linear-phase moments with phase $1/2$, i.e.,
\begin{equation}\label{u:lpm}
\wh{u}(\go)=e^{-i\go/2}+\bo(|\go|^n),\qquad \go\to 0.
\end{equation}
\item[{\rm (ii)}] $a^{2D}$ has order $n$ linear-phase moments with phase $(1/2,1/2)$ if and only if $u$ has $n$ linear-phase moments with phase $1/2$, i.e., \eqref{u:lpm} holds.
\item[{\rm (iii)}] $a^{2D}$ is $D_4$-symmetric about the point $(1/2,1/2)$ if and only if $u$ is symmetric about the point $1/2$, that is, $u(1-k)=u(k)$ for all $k\in \Z$.
\end{enumerate}
\end{theo}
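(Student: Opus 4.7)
The plan is to exploit the simple two-term structure
\[
\wh{\lpf^{2D}}(\omega_1,\omega_2)=\tfrac{1}{2}\bigl[\wh{u}(\omega_1+\omega_2)+\wh{u}(\omega_1-\omega_2)e^{-i\omega_2}\bigr]
\]
and reduce each of the three parts to a single identity on $\wh{u}$ via a well-chosen specialization of $(\omega_1,\omega_2)$. I would handle (ii) first, then (i), and finally (iii), since they share the same leading-phase calculation.

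For (ii), the condition to verify is $\wh{\lpf^{2D}}(\gbo)=e^{-i(\omega_1+\omega_2)/2}+\bo(\|\gbo\|^n)$ as $\gbo\to 0$. In the $\Leftarrow$ direction, substituting \eqref{u:lpm} into the defining formula produces two leading terms $\tfrac12 e^{-i(\omega_1+\omega_2)/2}$ and $\tfrac12 e^{-i(\omega_1-\omega_2)/2}e^{-i\omega_2}$, which are identical and sum to the desired $e^{-i(\omega_1+\omega_2)/2}$. For the converse, the slice $\omega_2=0$ collapses the formula to $\wh{\lpf^{2D}}(\omega_1,0)=\wh{u}(\omega_1)$, and the linear-phase statement on $\lpf^{2D}$ therefore restricts directly to that on $u$.

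For (i), a short computation with $(M_{\sqrt{2}}^\tp)^{-1}$ shows that $\Omega_{M_{\sqrt{2}}}\setminus\{0\}=\{(1/2,1/2)\}$, so order $n$ sum rules with respect to $M_{\sqrt{2}}$ amount to $\wh{\lpf^{2D}}(0)=1$ (equivalently $\wh{u}(0)=1$) together with $\wh{\lpf^{2D}}(\omega_1+\pi,\omega_2+\pi)=\bo(\|\gbo\|^n)$. Using $2\pi$-periodicity of $\wh{u}$, the latter condition reduces to
\[
\tfrac{1}{2}\bigl[\wh{u}(\omega_1+\omega_2)-\wh{u}(\omega_1-\omega_2)e^{-i\omega_2}\bigr]=\bo(\|\gbo\|^n).
\]
The same leading-phase cancellation as in (ii), now with a sign flip, supplies the $\Leftarrow$ direction. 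For the converse, specializing to the diagonal $\omega_1=\omega_2=t/2$ turns the above identity into $\wh{u}(t)-e^{-it/2}=\bo(|t|^n)$, which is exactly \eqref{u:lpm}.

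For (iii), I would rewrite $D_4$-symmetry about $\mathbf{c}=(1/2,1/2)$ via \eqref{D4} as the family of Fourier-domain identities $\wh{\lpf^{2D}}(E^\tp\gbo)=e^{i(I_2-E)\mathbf{c}\cdot\gbo}\wh{\lpf^{2D}}(\gbo)$, $E\in D_4$, and check it only on a generating set such as the swap $S=\left[\begin{smallmatrix}0&1\\1&0\end{smallmatrix}\right]$ and the reflection $D=\diag(1,-1)$ (note $(DS)^2=-I_2$, so $-I_2$ is then automatic). The assumption $u(1-k)=u(k)$ is equivalent to $\wh{u}(-\omega)=e^{i\omega}\wh{u}(\omega)$; a direct substitution into the two-term formula verifies the $S$-identity, while the $D$-identity is automatic from the two-term structure and requires no hypothesis on $u$. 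For the converse, imposing the $S$-identity and setting $t=\omega_1-\omega_2$ recovers exactly $\wh{u}(-t)=e^{it}\wh{u}(t)$. There is no serious analytic obstacle here; the whole argument is elementary Fourier bookkeeping, and the only subtlety is choosing, in each direction, the right scalar substitution that inverts a two-variable identity back to a one-variable one, together with recognizing which piece of the $D_4$-symmetry is automatic versus genuinely encoded by the symmetry of $u$.
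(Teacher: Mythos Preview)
Your argument is correct, and the forward (``$\Leftarrow$'') directions of (i) and (ii) are essentially identical to the paper's: both substitute \eqref{u:lpm} into the two-term formula and observe that the leading phases agree (for (ii)) or cancel (for (i)).

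The converse directions, however, take a genuinely different route. The paper introduces $\wh{v}(\go):=\wh{u}(\go)e^{i\go/2}$, rewrites the two-variable condition as $\wh{v}(\go_1+\go_2)=\pm\wh{v}(\go_1-\go_2)+\bo(\|\gbo\|^n)$, and then applies $\partial_2^j$ and $\partial_1\partial_2^j$ at the origin to obtain the parity relations $\wh{v}^{(j)}(0)=(-1)^j\wh{v}^{(j)}(0)$ and $\wh{v}^{(j+1)}(0)=(-1)^j\wh{v}^{(j+1)}(0)$, which together force $\wh{v}^{(j)}(0)=0$ for $1\le j\le n-1$. Your approach instead picks a single one-dimensional slice (the axis $\go_2=0$ for (ii), the diagonal $\go_1=\go_2=t/2$ for (i)) along which the two-term formula collapses to precisely $\wh{u}(t)-e^{-it/2}$; since the $\bo(\|\gbo\|^n)$ condition in the paper's sense (vanishing of all partial derivatives of order $<n$) restricts, via the chain rule, to $\bo(|t|^n)$ along any line through the origin, this immediately yields \eqref{u:lpm}. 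Your route is shorter and avoids the derivative bookkeeping; the paper's route is more symmetric in the two variables and perhaps makes clearer that \emph{any} nontrivial mixing of the two slots already forces the conclusion. For (iii) the paper simply says the claim ``can be directly checked''; your reduction to the generators $S,D$ of $D_4$, together with the observation that the $D$-identity is automatic from the two-term structure while the $S$-identity encodes exactly $\wh{u}(-\go)=e^{i\go}\wh{u}(\go)$, fills this in cleanly.
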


\begin{proof} The claim in item (iii) can be directly checked. We now prove items (i) and (ii).
If \eqref{u:lpm} holds, then
\[
\begin{aligned}
\wh{\lpf^{2D}}(\gbo+(\pi,\pi))&=\frac{1}{2}(\wh{u}(\go_1+\go_2)-\wh{u}(\go_1-\go_2)e^{-i\go_2})
\\&=
\frac{1}{2} (e^{-i(\go_1+\go_2)/2}-e^{-i(\go_1-\go_2)/2}e^{-i\go_2})+\bo(\|\gbo\|^n)
\\&=\bo(\|\gbo\|^n)
\end{aligned}
\]
as $\gbo\to 0$. Hence, \eqref{u:lpm} implies that $\lpf^{2D}$ has order $n$ sum rules with respect to $M_{\sqrt{2}}$.

Conversely, suppose that $\lpf^{2D}$ has order $n$ sum rules with respect to $M_{\sqrt{2}}$. Then
\[
\frac{1}{2}(\wh{u}(\go_1+\go_2)-\wh{u}(\go_1-\go_2)e^{-i\go_2})=
\wh{\lpf^{2D}}(\gbo+(\pi,\pi))=\bo(\|\gbo\|^n),\qquad \gbo\to 0,
\]
from which we deduce that
\begin{equation}\label{v:lpm}
\wh{v}(\go_1+\go_2)=\wh{v}(\go_1-\go_2)+\bo(\|\gbo\|^n),\quad \gbo\to 0\quad \mbox{with}\quad
\wh{v}(\go):=\wh{u}(\go)e^{i\go/2}.
\end{equation}
By $\wh{u}(0)=1$, we have $\wh{v}(0)=1$. Now \eqref{v:lpm} implies
\[
\wh{v}^{(j)}(0)=\partial_2^j[\wh{v}(\go_1+\go_2)]|_{\go_1=0,\go_2=0}=
\partial_2^j[\wh{v}(\go_1-\go_2)]|_{\go_1=0,\go_2=0}=(-1)^j \wh{v}^{(j)}(0),\qquad \forall\, 0\le j\le n-1
\]
and
\[
\wh{v}^{(j+1)}(0)=\partial_1 \partial_2^j[\wh{v}(\go_1+\go_2)]|_{\go_1=0,\go_2=0}=
\partial_1 \partial_2^j[\wh{v}(\go_1-\go_2)]|_{\go_1=0,\go_2=0}=(-1)^j \wh{v}^{(j+1)}(0),\qquad \forall\, 0\le j\le n-2.
\]
From the above identities, it is easy to deduce that we must have $\wh{v}(0)=1$ and $\wh{v}^{(j)}(0)=0$ for all $j=1,\ldots,n-1$. That is, $\wh{v}(\go)=1+\bo(|\go|^n)$ as $\go\to0$. Consequently, by $\wh{v}(\go)=\wh{u}(\go)e^{i\go/2}$, \eqref{u:lpm} must hold. This proves item (i).

Similarly, if \eqref{u:lpm} holds, then
\[
\begin{aligned}
\wh{\lpf^{2D}}(\gbo)& =\frac{1}{2}(\wh{u}(\go_1+\go_2)+\wh{u}(\go_1-\go_2)e^{-i\go_2})
\\&=
\frac{1}{2} (e^{-i(\go_1+\go_2)/2}+e^{-i(\go_1-\go_2)/2}e^{-i\go_2})+\bo(\|\gbo\|^n)
\\&=
e^{-i(\go_1+\go_2)/2}+\bo(\|\gbo\|^n)
\end{aligned}
\]
as $\gbo\to 0$. Hence, \eqref{u:lpm} implies that $\lpf^{2D}$ has order $n$ linear-phase moments with phase $(1/2,1/2)$. Conversely, if $\lpf^{2D}$ has order $n$ linear-phase moments with phase $(1/2,1/2)$, then we must have
\[
\wh{v}(\go_1+\go_2)=-\wh{v}(\go_1-\go_2)+\bo(\|\gbo\|^n),\quad \gbo\to 0\quad \mbox{with}\quad
\wh{v}(\go)=\wh{u}(\go)e^{i\go/2}.
\]
A similar proof as in the proof of item (i) shows that \eqref{u:lpm} must hold. This proves item (ii).
\end{proof}

For the filter $u$ in Theorem~\ref{thm:a2D}, we also have the following result.

\begin{pro}\label{prop:u:lpm}
For  a finitely supported filter $u\in l_0(\Z)$  with $\wh u(0)=1$ such that $u$ is symmetric about the point $1/2$, $\lpm(u)$ must be an even integer. Moreover, the filter $u$ is symmetric about the point $1/2$ and $u$ has $2n$ linear-phase moments with phase $1/2$ if and only if $u$  takes the following form
\begin{equation}\label{u}
\wh{u}(\go)=2^{-1}(1+e^{-i\go})\Big( \sin^{2n}(\go/2) R(\sin^2(\go/2))+1+\sum_{j=1}^{n-1} \frac{(2j-1)!!}{(2j)!!} \sin^{2j}(\go/2)\Big)
\end{equation}
for some polynomial $R$, where $(2j-1)!!=(2j-1)(2j-3)\cdots(3)(1)$ and $(2j)!!=(2j)(2j-2)\cdots(2)$. In particular, the two-dimensional filter $\lpf^{2D}$ defined in \eqref{a2D} using the filter $u$ in \eqref{u} with $R=0$ is the same filter $\lpf^{2D}_{2n,2n}$ in \eqref{a2D2n2n}.
\end{pro}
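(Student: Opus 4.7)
The plan is to reformulate everything in terms of the auxiliary function $\wh v(\go):=e^{i\go/2}\wh u(\go)$. The symmetry condition $u(1-k)=u(k)$ is equivalent to $\wh u(\go)=e^{-i\go}\wh u(-\go)$, which rearranges to $\wh v(\go)=\wh v(-\go)$; that is, $\wh v$ is even in $\go$. The linear-phase-moment condition $\lpm(u)\ge m$ becomes $\wh v(\go)=1+\bo(|\go|^m)$ as $\go\to 0$. Since $\wh v$ is even, its Taylor expansion at the origin contains only even powers of $\go$, so the first nonvanishing term of $\wh v(\go)-1$ must be of even order. For a finitely supported $u$ with $\wh u(0)=1$, $\wh v$ cannot be identically $1$ (otherwise $\wh u(\go)=e^{-i\go/2}$, which is not a trigonometric polynomial with integer frequencies), so $\lpm(u)$ is a well-defined even integer, proving the first claim.

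Next I would establish the structural form. Pairing indices $k\leftrightarrow 1-k$ in $\wh u(\go)=\sum_k u(k)e^{-ik\go}$ yields $\wh v(\go)=2\sum_{k\le 0}u(k)\cos((k-\tfrac12)\go)$. Using the Chebyshev identity $\cos((2\ell-1)\theta)=T_{2\ell-1}(\cos\theta)$ with $T_{2\ell-1}$ odd, each $\cos((k-\tfrac12)\go)$ equals $\cos(\go/2)$ times a polynomial in $\cos^2(\go/2)=1-\sin^2(\go/2)$. Collecting terms, there exists a polynomial $Q$ with $\wh v(\go)=\cos(\go/2)\,Q(\sin^2(\go/2))$, equivalently $\wh u(\go)=\tfrac{1+e^{-i\go}}{2}\,Q(\sin^2(\go/2))$. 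Setting $s:=\sin^2(\go/2)$, the $2n$ linear-phase-moment condition $\wh v(\go)=1+\bo(|\go|^{2n})$ becomes $\sqrt{1-s}\,Q(s)=1+\bo(s^n)$ as $s\to 0$, i.e., $Q(s)-(1-s)^{-1/2}=\bo(s^n)$. Invoking the binomial expansion $(1-s)^{-1/2}=\sum_{j\ge 0}\frac{(2j-1)!!}{(2j)!!}s^j$ forces
\[
Q(s)=1+\sum_{j=1}^{n-1}\frac{(2j-1)!!}{(2j)!!}s^j+s^n R(s)
\]
for some polynomial $R$; reversing the steps shows every such $Q$ produces an admissible $\wh u$, giving the advertised characterization.

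For the last assertion, taking $R=0$ leaves $Q$ of degree $n-1$, so the resulting $\wh u$ is supported in $[1-n,n]\cap\Z$ and hence $\wh{\lpf^{2D}}$ is supported in $[1-n,n]^2\cap\Z^2$. By Theorem~\ref{thm:a2D}, this $\lpf^{2D}$ is $D_4$-symmetric about $(1/2,1/2)$, has order $2n$ sum rules with respect to $M_{\sqrt{2}}$, and has $2n$ linear-phase moments with phase $(1/2,1/2)$; the uniqueness statement in Theorem~\ref{thm:a2D2n2n} then identifies it with $\lpf^{2D}_{2n,2n}$. The only step requiring real care is the Chebyshev argument that produces $Q$ from the symmetry of $u$; everything else reduces to bookkeeping with the binomial series and an appeal to the uniqueness already in hand.
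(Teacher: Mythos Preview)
Your proof is correct and follows essentially the same line as the paper: introduce $\wh v(\go)=e^{i\go/2}\wh u(\go)$, use its evenness to force $\lpm(u)$ even, write $\wh u(\go)=2^{-1}(1+e^{-i\go})P(\sin^2(\go/2))$, and reduce the $2n$ linear-phase-moment condition to $P(x)=(1-x)^{-1/2}+\bo(x^n)$ via the binomial series. The paper simply asserts the factorization through $P$ where you supply the Chebyshev argument; and for the final identification with $\lpf^{2D}_{2n,2n}$, the paper works one-dimensionally (checking that the filter built from $a^I_{2n}$ also has support in $[1-n,n]$ and $2n$ linear-phase moments, hence equals $u$ by one-dimensional uniqueness), whereas you pass to two dimensions and invoke the uniqueness statement of Theorem~\ref{thm:a2D2n2n} via Theorem~\ref{thm:a2D}. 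Both routes are valid---the uniqueness half of Theorem~\ref{thm:a2D2n2n} is proved independently of this proposition, so there is no circularity.
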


\begin{proof}
Note that $u$ is symmetric about the point $1/2$ if and only if $\wh{u}(\go)=e^{-i\go}\wh{u}(-\go)$, that is, $e^{i\go/2}\wh{u}(\go)=e^{-i\go/2} \wh{u}(-\go)$. Moreover, the symmetry of $u$ also implies that $\sum_{k\in \Z} u(k)k=1/2$.
Thus, it is trivial to see that $[e^{i\go/2}\wh{u}(\go)]^{(2j-1)}(0)=0$ for all $j\in \N$. Consequently, by the definition of linear-phase moments with phase $1/2$, $\lpm(u)$ must be an even integer.

Since $u$ is symmetric about $1/2$, we must have $\wh{u}(\go)=2^{-1}(1+e^{-i\go})P(\sin^2(\go/2))$ for some polynomial $P$. Therefore,
$e^{i\go/2}\wh{u}(\go)=\cos (\go/2) P(\sin^2(\go/2))$. Now $u$ has order $2n$ linear-phase moments with phase $1/2$ if and only if
$\cos (\go/2) P(\sin^2(\go/2))=e^{i\go/2}\wh{u}(\go)=1+\bo(|\go|^{2n})$ as $\go \to 0$,
which, by considering $x=\sin^2(\omega/2)$, is further equivalent to
$P(x)=(1-x)^{-1/2}+\bo(x^n)$ as $x\to 0$.
Considering the Taylor expansion of $(1-x)^{-1/2}$ at $x=0$,
we must have
\[
P(x)=x^n R(x)+\sum_{j=0}^{n-1} \binom{-1/2}{j} (-x)^j=x^n R(x)+1+\sum_{j=1}^{n-1} \frac{(2j-1)!!}{(2j)!!} x^j,
\]
for some polynomial $R$.

When $R=0$, the filter $u$ in \eqref{u} is supported inside $[1-n, n]$. Define $\wh{v}(\go):=(\wh{\lpf^I_{2n}}(\go/2)-\wh{\lpf^I_{2n}}(\go/2+\pi))e^{-i\go/2}$.
Since $\lpf^I_{2n}$ is an interpolatory $2$-wavelet filter, it is trivial to see that $\wh{v}(\go)=(1-2\wh{\lpf^I_{2n}}(\go/2+\pi))e^{-i\go/2}$.
By $\sr(\lpf^I_{2n},2)=2n$,
it is trivial to see that
\[
\wh{v}(\go)=e^{-i\go/2}+\bo(|\go|^{2n}),\qquad \go\to 0.
\]
That is, $\lpm(v)\ge 2n$. Since $\lpf^I_{2n}$ is supported inside $[1-2n,2n-1]$, we deduce that $v$ is supported inside $[1-n,n]$. By the uniqueness of $u$, we must have $v=u$.
This proves $\lpf^{2D}=\lpf^{2D}_{2n,2n}$ in \eqref{a2D2n2n}.
\end{proof}

We now construct double canonical quincunx tight framelet filter banks from the low-pass filters in \eqref{a2D}.

\begin{theo}\label{thm:general}
Let $u\in l_0(\Z)$ be a finitely supported filter such that
\begin{equation}\label{u:positive}
|\wh{u}(\go)|\le 1, \qquad \go\in \R.
\end{equation}
Define $\lpf^{2D}$ as in \eqref{a2D}, $\wh{\hpf_1}(\gbo):= e^{-i\go_1}\overline{\wh{\lpf^{2D}}(\gbo+(\pi,\pi))}$, and
\begin{equation} \label{def:b2b30}
\wh{\hpf_2}(\gbo):=\frac{1}{2}[\wh{v}(\go_1+\go_2)+\wh{v}(\go_1-\go_2)e^{-i\go_2}],
\quad \wh{\hpf_3}(\gbo) :=e^{-i\go_1}\overline{\wh{\hpf_2}(\gbo+(\pi,\pi))},
\end{equation}
where $v\in l_0(\Z)$ is a filter obtained from Fej\'er-Riesz lemma and satisfying $|\wh{v}(\go)|^2=1-|\wh{u}(\go)|^2$.
Then $\{\lpf^{2D};{\hpf_1}, {\hpf_2}, {\hpf_3}\}$ is a double canonical quincunx tight framelet filter bank.
\end{theo}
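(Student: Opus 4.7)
The plan is to reduce the theorem to verifying the sum-of-squares identity \eqref{sumofsquare} with $s=2$, since the two canonical relations defining $\hpf_1$ and $\hpf_3$ (namely \eqref{canonical:1} and \eqref{canonical:2}) are built into the construction and automatically force \eqref{tffb:0}. Thus only \eqref{tffb:1} needs attention, and after dividing out the canonical pairing, the entire claim collapses to showing
\[
|\wh{\hpf_2}(\gbo)|^2+|\wh{\hpf_2}(\gbo+(\pi,\pi))|^2=1-|\wh{\lpf^{2D}}(\gbo)|^2-|\wh{\lpf^{2D}}(\gbo+(\pi,\pi))|^2.
\]

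First I would check that the Fej\'er--Riesz step is legitimate. The hypothesis $|\wh{u}(\go)|\le 1$ guarantees $1-|\wh{u}(\go)|^2$ is a nonnegative $2\pi$-periodic trigonometric polynomial, so Fej\'er--Riesz produces a finitely supported $v\in l_0(\Z)$ with $|\wh{v}(\go)|^2=1-|\wh{u}(\go)|^2$; this is what the statement invokes and nothing more.

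Next I would expand the two squared moduli via the special block structure \eqref{a2D} and \eqref{def:b2b30}. A direct expansion (exactly the computation behind equations \eqref{a:square} and \eqref{b:square} in the proof of Theorem~\ref{thm:b2D2n2n}, which did not use anything specific about the filter $u=2\wh{a^I_{2n}}/e^{i\go/2}-e^{-i\go/2}$ there) yields the identities
\[
|\wh{\lpf^{2D}}(\go_1,\go_2)|^2+|\wh{\lpf^{2D}}(\go_1+\pi,\go_2+\pi)|^2=\tfrac{1}{2}\bigl[|\wh{u}(\go_1+\go_2)|^2+|\wh{u}(\go_1-\go_2)|^2\bigr],
\]
\[
|\wh{\hpf_2}(\go_1,\go_2)|^2+|\wh{\hpf_2}(\go_1+\pi,\go_2+\pi)|^2=\tfrac{1}{2}\bigl[|\wh{v}(\go_1+\go_2)|^2+|\wh{v}(\go_1-\go_2)|^2\bigr];
\]
the cross terms involving $e^{-i\go_2}$ and $e^{-i(\go_2+\pi)}$ cancel because they differ only in sign. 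I would note that this cancellation is the whole reason the ansatz \eqref{a2D} is useful: it converts a two-dimensional orthogonality problem into a one-dimensional one evaluated at the two arguments $\go_1\pm\go_2$.

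Finally I would substitute $|\wh{v}|^2=1-|\wh{u}|^2$ into the second identity and add it to the first to obtain \eqref{sumofsquare} with $s=2$. Combined with the paragraph after \eqref{sumofsquare}, which observes that \eqref{canonical:1} and \eqref{canonical:2} automatically supply \eqref{tffb:0} for $L=2s-1=3$, this completes the verification that $\{\lpf^{2D};\hpf_1,\hpf_2,\hpf_3\}$ is a double canonical quincunx tight framelet filter bank. There is no real obstacle; the main content of the argument is recognizing that the identities \eqref{a:square} and \eqref{b:square} from the proof of Theorem~\ref{thm:b2D2n2n} are formal consequences of the block structure \eqref{a2D} alone and thus remain valid for any $u\in l_0(\Z)$ satisfying the bound $|\wh{u}|\le 1$.
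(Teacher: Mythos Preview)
Your proposal is correct and follows essentially the same route as the paper's proof: invoke the block-structure identities \eqref{a:square} and \eqref{b:square} from the proof of Theorem~\ref{thm:b2D2n2n}, combine them with $|\wh{u}|^2+|\wh{v}|^2=1$ to obtain \eqref{sumofsquare} with $s=2$, and conclude via the canonical relations. The only difference is that you spell out the Fej\'er--Riesz justification and the cross-term cancellation in slightly more detail than the paper does.
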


\begin{proof} By the definitions of $a=\lpf^{2D}$ in \eqref{a2D} and $\hpf_2$ in \eqref{def:b2b30}, as proved in the proof of Theorem~\ref{thm:b2D2n2n}, \eqref{a:square} and \eqref{b:square} must hold. Since $|\wh{u}(\go)|^2+|\wh{v}(\go)|^2=1$,
it is trivial to see that \eqref{sumofsquare} holds with $s=2$. Hence,
$\{\lpf^{2D};{\hpf_1}, {\hpf_2}, {\hpf_3}\}$ is a double canonical quincunx tight framelet filter bank.
\end{proof}

By the same proof of Theorem~\ref{thm:general}, we have the following generalized result of Theorem~\ref{thm:general}:

\begin{theo}
Let $u,v\in l_0(\Z)$ be finitely supported filters such that
\begin{equation}\label{u:v:1}
|\wh{u}(\go)|^2+|\wh{v}(\go)|^2=1, \qquad \go\in \R.
\end{equation}
Let $M$ be a $d\times d$ integer matrix such that $|\det(M)|=2$. Define
\[
\wh{\lpf^{dD}}(\gbo):=\frac{1}{2}[\wh{u}(\gamma_1\cdot \gbo)+\wh{u}(\gamma_2\cdot \gbo)e^{-i\gamma_3\cdot \gbo}],\quad
\wh{\hpf_1}(\gbo):= e^{-i\gamma_4\cdot\gbo}\overline{\wh{\lpf^{dD}}(\gbo+2\pi\xi)},
\]
and
\begin{equation} \label{def:b2b3}
\wh{\hpf_2}(\gbo):=\frac{1}{2}[\wh{v}(\gamma_1\cdot \gbo)+\wh{v}(\gamma_2\cdot\gbo)e^{-i\gamma_3\cdot \gbo}],
\quad \wh{\hpf_3}(\gbo) :=e^{-i\gamma_4\cdot\gbo}\overline{\wh{\hpf_2}(\gbo+2\pi\xi)},
\end{equation}
where $\gbo\in\R^d$, $\xi\in\Omega_{M}\backslash\{0\}$, $\gamma_1,\gamma_2\in M\Z^d\backslash\{0\}$, and $\gamma_3,\gamma_4\in \Z^d\backslash [M\Z^d]$.
Then $\{\lpf^{dD};{\hpf_1}, {\hpf_2}, {\hpf_3}\}$ is a double canonical tight $M$-framelet filter bank.
\end{theo}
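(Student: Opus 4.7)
The plan is to mirror the proof of Theorem~\ref{thm:general} verbatim, isolating the two places where the quincunx setting was actually used and replacing them with the two character-theoretic facts that survive in arbitrary dimension provided $|\det(M)|=2$. First: since $|\Z^d/M\Z^d|=2$ and dually $|(M^\tp)^{-1}\Z^d/\Z^d|=2$, we have $2\xi\in\Z^d$ for every $\xi\in(M^\tp)^{-1}\Z^d$. Second: the pairing $(\gamma,\xi)\mapsto e^{-2\pi i\gamma\cdot\xi}$ between these two order-two groups is the unique nontrivial character, so it equals $1$ when $\gamma\in M\Z^d$ and $-1$ when $\gamma\in\Z^d\setminus[M\Z^d]$. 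These two statements carry all the combinatorial content of the theorem.

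Using them, the hypotheses $\gamma_1,\gamma_2\in M\Z^d$ eliminate the $2\pi\xi$ shift inside $\wh u(\gamma_j\cdot\gbo)$ by $2\pi$-periodicity, while the hypothesis $\gamma_3\in\Z^d\setminus[M\Z^d]$ produces the sign flip $e^{-2\pi i\gamma_3\cdot\xi}=-1$. Setting $A:=\wh u(\gamma_1\cdot\gbo)$ and $B:=\wh u(\gamma_2\cdot\gbo)e^{-i\gamma_3\cdot\gbo}$, so that $\wh{\lpf^{dD}}(\gbo)=(A+B)/2$ and $\wh{\lpf^{dD}}(\gbo+2\pi\xi)=(A-B)/2$, the parallelogram identity $|A+B|^2+|A-B|^2=2(|A|^2+|B|^2)$ yields
\[
|\wh{\lpf^{dD}}(\gbo)|^2+|\wh{\lpf^{dD}}(\gbo+2\pi\xi)|^2=\tfrac{1}{2}\bigl(|\wh u(\gamma_1\cdot\gbo)|^2+|\wh u(\gamma_2\cdot\gbo)|^2\bigr),
\]
and the identical computation with $v$ in place of $u$ gives the same identity for $\hpf_2$. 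Summing the two relations and invoking the normalization $|\wh u|^2+|\wh v|^2\equiv 1$ produces the $d$-dimensional analog of the sum-of-squares identity \eqref{sumofsquare} with $s=2$.

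To finish, I would verify that the canonical definitions of $\hpf_1$ and $\hpf_3$ automatically upgrade this identity to the full pair of equations in \eqref{tffb}. A short computation using $2\xi\in\Z^d$ (so that $\wh{\lpf^{dD}}(\gbo+4\pi\xi)=\wh{\lpf^{dD}}(\gbo)$ by periodicity) together with $e^{-2\pi i\gamma_4\cdot\xi}=-1$ shows that
\[
\overline{\wh{\hpf_1}(\gbo)}\wh{\hpf_1}(\gbo+2\pi\xi)=-\overline{\wh{\lpf^{dD}}(\gbo)}\wh{\lpf^{dD}}(\gbo+2\pi\xi),
\]
and analogously for the $\hpf_2,\hpf_3$ pair, which kills the off-diagonal equation in \eqref{tffb}. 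Meanwhile $|\wh{\hpf_1}(\gbo)|^2=|\wh{\lpf^{dD}}(\gbo+2\pi\xi)|^2$ and $|\wh{\hpf_3}(\gbo)|^2=|\wh{\hpf_2}(\gbo+2\pi\xi)|^2$ rearrange the sum-of-squares identity into the diagonal equation of \eqref{tffb} with $L=3$. There is no genuine obstacle in the argument once the two character-theoretic facts are in place; the only item requiring any care is tracking which of the hypotheses $\gamma_j\in M\Z^d\setminus\{0\}$ or $\gamma_j\in\Z^d\setminus[M\Z^d]$ activates in each of the four occurrences, which is precisely what forces the placement prescribed in the statement.
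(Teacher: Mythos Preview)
Your proposal is correct and follows essentially the same approach as the paper, which simply states ``By the same proof of Theorem~\ref{thm:general}'' without further detail. You have in fact made explicit the two character-theoretic facts (that $2\xi\in\Z^d$ and that $e^{-2\pi i\gamma\cdot\xi}$ equals $1$ or $-1$ according as $\gamma\in M\Z^d$ or not) which are exactly what is needed to replicate the parallelogram-identity computation \eqref{a:square}--\eqref{b:square} and the reduction to \eqref{sumofsquare} in the general $|\det(M)|=2$ setting.
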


For a real-valued symmetric filter $u$ satisfying
\begin{equation}\label{u:sym}
u(1-k)=u(k),\qquad \mbox{for all}\;\, k\in \Z,
\end{equation}
it is of interest to ask whether there exists a finitely supported real-valued filter $v$ satisfying \eqref{u:v:1} with symmetry so that the constructed high-pass filters $\hpf_2$ and $\hpf_3$ in Theorem~\ref{thm:general}
will have better symmetry as in Example~\ref{exam:p_20}. This is negatively answered by the following result.

\begin{theo}\label{thm:uv:sym}
Let $u,v\in l_0(\Z)$ be two finitely supported real-valued filters.
Then \eqref{u:v:1} and \eqref{u:sym} hold, $\sum_{k\in \Z} u(k)=1$, and $v$ has symmetry if and only if
\begin{equation}\label{uv:Haar}
\wh{u}(\go)= 2^{-1}(e^{ij\go}+e^{-i(j+1)\go})\quad \mbox{ and }\quad \wh{v}(\go)=2^{-1}e^{-ik\go}(e^{ij\go}-e^{-i(j+1)\go})
\end{equation}
for some $j,k\in \Z$.
\end{theo}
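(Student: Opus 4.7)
The plan is to reduce the theorem to a classification of modulus-one trigonometric polynomials. First, the hypothesis that $u$ is real-valued and symmetric about $1/2$ (i.e.\ $u(1-k)=u(k)$) is equivalent to $\wh{u}(\omega)=e^{-i\omega}\overline{\wh{u}(\omega)}$, so $\alpha(\omega):=e^{i\omega/2}\wh{u}(\omega)$ is real-valued with $\alpha(0)=1$ and has all its frequencies on the half-integer lattice $\tfrac{1}{2}+\Z$. Any symmetry of $v$ takes the form $v(c-k)=\epsilon v(k)$ for some $c\in\Z$ and $\epsilon\in\{+1,-1\}$, so $\beta(\omega):=e^{ic\omega/2}\wh{v}(\omega)$ satisfies $\overline{\beta}=\epsilon\beta$ and is therefore real ($\epsilon=1$) or purely imaginary ($\epsilon=-1$). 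Setting $\gamma:=\beta$ in the first case and $\gamma:=-i\beta$ in the second gives a real-valued $\gamma$ with $|\wh{v}|^2=\gamma^2$ and $\gamma(0)=0$ (the latter because $|\wh{u}(0)|=1$ forces $\wh{v}(0)=0$). Equation~\eqref{u:v:1} now reads $\alpha^2+\gamma^2=1$, equivalent to $|g|\equiv 1$ for the trigonometric polynomial $g(\omega):=\alpha(\omega)+i\gamma(\omega)$, which also satisfies $g(0)=1$.

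Next I will classify $g$ via a monomial argument. All its frequencies lie on $\tfrac{1}{2}\Z$, so with $z=e^{i\omega/2}$ the relation $g\overline{g}=1$ becomes $G(z)G(1/z)=1$ as Laurent polynomials in $z$. Matching coefficients yields the $\ell^2$-orthonormality $\sum_{m\in\Z}G_m\overline{G_{m+n}}=\delta_{n,0}$, and a finitely supported orthonormal sequence must have exactly one nonzero entry of unit modulus; hence $G(z)=\eta z^N$ with $|\eta|=1$, and $g(0)=1$ forces $\eta=1$, giving $g(\omega)=e^{iN\omega/2}$. Taking real and imaginary parts, $\alpha(\omega)=\cos(N\omega/2)$ and $\gamma(\omega)=\sin(N\omega/2)$; since $\alpha$ has frequencies only in $\tfrac{1}{2}+\Z$, the integer $N$ must be odd, say $N=2j+1$ for some $j\in\Z$. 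Substituting back, $\wh{u}(\omega)=e^{-i\omega/2}\cos((j+\tfrac{1}{2})\omega)=\tfrac{1}{2}(e^{ij\omega}+e^{-i(j+1)\omega})$, which is the first formula in~\eqref{uv:Haar}.

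Finally I will determine $\wh{v}$ and rule out the symmetric case. Because $\gamma(\omega)=\sin((j+\tfrac{1}{2})\omega)$ has all frequencies in $\tfrac{1}{2}+\Z$ while $\wh{v}$ is supported on $\Z$, the relation $\beta=e^{ic\omega/2}\wh{v}$ forces $c$ to be odd, say $c=2k+1$. If $\epsilon=1$, then $\gamma=\beta$ and $\wh{v}(\omega)=e^{-ic\omega/2}\sin((j+\tfrac{1}{2})\omega)$; a short computation expands this as $-\tfrac{i}{2}e^{-ik\omega}(e^{ij\omega}-e^{-i(j+1)\omega})$, whose nonzero coefficients are purely imaginary, contradicting that $v$ is real-valued. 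Therefore $\epsilon=-1$, $\beta=i\gamma$, and $\wh{v}(\omega)=ie^{-ic\omega/2}\sin((j+\tfrac{1}{2})\omega)=\tfrac{1}{2}e^{-ik\omega}(e^{ij\omega}-e^{-i(j+1)\omega})$, the second formula in~\eqref{uv:Haar}. The converse direction is a direct check: such $u,v$ are obviously real-valued with the stated symmetries and $\sum_k u(k)=1$, and $|\wh{u}|^2+|\wh{v}|^2=\cos^2((j+\tfrac{1}{2})\omega)+\sin^2((j+\tfrac{1}{2})\omega)=1$.

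The main obstacle I anticipate is the lattice-and-parity bookkeeping in the last step: the monomial classification of modulus-one trigonometric polynomials is standard, but ruling out the symmetric case $\epsilon=1$ cannot be done on the level of $|\wh{v}|^2$ alone (since $|\wh{v}|=|\sin((j+\tfrac{1}{2})\omega)|$ in both cases) and instead requires the explicit coefficient computation above, together with careful tracking of which sublattice of $\tfrac{1}{2}\Z$ carries the frequencies of $\alpha$, $\beta$, and $\gamma$.
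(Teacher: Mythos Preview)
Your argument is correct, and it reaches the same endpoint as the paper—a ``modulus-one trigonometric polynomial must be a monomial'' step—but the route is genuinely different. The paper first writes $\wh{u}(\omega)=2^{-1}(1+e^{-i\omega})P(\sin^2(\omega/2))$, enumerates the four possible symmetry types of $v$, and rules out three of them by sign/degree arguments on the resulting polynomial identity in $x=\sin^2(\omega/2)$; only then does it factor $e^{i\omega}\big[\wh{u}+e^{ik\omega}\wh{v}\big]\big[\wh{u}-e^{ik\omega}\wh{v}\big]=1$ and conclude that each factor is a monomial. Your version is more unified: you encode the symmetry of $v$ by a single pair $(c,\epsilon)$, form $g=\alpha+i\gamma$ once, apply the monomial lemma once, read off $\wh{u}$ directly, and eliminate $\epsilon=+1$ only at the very end by a coefficient check. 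What the paper's approach buys is that the elimination of the wrong symmetry types is done by soft polynomial arguments before any factorization; what your approach buys is that all four symmetry types are handled simultaneously with one clean classification step, at the cost of the lattice/parity bookkeeping you flagged.

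One small notational slip: the identity ``$G(z)G(1/z)=1$'' is only literally correct when the Fourier coefficients of $g$ are real (which happens in the $\epsilon=-1$ case but not in the $\epsilon=+1$ case, since then $i\gamma$ has imaginary coefficients). What you actually use—and what $|g|^2\equiv 1$ gives in both cases—is $G(z)\,G^{\star}(1/z)=1$ with $G^{\star}(w)=\sum_m\overline{G_m}w^m$, which yields exactly the orthonormality $\sum_m G_m\overline{G_{m+n}}=\delta_{n,0}$ you wrote next. Since your subsequent ``single nonzero coefficient'' deduction only uses that orthonormality relation, the argument goes through unchanged; just replace $G(1/z)$ by $G^{\star}(1/z)$ in that line.
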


\begin{proof} The sufficient part is trivial, since \eqref{uv:Haar} implies \eqref{u:v:1} and $v$ has symmetry.

We now prove the necessity part.
Since $u$ has symmetry in \eqref{u:sym}, we can write $\wh{u}(\go)=2^{-1}(1+e^{-i\go})P(\sin^2(\go/2))$ for some polynomial $P$ with real coefficients. Since $\wh{u}(0)=1$, we must have $P(0)=1$. Consequently, we have
$|\wh{u}(\go)|^2=\cos^2(\go/2)(P(\sin^2(\go/2)))^2=(1-x)(P(x))^2$ with $x:=\sin^2(\go/2)$.

Since $v$ has symmetry and there are essentially four different types of symmetry, we must have
\begin{equation}\label{v:sym:0}
\begin{split}
&\wh{v}(\go)=e^{-ik\go} Q(\sin^2(\go/2)),\\
&\wh{v}(\go)=e^{-ik\go}2^{-1}(1+e^{-i\go}) Q(\sin^2(\go/2)),\quad\\
&\wh{v}(\go)=e^{-ik\go} 2^{-1}(e^{i\go}-e^{-i\go})Q(\sin^2(\go/2))
\end{split}
\end{equation}
or
\begin{equation}\label{v:sym}
\wh{v}(\go)=e^{-ik\go}2^{-1}(1-e^{-i\go}) Q(\sin^2(\go/2))\quad\;
\end{equation}
for some $k\in \Z$ and some polynomial $Q$ with real coefficients.
We now show that $v$ must have the symmetry in \eqref{v:sym}.
Otherwise,
$v$ must take one of the three forms in \eqref{v:sym:0}. Then $|\wh{v}(\go)|^2=(Q(x))^2$, $(1-x)(Q(x))^2$, or $x(1-x)(Q(x))^2$, respectively.
Now by $|\wh{u}(\go)|^2+|\wh{v}(\go)|^2=1$, we will have
\[
(1-x)(P(x))^2+(Q(x))^2=1,\quad (1-x)(P(x))^2+(1-x)(Q(x))^2=1,
\quad \mbox{or}\quad (1-x)(P(x))^2+x(1-x)(Q(x))^2=1.
\]
The last two identities cannot hold due to the factor $1-x$, while the first identity must fail by considering $x\to -\infty$ and noting $P\not \equiv 0$.
Thus, $v$ must have the symmetry in \eqref{v:sym}.

By \eqref{u:sym} and \eqref{v:sym}, we see that both $e^{i\omega/2}\wh{u}(\omega)$
and $ie^{i(k+1/2)\omega}\wh{v}(\omega)$ are real-valued. Therefore,
\begin{align*}
e^{i\omega}\big[\wh{u}(\omega)+e^{ik\omega}\wh{v}(\omega)\big]
\big[\wh{u}(\omega)-e^{ik\omega}\wh{v}(\omega)\big]
&=\big[e^{i\omega/2}\wh{u}(\omega)+e^{i(k+1/2)\omega}\wh{v}(\omega)\big]
\big[e^{i\omega/2}\wh{u}(\omega)-e^{i(k+1/2)\omega}\wh{v}(\omega)\big]\\
&=|\wh{u}(\omega)|^2+|\wh{v}(\omega)|^2=1.
\end{align*}
Hence, the first two nontrivial factors in the above identities must be monomial, that is,
\[
\wh{u}(\omega)+e^{ik\omega}\wh{v}(\omega)=\lambda e^{ij\omega},
\quad
\wh{u}(\omega)-e^{ik\omega}\wh{v}(\omega)=e^{-i(j+1)\omega}/\lambda
\]
for some $j\in \Z$ and $\lambda \in \R\backslash \{0\}$. From the above identities, we have $\wh{u}(\omega)=[\lambda e^{ij\omega}+e^{-i(j+1)\omega}/\lambda]/2$.
By \eqref{u:sym}, we must have $\lambda=1$ and \eqref{uv:Haar} holds.
\end{proof}

By Theorem~\ref{thm:uv:sym}, we can conclude that except the Haar-type double canonical quincunx tight framelet filter bank that is similar to Example~\ref{exam:p_20}, there is no other real-valued double canonical quincunx tight framelet filter bank with better symmetry property. Moreover, due to Proposition~\ref{prop:u:lpm}, it is quite easy to observe that the real-valued low-pass filter $u$ constructed in \eqref{uv:Haar} can have no more than two linear-phase moments and therefore, the tight framelet filter banks constructed in Theorem~\ref{thm:general} can have no more than one vanishing moment. This shortcoming can be easily remedied by using complex-valued filters. As shown in \cite[Theorem~1 and Algorithm~2]{Han:adv:2010}, there are finitely supported complex-valued low-pass orthonormal $2$-wavelet filters $\lpf$ such that $\lpf(1-k)=\lpf(k)$ for all $k\in \Z$ with arbitrarily high orders of sum rules and linear-phase moments. Take $u=\lpf$. Then we can easily obtain complex-valued double canonical symmetric quincunx tight framelet filter banks with arbitrarily high orders of vanishing moments. For the convenience of the reader, we provide an example here by combining \cite[Algorithm~1]{Han:adv:2010} and Proposition~\ref{prop:u:lpm}.

Note that $M_{\sqrt{2}}$ is not compatible with the symmetry group
$D_4^+:=\{\pm \mbox{diag}(1, 1), \pm \mbox{diag}(1,-1)\}$. But we have
$M_{\sqrt{2}}D_4^+M_{\sqrt{2}}^{-1}:=\{M_{\sqrt{2}}EM_{\sqrt{2}}^{-1}: E\in D_4^+\} = \{\pm
\left[\begin{matrix} 1&0\\ 0&1\end{matrix}\right], \pm \left[\begin{matrix}0 & 1\\1 & 0\end{matrix}\right]\} =:D_4^-$ and $M_{\sqrt{2}}D_4^-M_{\sqrt{2}}^{-1}=D_4^+$.

\begin{example}
\label{exam:p_30}
{\rm
Take $n=3$ and $R=0$ in \eqref{u} of Proposition~\ref{prop:u:lpm}. Then $P(x)=1+\frac{1}{2}x+\frac{3}{8}x^2$  and
\[
\tilde{Q}(x):=\frac{1-(1-x)(P(x))^2}{x}=\frac{x^2(9x^2+15x+40)}{64}\ge 0, \qquad \forall\; x\in \R.
\]
Then $Q(x):=\frac38x(x+\frac{5+i3\sqrt{15}}{6})$ satisfies $|Q(x)|^2=\tilde{Q}(x)$ for all $x\in \R$. Define filters $u$ and $v$ by
\[
\wh{u}(\go)=2^{-1}(1+e^{-i\go})P(\sin^2(\go/2)),\qquad
\wh{v}(\go)=2^{-1}(1-e^{-i\go})Q(\sin^2(\go/2)).
\]
Then $\lpm(u)=6$ with phase $c=1/2$,
\[
\wh{u}(\go)  =\frac{1}{256}(150(1+e^{-i\go})-25(e^{i\go}+e^{-2i\go})+3(e^{2i\go}+e^{-3i\go})),
\]
and
\[
\wh{v}(\go)  =\frac{1}{256}((60+i18\sqrt{15})(1-e^{-i\go})-(25+i6\sqrt{15})(e^{i\go}-e^{-2i\go})+3(e^{2i\go}-e^{-3i\go})).
\]
The filters $u$ and $v$ satisfy $u(k)=u(1-k)$ and $v(k)=-v(1-k)$ for $k\in\Z$; that is, $u$ is symmetric about $1/2$ while $v$ is antisymmetric about $1/2$.
Note that the real-valued filter $v$ in Theorem~\ref{thm:b2D2n2n} defined by $\wh{v}(\go) = 2\wh{a_n^D}(\go/2)\wh{a_n^D}(\go/2+\pi)$ does not have symmetry property.

Define $a = a^{2D}$ as in \eqref{a2D}. Then, the filter $a$ satisfies $a=a^{2D}=a_{6,6}^{2D}$ in \eqref{a2D2n2n} due to $P(x)= (1-x)^{-1/2}+\bo(x^4)$ and $a$ is supported on $[-2,3]^2$.
The canonical high-pass filter $b_1$ of $a$ is given by $\wh{b_1}(\gbo)=e^{-i\go_1}\ol{\wh{a}(\gbo+(\pi,\pi))}$. The filters $\lpf$ and $\hpf_1$ are given by
\[
a=\frac{1}{512}\left[
\begin{matrix} 3&0&0&0&0&3\\ 0&-25&0 &0&-25&0\\ 0&0&150&150&0&0\\
0&0& \boxed{150}&150&0&0\\ 0&-25&0&0&-25&0\\ 3&0
&0&0&0&3\end{matrix} \right]_{[-2,3]^2},\qquad
b_1=\frac{1}{512}\left[ \begin{matrix} -3&0&0&0&0&3\\ 0&25&0 &0&-25&0\\ 0&0&\boxed{-150}&150&0&0\\ 0&0& 150&-150&0&0\\ 0&-25&0&0&25&0\\ 3&0 &0&0&0&-3\end{matrix} \right]_{[-2,3]\times[-3,2]}.
\]
Note that $a=a_{6,6}^{2D}$ is real-valued and $D_4$-symmetric about $\mathbf{c}=(1/2,1/2)$ while $b_1$ has symmetry given by \eqref{sym:b1}.
Define high-pass filters $b_2$ and $b_3$ by \eqref{def:b2b30}. Then, the high-pass filter $b_2$ is supported on $[-2,3]^2$ and is given by
\[
b_2= \frac{1}{512}\left[ \begin {array}{cccccc} 3&0&0&0&0&-3\\ \noalign{\medskip}0&
-25-6\,
i\sqrt {15}&0&0&25+6\,i\sqrt {15}&0\\ \noalign{\medskip}0&0&60+18\,
i\sqrt {15}&-60-18\,i\sqrt {15}&0&0\\ \noalign{\medskip}0&0&\boxed{60+18\,i
\sqrt {15}}&-60-18\,i\sqrt {15}&0&0\\ \noalign{\medskip}0&-25-6\,i\sqrt {
15}&0&0&6\,25+i\sqrt {15}&0\\ \noalign{\medskip}3&0&0&0&0&-3
\end {array} \right]_{[-2,3]^2}.
\]
The canonical high-pass filter $b_3$ of $b_2$ is supported on $[-2,3]\times[-3,2]$ and is given by
\[
b_3=\frac{1}{512}\left[ \begin {array}{cccccc} 3&0&0&0&0&3\\ \noalign{\medskip}0&6\,i
\sqrt {15}-25&0&0&6\,i\sqrt {15}-25&0\\ \noalign{\medskip}0&0&\boxed{60-18\,i
\sqrt {15}}&60-18\,i\sqrt {15}&0&0\\ \noalign{\medskip}0&0&-60+18\,
i\sqrt {15}&-60+18\,i\sqrt {15}&0&0\\ \noalign{\medskip}0&25-6\,i
\sqrt {15}&0&0&25-6\,i\sqrt {15}&0\\ \noalign{\medskip}-3&0&0&0&0&-3
\end {array} \right]_{[-2,3]\times[-3,2]}.
\]
The high-pass filters $b_2$ and $b_3$ are complex-valued and have the following symmetry:
\[
\begin{aligned}
b_2(E(k-\mathbf{c})+\mathbf{c}) &= E_{1,1} b_2(k),\quad \forall k\in\Z^2, E\in D_4^+\qquad \mbox{ with }\quad \mathbf{c}=(1/2,1/2),\\
b_3(E(k-\mathring{\mathbf{c}})+\mathring{\mathbf{c}}) & =E_{2,2} b_3(k),\quad \forall k\in\Z^2, E\in D_4^+\qquad \mbox{ with }\quad \mathring{\mathbf{c}}=(1/2,-1/2),
\end{aligned}
\]
where $E_{i,j}$ is the $(i,j)$-entry of $E$.

The filter bank $\{\lpf; \hpf_1, \hpf_2,\hpf_3\}$ is a double canonical quincunx tight framelet filter bank with
$\vmo(\hpf_1)=6$ and $\vmo(\hpf_2)=\vmo(\hpf_3)=3$.
Let $\phi,\psi_1,\psi_2,\psi_3$ be defined in \eqref{phi:psi} with $M=M_{\sqrt{2}}$, $L=3$ and $a=a^{2D}_{6,6}$. Then $\{\phi;\psi_1,\psi_2,\psi_3\}$ is a tight $M_{\sqrt{2}}$-framelet in $L_2(\R^2)$ such that $\phi,\psi_1$ have symmetry property as in \eqref{phi:sym}, \eqref{psi1:sym}. $\psi_2,\psi_3$ are of complex value 
and have the following symmetry property:
\[
\begin{aligned}
\psi_2(E(\cdot-\mathbf{c}_2)+\mathbf{c}_2)&=[MEM^{-1}]_{1,1}\psi_2,
\\
\psi_3(E(\cdot-\mathbf{c}_3)+\mathbf{c}_3)&=[MEM^{-1}]_{2,2}\psi_3,
\end{aligned}
\qquad \forall E\in D_4^-.
\]
where $\mathbf{c}_2 =(3/2,1/2)$ and $\mathbf{c}_3=(1,1)$.

}\end{example}

\section{Multiple canonical quincunx tight framelet filter banks with symmetry}
\label{sec:multiple-canonical}

\setcounter{equation}{0}

In this section we study symmetric multiple canonical quincunx tight framelet filter banks derived from one-dimensional filters.

As discussed in Section~\ref{sec:intro}, for every $d\times d$ dilation matrix $M$, compactly supported tight $M$-framelets $\{\phi; \psi_1,\ldots, \psi_L\}$ with arbitrarily high vanishing moments and smoothness can be easily constructed (e.g. \cite[Theorem~1.1]{Han:jcam:2003}) but at the cost of large number $L$ of wavelet/framelet functions. The key idea to construct such and similar compactly supported tight $M$-framelets in \cite{GRon:pams:1998,Han:laa:2002,Han:jcam:2003,Han:mmnp:2014,RS_1998} is to use the almost separable low-pass filters in \eqref{separablefilter}. For example,
for two one-dimensional tight $2$-framelet filter banks $\{\hpf_{0}; \hpf_{1}, \ldots, \hpf_{J}\}$ and $\{u_{0}; u_{1}, \ldots, u_{L}\}$ one can trivially verify  (see \cite[Lemma~3.2]{Han:jcam:2003} and \cite{RS_1998}) that
\begin{equation}\label{tensor}
\{b_{j}\otimes u_{k}\; : \; 0\le j \le J, 0\le k\le L\},
\end{equation}
is a quincunx tight framelet filter bank derived from the separable low-pass filter $\hpf_0\otimes u_0$, where $\wh{\hpf_{j}\otimes u_{k}}(\go_1,\go_2):=
\wh{\hpf_j}(\go_1)\wh{u_k}(\go_2)$. Moreover, every one-dimensional tight $2$-framelet filter bank $\{\hpf_{0}; \hpf_{1}, \ldots, \hpf_{J}\}$ is automatically a quincunx tight framelet filter bank by identifying $\Z$ with either $\Z\times \{0\}$ or $\{0\}\times \Z$ so that a one-dimensional filter can be regarded as a two-dimensional filter (\cite{Han:laa:2002}).
Such tight framelet filter banks are particular instances of the tight framelet filter banks constructed via the projection method in \cite{Han:mmnp:2014}. In fact, one can directly check that if $\{\hpf_{0}; \hpf_{1}, \ldots, \hpf_{J}\}$ is a one-dimensional tight $2$-framelet filter bank and if the filters $u_{0}, u_{1}, \ldots, u_{L}$ satisfy
\begin{equation}\label{sos:1}
|\wh{u_0}(\go)|^2+|\wh{u_1}(\go)|^2+\cdots+|\wh{u_L}(\go)|^2=1,
\end{equation}
then the filter bank in \eqref{tensor} is still a quincunx tight framelet filter bank. Note that \eqref{sos:1} is weaker than requiring $\{u_{0}; u_{1}, \ldots, u_{L}\}$ to be a tight $2$-framelet filter bank. For every pair of finitely supported low-pass filters $\hpf_0$ and $u_0$ satisfying $|\wh{\hpf_0}(\go)|^2+|\wh{\hpf_0}(\go+\pi)|^2\le 1$ and $|\wh{u_0}(\go)|^2\le 1$, one can always construct
(\cite{Daubechies2003}) a finitely supported tight $2$-framelet filter bank $\{\hpf_{0}; \hpf_{1}, \hpf_{2}\}$, and by Fej\'er-Riesz lemma, there always exists a finitely supported filter $u_1$ such that \eqref{sos:1} holds with $L=1$. Consequently, the quincunx tight framelet filter bank in \eqref{tensor} with $J=2$ and $L=1$ has only five high-pass filters derived from the given low-pass $b_0\otimes u_0$, and $\{\hpf_{0}; \hpf_{1}, \hpf_{2}\}$ is a quincunx tight framelet filter bank with only two high-pass filters.
However, such quincunx tight framelet filter banks often lack symmetry and are not necessarily a multiple canonical quincunx tight framelet filter bank.
By modifying \eqref{tensor} slightly,
we next show that multiple canonical quincunx tight frame filter banks can be easily obtained from one-dimensional tight framelet filter banks as long as $\{\hpf_{0}; \hpf_{1}, \ldots, \hpf_{J}\}$ has the multiple canonical property.

\begin{theo}\label{thm:multiple-canonical}
Let $s,L$ be positive integers. Suppose that $\{\hpf_{0}; \hpf_{1}, \ldots, \hpf_{2s-1}\}$ is a one-dimensional finitely supported $s$-multiple canonical tight $2$-framelet filter bank having the canonical property: $\wh{\hpf_{2j+1}}(\go) = e^{-i\go}\ol{\wh{\hpf_{2j}}(\go+\pi)}$, $j=0,\ldots,s-1$.
Suppose that $u_0,u_1,\ldots, u_L\in l_0(\Z)$ are one-dimensional filters satisfying \eqref{sos:1}. Then
$\{\hpf^{2D}_{j,k}\, :\, j=0,\ldots,2s-1; k=0,\ldots, L\}$ is an $s(L+1)$-multiple canonical quincunx tight framelet filter bank, where
\begin{equation}\label{multiple:2D}
\wh{\hpf^{2D}_{2j,k}}(\gbo):=\wh{\hpf_{2j}}(\go_1)\wh{u_{k}}(\go_2), \qquad
\wh{\hpf^{2D}_{2j+1,k}}(\gbo):=\wh{\hpf_{2j+1}}(\go_1)\ol{\wh{u_{k}}(\go_2+\pi)},\qquad
\gbo=(\go_1,\go_2)\in\RR^2.
\end{equation}
for $j=0,\ldots, s-1$ and  $k=0,\ldots, L$.
\end{theo}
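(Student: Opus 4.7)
The plan is to verify two things: that the pairs $(\hpf^{2D}_{2\ell,k}, \hpf^{2D}_{2\ell+1,k})$ for $\ell=0,\ldots,s-1$ and $k=0,\ldots,L$ satisfy the canonical relation analogous to \eqref{canonical:1} and \eqref{canonical:2}, and that the resulting 2D filter bank satisfies the sum-of-squares characterization \eqref{sumofsquare} with $s$ replaced by $s(L+1)$. Once both are established, the construction is an $s(L+1)$-multiple canonical quincunx tight framelet filter bank by the very characterization preceding \eqref{sumofsquare}. Note that $\hpf^{2D}_{0,0}$ is the designated low-pass filter, which is consistent with $\wh{\hpf^{2D}_{0,0}}(0)=\wh{\hpf_0}(0)\wh{u_0}(0)=1$; the remaining $2s(L+1)-1$ indices can be linearly ordered so that each $2\ell$-th slot pairs with its $2\ell+1$ mate (same $k$).

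For the canonical relations, I would compute directly for each admissible $\ell$ and $k$,
\begin{align*}
\wh{\hpf^{2D}_{2\ell+1,k}}(\gbo) &= \wh{\hpf_{2\ell+1}}(\omega_1)\,\ol{\wh{u_k}(\omega_2+\pi)} = e^{-i\omega_1}\,\ol{\wh{\hpf_{2\ell}}(\omega_1+\pi)}\cdot\ol{\wh{u_k}(\omega_2+\pi)} \\
&= e^{-i\omega_1}\,\ol{\wh{\hpf^{2D}_{2\ell,k}}(\gbo+(\pi,\pi))},
\end{align*}
using the 1D canonical identity $\wh{\hpf_{2\ell+1}}(\omega)=e^{-i\omega}\ol{\wh{\hpf_{2\ell}}(\omega+\pi)}$ in the middle. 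Since $(1,0)\in\Z^2\setminus M_{\sqrt{2}}\Z^2$ and $(\pi,\pi)$ represents the unique nontrivial element of $2\pi\Omega_{M_{\sqrt{2}}}\setminus\{0\}$, this is precisely the canonical relation required.

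For the tight framelet property, I would invoke the separability
\[
|\wh{\hpf^{2D}_{2\ell,k}}(\gbo)|^2=|\wh{\hpf_{2\ell}}(\omega_1)|^2|\wh{u_k}(\omega_2)|^2,\quad |\wh{\hpf^{2D}_{2\ell,k}}(\gbo+(\pi,\pi))|^2=|\wh{\hpf_{2\ell}}(\omega_1+\pi)|^2|\wh{u_k}(\omega_2+\pi)|^2,
\]
and factor the double sum over $\ell\in\{0,\ldots,s-1\}$ and $k\in\{0,\ldots,L\}$ as a product of an $\omega_1$-part and an $\omega_2$-part. The $\omega_2$-parts collapse to $1$ by \eqref{sos:1} evaluated at $\omega_2$ and at $\omega_2+\pi$ respectively, leaving $\sum_{\ell=0}^{s-1}[|\wh{\hpf_{2\ell}}(\omega_1)|^2+|\wh{\hpf_{2\ell}}(\omega_1+\pi)|^2]$. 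The 1D canonical relation gives $|\wh{\hpf_{2\ell+1}}(\omega_1)|^2=|\wh{\hpf_{2\ell}}(\omega_1+\pi)|^2$, so that this last $\omega_1$-sum is the full tight $2$-framelet sum $\sum_{j=0}^{2s-1}|\wh{\hpf_j}(\omega_1)|^2=1$. Subtracting the $(\ell,k)=(0,0)$ term recovers the required RHS $1-|\wh{\hpf^{2D}_{0,0}}(\gbo)|^2-|\wh{\hpf^{2D}_{0,0}}(\gbo+(\pi,\pi))|^2$, establishing \eqref{sumofsquare} for the 2D system.

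There is no substantial obstacle; the proof is a clean separability computation once the pairing is chosen correctly. The one subtle design point worth highlighting is the presence of the conjugated and shifted factor $\ol{\wh{u_k}(\omega_2+\pi)}$ in the odd-indexed 2D filters rather than $\wh{u_k}(\omega_2)$: this is exactly what makes the $\omega_2$-dependence of $\wh{\hpf^{2D}_{2\ell+1,k}}(\gbo)$ coincide with that of $\ol{\wh{\hpf^{2D}_{2\ell,k}}(\gbo+(\pi,\pi))}$, so that the 1D canonical identity transplants cleanly to two dimensions while still yielding $|\wh{u_k}(\omega_2+\pi)|^2$ in the sum-of-squares computation so that \eqref{sos:1} can be applied at the shifted argument.
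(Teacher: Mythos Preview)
Your proof is correct and follows essentially the same approach as the paper: verify the canonical pairing directly from the 1D canonical identity, then exploit the separable structure together with \eqref{sos:1} and the 1D tight-framelet identity $\sum_{j=0}^{2s-1}|\wh{\hpf_j}(\omega_1)|^2=1$ to collapse the 2D sum to $1$. The only organizational difference is that you invoke the characterization \eqref{sumofsquare} (so the cross-term condition \eqref{tffb:0} comes for free from the canonical relations), whereas the paper verifies \eqref{tffb:1} and \eqref{tffb:0} separately, leaving the latter as ``proved similarly''; the underlying computation is the same once one notes $|\wh{\hpf^{2D}_{2\ell+1,k}}(\gbo)|^2=|\wh{\hpf^{2D}_{2\ell,k}}(\gbo+(\pi,\pi))|^2$.
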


\begin{proof}
By the canonical property in \eqref{canonical:1} and \eqref{canonical:2}, it follows directly from the definition of $\hpf^{2D}_{j,k}$ that the two-dimensional filter bank $\{\hpf^{2D}_{j,k}\, :\, j=0,\ldots,2s-1; k=0,\ldots, L\}$ has the desired $s(L+1)$-multiple canonical property.
On the other hand, we have
\[
\begin{aligned}
\sum_{j=0}^{2s-1}\sum_{k=0}^{L} |\wh{\hpf^{2D}_{j,k}}(\gbo)|^2
&=
\sum_{j=0}^{s-1}\sum_{k=0}^{L}|\wh{\hpf^{2D}_{2j,k}}(\gbo)|^2
+\sum_{j=0}^{s-1}\sum_{k=0}^{L}|\wh{\hpf^{2D}_{2j+1,k}}(\gbo)|^2\\
&=\sum_{j=0}^{s-1} |\wh{\hpf_{2j}}(\go_1)|^2\sum_{k=0}^L |\wh{u_k}(\go_2)|^2
+\sum_{j=0}^{s-2}|\wh{\hpf_{2j+1}}(\go_1)|^2\sum_{k=0}^L |\wh{u_k}(\go_2+\pi)|^2
=\sum_{\ell=0}^{2s-1}|\wh{\hpf_{\ell}}(\go_1)|^2=1.
\end{aligned}
\]
The fact
$$
\sum_{j=0}^{2s-1}\sum_{k=0}^{L} \wh{\hpf^{2D}_{j,k}}(\gbo)\overline{\wh{\hpf^{2D}_{j,k}}(\gbo+(\pi, \pi))}=0
$$
can be proved similarly. Thus $\{\hpf^{2D}_{j,k}\, :\, j=0,\ldots,2s-1; k=0,\ldots, L\}$ is a quincunx tight framelet filter bank.
\end{proof}

Before applying Theorem~\ref{thm:multiple-canonical} to construct multiple canonical quincunx tight framelets, let us look at the smoothness exponent of the low-pass filter $b_0\otimes u_0$ in Theorem~\ref{thm:multiple-canonical}.

\begin{theo}\label{thm:sm}
Let $1\le p\le \infty$. The following statements hold.
\begin{enumerate}
\item[{\rm (i)}] For $a\in l_0(\Z)$ with $\wh{a}(0)=1$, $\sr(a,M_{\sqrt{2}})=\sr(a,N_{\sqrt{2}})=\sr(a,2)$; if $\sm_p(a,2)\ge 0$, then $\sm_p(a,M_{\sqrt{2}})=\sm_p(a,2)$,
    where $a$ is also regarded as a 2D filter by identifying $\Z$ with $\Z\times \{0\}$ in $\Z^2$.
\item[{\rm (ii)}] For $u,v\in l_0(\Z^d)$ with $\wh{u}(0)=\wh{v}(0)=1$ and for any $d\times d$ dilation matrix $M$, $\sr(u*v,M)\ge \sr(u,M)+\sr(v,M)$ and $\sm(u*v,M)\ge \sm_\infty(u*v,M)\ge \sm(u,M)+\sm(v,M)$, where $\wh{u*v}(\gbo):=\wh{u}(\gbo)\wh{v}(\gbo)$.
\item[{\rm (iii)}] For $u,v\in l_0(\Z)$ with $\wh{u}(0)=\wh{v}(0)=1$, $\sr(u\otimes v,M_{\sqrt{2}})=\sr(u\otimes v,N_{\sqrt{2}})\ge \sr(u,2)+\sr(v,2)$; if $\sm(u,2)\ge 0$ and $\sm(v,2)\ge 0$, then $\sm(u\otimes v,M_{\sqrt{2}})\ge\sm_\infty(u\otimes v,M_{\sqrt{2}}) \ge \sm(u,2)+\sm(v,2)$.
\end{enumerate}
\end{theo}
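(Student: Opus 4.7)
The plan is to treat the three claims in order, since (iii) reduces cleanly to (i) and (ii) via a tensor--convolution factorization.

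For item (i), the sum-rule identity $\sr(a,M_{\sqrt{2}})=\sr(a,N_{\sqrt{2}})=\sr(a,2)$ is almost formal: viewed as a 2D filter supported on $\Z\times\{0\}$, $a$ has a symbol $\wh{a}(\omega_1,\omega_2)$ depending only on $\omega_1$, and $\Omega_{M_{\sqrt{2}}}=\Omega_{N_{\sqrt{2}}}$ both consist of $\{(0,0),(1/2,1/2)\}$; hence $\wh{a}(\gbo+(\pi,\pi))=\wh{a}(\omega_1+\pi)$ is $\bo(\|\gbo\|^m)$ as $\gbo\to 0$ if and only if it is $\bo(|\omega_1|^m)$ as $\omega_1\to 0$, since the expression is $\omega_2$-independent. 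For the equality $\sm_p(a,M_{\sqrt{2}})=\sm_p(a,2)$, the crucial observation is $M_{\sqrt{2}}^2=2I_2$. Iterating the subdivision operator twice gives $\mathcal{S}_{a,M_{\sqrt{2}}}^{2}=\mathcal{S}_{a^{[2]},2I_2}$, where $\wh{a^{[2]}}(\gbo):=\wh{a}(\gbo)\wh{a}(M_{\sqrt{2}}^\tp\gbo)=\wh{a}(\omega_1)\wh{a}(\omega_1+\omega_2)$; passing to limits along the subsequence $n=2k$ then yields $\rho_m(a,M_{\sqrt{2}})_p^2=\rho_m(a^{[2]},2I_2)_p$ with $m=\sr(a,M_{\sqrt{2}})$, and the formula $\sm_p(\cdot)=d/p-d\log_{|\det(\cdot)|}\rho$ implies $\sm_p(a,M_{\sqrt{2}})=\sm_p(a^{[2]},2I_2)$. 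In space $a^{[2]}(k_1,k_2)=a(k_2)\,a(k_1-k_2)$, so the unimodular change $(k_1,k_2)\mapsto(k_1-k_2,k_2)$, which commutes with $2I_2$, converts $a^{[2]}$ into the tensor product $a\otimes a$. Invariance of $\sm_p$ under such coordinate relabelings, together with the standard separable identity $\sm_p(a\otimes a,2I_2)=\sm_p(a,2)$ that holds when $\sm_p(a,2)\ge 0$, closes the argument.

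For item (ii), both bounds exploit the Fourier multiplicativity $\wh{u\ast v}=\wh u\wh v$. The sum-rule claim is immediate: at each aliasing point $\gbo+2\pi\xi$ with $\xi\in\Omega_M\setminus\{0\}$, the two factors vanish to orders $\sr(u,M)$ and $\sr(v,M)$, so their product vanishes to the sum. For smoothness, the identity $\mathcal{S}_{u\ast v,M}^n\delta=|\det(M)|^{-n}(\mathcal{S}_{u,M}^n\delta)\ast(\mathcal{S}_{v,M}^n\delta)$, combined with the factorization $\nabla^\mu=\nabla^{\mu_1}\nabla^{\mu_2}$ for $|\mu_1|=\sr(u,M)$, $|\mu_2|=\sr(v,M)$, and the Cauchy--Schwarz bound $\|f\ast g\|_{l_\infty(\Z^d)}\le\|f\|_{l_2(\Z^d)}\|g\|_{l_2(\Z^d)}$, yields
\[
\rho_{\sr(u,M)+\sr(v,M)}(u\ast v,M)_\infty\le|\det(M)|^{-1}\rho_{\sr(u,M)}(u,M)_2\rho_{\sr(v,M)}(v,M)_2.
\]
Inserting this into the defining formula for $\sm_p$ translates precisely into $\sm_\infty(u\ast v,M)\ge\sm(u,M)+\sm(v,M)$. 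The remaining inequality $\sm(u\ast v,M)\ge\sm_\infty(u\ast v,M)$ follows from a support bound: $\nabla^\mu\mathcal{S}_{a,M}^n\delta$ is supported in a cube of edge length $O(|\det(M)|^{n/d})$, so $\|\cdot\|_{l_2}\le C|\det(M)|^{n/2}\|\cdot\|_{l_\infty}$, which after taking $n$-th roots gives $\rho_m(\cdot,M)_2\le|\det(M)|^{1/2}\rho_m(\cdot,M)_\infty$ and hence $\sm(\cdot,M)\ge\sm_\infty(\cdot,M)$.

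For item (iii), I factor $u\otimes v=(u\otimes\delta)\ast(\delta\otimes v)$, where the two convolution factors are the 1D filters $u,v$ embedded into $\Z\times\{0\}$ and $\{0\}\times\Z$ respectively; since both $M_{\sqrt{2}}\Z^2$ and its unique nonzero coset are preserved by the swap of coordinates, a symmetry reduction brings the $\{0\}\times\Z$ embedding into the setting of (i). Applying (ii) then gives
\[
\sm(u\otimes v,M_{\sqrt{2}})\ge\sm_\infty(u\otimes v,M_{\sqrt{2}})\ge\sm(u\otimes\delta,M_{\sqrt{2}})+\sm(\delta\otimes v,M_{\sqrt{2}})=\sm(u,2)+\sm(v,2),
\]
together with the analogous sum-rule inequality $\sr(u\otimes v,M_{\sqrt{2}})\ge\sr(u,2)+\sr(v,2)$; the $N_{\sqrt{2}}$-version of the sum-rule identity follows since $M_{\sqrt{2}}\Z^2=N_{\sqrt{2}}\Z^2$. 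The main obstacle I expect is the separable smoothness identity $\sm_p(a\otimes a,2I_2)=\sm_p(a,2)$ and the unimodular invariance of $\sm_p$ invoked in (i); both are standard but demand careful bookkeeping of the bases in which the differences $\nabla^\mu$ are taken, most cleanly handled through the Fourier characterization of $\rho_m$ via the transition operator.
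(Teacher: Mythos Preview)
Your proposal is correct. Parts (ii) and (iii) match the paper's proof essentially line by line: the paper also uses the convolution identity $\mathcal{S}_{u*v,M}^n\delta=|\det(M)|^{-n}(\mathcal{S}_{u,M}^n\delta)*(\mathcal{S}_{v,M}^n\delta)$, splits $\nabla^\eta$ with $|\eta|=m_1+m_2$ as $\nabla^\mu\nabla^\nu$, applies Cauchy--Schwarz to bound the $l_\infty$ norm by a product of $l_2$ norms, and then quotes $\sm_\infty\le\sm_2$ (which you justify via the support count). For (iii), the paper likewise writes $u\otimes v=\tilde u*\tilde v$ with $\tilde u=u\otimes\delta$, $\tilde v=\delta\otimes v$ and invokes (i) for each factor, using the coordinate swap for $\tilde v$ exactly as you do.

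For part (i), you take a slightly different but equivalent route. The paper decomposes $\mathcal{S}_{a,M_{\sqrt2}}^n\delta$ directly as a product $[\mathcal{S}_{a,2}^{n_1}\delta](j-k)\,[\mathcal{S}_{a,2}^{n_2}\delta](k)$ with $n_1=\lfloor(n+1)/2\rfloor$, computes $\|\nabla_{e_1}^{\mu_1}\nabla_{e_1+e_2}^{\mu_2}\mathcal{S}_{a,M_{\sqrt2}}^n\delta\|_{l_p}$ as a product of one-dimensional norms, and then uses the known identity $\rho_j(a,2)_p=\max(2^{1/p-j},\rho_m(a,2)_p)$ for $0\le j\le m$ to show both inequalities, the condition $\sm_p(a,2)\ge 0$ entering precisely in the case $\rho_j=\rho_{m-j}=\rho_m$. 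Your argument instead passes to the subsequence $n=2k$ to reach $a^{[2]}$ with dilation $2I_2$, then applies a unimodular change to land on $a\otimes a$. This is cleaner structurally, but the two lemmas you flag as obstacles---unimodular invariance of $\sm_p(\cdot,2I_2)$ and the separable identity $\sm_p(a\otimes a,2I_2)=\sm_p(a,2)$ under $\sm_p(a,2)\ge 0$---unwind to exactly the paper's computations: the first is the basis-change $\nabla_{e_2}=\nabla_{e_1+e_2}-[\nabla_{e_1}](\cdot-e_2)$ the paper uses explicitly, and the second is the case analysis $\rho_j\rho_{m-j}\le 2^{1/p}\rho_m$ the paper carries out. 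So the content is the same; your packaging trades a direct norm computation for two reusable lemmas.
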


The proof to Theorem~\ref{thm:sm} is given in Appendix~\ref{sec:proofs}. For item (i), $\sm(a,N_{\sqrt{2}})=\sm(a,2)$ often fails. In fact, as Daubechies showed in \cite{Dau} that $\lim_{n\to \infty} \sm(a^I_{2n},2)=\infty$, while numerical calculation in \cite{CD_1993} observed that $\lim_{n\to \infty} \sm(a^I_{2n},N_{\sqrt{2}})=0$. Moreover, as we shall see in the proof of Theorem~\ref{thm:sm} in Appendix~\ref{sec:proofs}, the condition $\sm_p(a,2)\ge 0$ in item (i) cannot be removed to guarantee $\sm_p(a,M_{\sqrt{2}})=\sm_p(a,2)$.

Let $a_n^D$ with $n\ge 1$ be the Daubechies orthogonal filter with $2n$-nonzero coefficients.
Let $b_0=a_n^D$, $a_0=a_m^D$ and define $b_1$ and $a_1$ by
$$
\wh b_1(\go)=e^{-i\go_1}\overline{\wh{a_n^D}(\go+\pi)}, \qquad  \wh a_1(\go)=\wh{ a_m^D}(\go+\pi), \; \go \in \RR,
$$
then we have double canonical quincunx tight framelet filter banks based on the Daubechies orthogonal filters as summarized in the following corollary.

\begin{cor}\label{theo:double_cano_Daub}
Let $a_n^D$ and $a_m^D$ be the Daubechies orthogonal filters. Define
\[
\begin{aligned}
&\wh{\hpf^{2D}_{0}}(\gbo):=\wh{a_n^D}(\go_1)\wh{a_m^D}(\go_2),
\qquad  &\wh{\hpf^{2D}_{1}}(\gbo):=e^{-i\go_1} \overline{\wh{\hpf^{2D}_{0}}(\gbo+(\pi, \pi))}, \\
&\wh{\hpf^{2D}_{2}}(\gbo):=\wh{a_n^D}(\go_1)\wh{a_m^D}(\go_2+\pi),
 \qquad &\wh{\hpf^{2D}_{3}}(\gbo):=e^{-i\go_1} \overline{\wh{\hpf^{2D}_{2}}(\gbo+(\pi, \pi))}.
\end{aligned}
\]
Then $\{\hpf^{2D}_{0}; \hpf^{2D}_{1}, \hpf^{2D}_{2}, \hpf^{2D}_{3}\}$ is a double canonical quincunx tight framelet filter bank such that $\min(\vmo(\hpf_1^{2D})$,$\vmo(\hpf_2^{2D})$, $\vmo(\hpf_3^{2D}))\ge \min(m, n)$ and
$\sm(b_0^{2D},M_{\sqrt{2}})\ge \sm(\lpf_n^{D},2)+\sm(\lpf_m^{D},2)\to\infty$ as $m+n\to \infty$.
\end{cor}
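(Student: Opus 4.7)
The plan is to realize $\{\hpf^{2D}_{0}; \hpf^{2D}_{1}, \hpf^{2D}_{2}, \hpf^{2D}_{3}\}$ as a particular instance of Theorem~\ref{thm:multiple-canonical}, applied with $s=1$ and $L=1$. First I would take the one-dimensional inputs $\hpf_0 := a_n^D$ with canonical partner $\wh{\hpf_1}(\go) := e^{-i\go}\overline{\wh{a_n^D}(\go+\pi)}$; then $\{\hpf_0; \hpf_1\}$ is the Daubechies orthonormal $2$-wavelet filter bank, hence a $1$-multiple canonical tight $2$-framelet filter bank. On the second coordinate I would set $u_0 := a_m^D$ and $\wh{u_1}(\go) := \wh{a_m^D}(\go+\pi)$; orthonormality of $a_m^D$ immediately gives $|\wh{u_0}(\go)|^2 + |\wh{u_1}(\go)|^2 = 1$, which is hypothesis~\eqref{sos:1}. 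A direct substitution into formula~\eqref{multiple:2D}, together with the $2\pi$-periodicity of $\wh{a_m^D}$, shows that $\hpf^{2D}_{j,k}$ with $(j,k)\in\{0,1\}^2$ coincides with $\hpf^{2D}_0, \hpf^{2D}_2, \hpf^{2D}_1, \hpf^{2D}_3$, respectively. Theorem~\ref{thm:multiple-canonical} therefore delivers the desired $2$-multiple (i.e., double) canonical quincunx tight framelet filter bank.

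For the vanishing-moment estimate, I would exploit the factorized form of each $\hpf^{2D}_\ell$ together with the $n$ and $m$ sum-rule conditions $\wh{a_n^D}(\go+\pi)=\bo(|\go|^n)$ and $\wh{a_m^D}(\go+\pi)=\bo(|\go|^m)$ at the origin. Reading off the four filters: $\wh{\hpf^{2D}_1}$ carries both vanishing factors, so $\vmo(\hpf^{2D}_1)\ge m+n$; $\wh{\hpf^{2D}_2}$ carries only the $\go_2$-factor $\wh{a_m^D}(\go_2+\pi)$, so $\vmo(\hpf^{2D}_2)\ge m$; and $\wh{\hpf^{2D}_3}$ carries only the $\go_1$-factor $\wh{a_n^D}(\go_1+\pi)$, so $\vmo(\hpf^{2D}_3)\ge n$. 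Taking the minimum yields $\min(\vmo(\hpf^{2D}_1),\vmo(\hpf^{2D}_2),\vmo(\hpf^{2D}_3))\ge \min(m,n)$.

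For the smoothness bound, since $\hpf^{2D}_0 = a_n^D\otimes a_m^D$, Theorem~\ref{thm:sm}(iii) applies directly: the Daubechies orthonormal filters satisfy $\sm(a_n^D,2)\ge 0$ for every $n\ge 1$ (indeed even the Haar case has $L_2$-Sobolev smoothness $1/2$), so the hypothesis of Theorem~\ref{thm:sm}(iii) is met and
\[
\sm(\hpf^{2D}_0,M_{\sqrt{2}}) \;\ge\; \sm(a_n^D,2)+\sm(a_m^D,2).
\]
Combined with the classical Daubechies result $\sm(a_n^D,2)\to +\infty$ as $n\to\infty$, this gives the asserted divergence as $m+n\to\infty$.

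The entire argument is essentially a clean application of Theorems~\ref{thm:multiple-canonical} and~\ref{thm:sm}; the only mildly delicate point is the identification of the four filters $\hpf^{2D}_{j,k}$ of \eqref{multiple:2D} with the four filters $\hpf^{2D}_\ell$ of the corollary, which requires careful bookkeeping of the outer conjugation and the shift-by-$\pi$ built into $\wh{u_1}$.
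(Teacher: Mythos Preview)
Your proposal is correct and follows exactly the route the paper intends: the paragraph preceding the corollary sets up precisely your choice $b_0=a_n^D$, $\wh{b_1}(\go)=e^{-i\go}\overline{\wh{a_n^D}(\go+\pi)}$, $u_0=a_m^D$, $\wh{u_1}(\go)=\wh{a_m^D}(\go+\pi)$, and then invokes Theorem~\ref{thm:multiple-canonical} with $s=1$, $L=1$. Your additional verification of the vanishing-moment counts and your appeal to Theorem~\ref{thm:sm}(iii) for the smoothness bound are the natural details the paper leaves implicit.
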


The Daubechies orthogonal filter-based double canonical quincunx tight framelet filter bank $\{\hpf^{2D}_{0};  \hpf^{2D}_{1}, \hpf^{2D}_{2}, \hpf^{2D}_{3}\}$ does not have any symmetry.
In this paper we are interested in multiple/double canonical quincunx tight framelet filter banks with symmetry.  We immediately conclude from Theorem~\ref{thm:multiple-canonical} that all nontrivial symmetric real-valued canonical qunicunx tight framelet filter banks of the form in \eqref{multiple:2D} must have multiplicity at least $6$. In fact, if we require both $\{b_0;b_1,\ldots,b_{2s-1}\}$ and $\{u_0;u_1,\ldots,u_L\}$ in Theorem~\ref{thm:multiple-canonical} to be of real-valued filters with symmetry, then $s\ge2$ and $L\ge2$ due to the well-known fact that  except the Haar type filter banks, there is no real-valued symmetric  dyadic orthonormal wavelet filter bank. Consequently,  the  multiplicity of a nontrivial canonical
quincunx tight framelet filter bank with real-valued filters and with symmetry satisfies $s(L+1)\ge6$. That is, $\{b_0;b_1,\ldots,b_{2s-1}\}$ need to be at least double canonical tight $2$-framelet filter bank $\{a,b_1,b_2,b_3\}$ while $\{u_0;u_1,\ldots,u_L\}$ need to be at least $\{u_0;u_1,u_2\}$.

We now discuss double canonical tight $2$-framelet filter bank $\{\lpf; \hpf_1,\hpf_2, \hpf_3\}$ with symmetry satisfying
\begin{equation}\label{tffb:2:1d}
\wh{\hpf_1}(\go)=e^{-i\go} \ol{\wh{\lpf}(\go+\pi)},\qquad
\wh{\hpf_3}(\go)=e^{-i\go} \ol{\wh{\hpf_2}(\go+\pi)}.
\end{equation}
It follows trivially from the above relations in \eqref{tffb:2:1d} that
\[
\ol{\wh{\lpf}(\go)}\wh{\lpf}(\go+\pi)
+\ol{\wh{\hpf_1}(\go)}\wh{\hpf_1}(\go+\pi)=0,\qquad
\ol{\wh{\hpf_2}(\go)}\wh{\hpf_2}(\go+\pi)
+\ol{\wh{\hpf_3}(\go)}\wh{\hpf_3}(\go+\pi)=0.
\]
Consequently, every double canonical tight $2$-framelet filter bank with symmetry is a special case of type I symmetric tight $2$-framelet filter banks $\{\lpf; \hpf_1, \hpf_2, \hpf_3\}$ discussed in \cite{Han:acha:2014}. Moreover, Algorithm~1 in \cite{Han:acha:2014} can be used to find all possible such type I symmetric tight $2$-framelet filter banks $\{\lpf; \hpf_1, \hpf_2, \hpf_3\}$ from any given symmetric low-pass filter. For simplicity, we only discuss real-valued filters here. As a special case of \cite[Algorithm~1]{Han:acha:2014}, the following result constructs all possible double canonical tight $2$-framelet filter banks with symmetry.

\begin{theo}\label{thm:double:2}
Let $\lpf\in l_0(\Z)$ be a real-valued low-pass filter having symmetry and satisfying
\begin{equation}\label{cond:a}
\wh{v}(2\go):=1-|\wh{\lpf}(\go)|^2-|\wh{\lpf}(\go+\pi)|^2\ge0, \qquad \forall\; \go\in \R.
\end{equation}
Define a finitely supported real-valued high-pass filter $\hpf_2$ by either of the following two cases:
\begin{enumerate}
\item Obtain a real-valued filter $u\in l_0(\Z)$ through Fej\'{e}r-Riesz lemma by $|\wh{u}(\go)|^2=\wh{v}(\xi)$ and define
\[
\wh{\hpf_2}(\go):=(\wh{u}(2\go)+\epsilon e^{-i\go c_b} \ol{\wh{u}(2\go)})/2\qquad \mbox{with}\quad
\epsilon \in \{-1,1\} \; \mbox{and $c_b$ being an odd integer}.
\]
\item If in addition multiplicity of any zero inside $(0,1)$ of the Laurent polynomial $\sum_{k\in \Z} v(k) z^k$ is even, then one can always construct finitely supported real-valued filters $u_1, u_2$ with symmetry such that
\[
|\wh{u_1}(\go)|^2+|\wh{u_2}(\go)|^2=\wh{v}(\xi) \quad \mbox{with}\quad \frac{\sym \wh{u_1}(\go)}{\sym \wh{u_2}(\go)}=e^{-i\go},
\]
where $\sym \wh{u_1}(\go):=\frac{\wh{u_1}(\go)}{\wh{u_1}(-\go)}$ records the symmetry type of the filter $u_1$. Define
\[
\wh{\hpf_2(\go)}:=(\wh{u_1}(2\go)+e^{-i\go} \wh{u_2}(\go))/\sqrt{2}.
\]
\end{enumerate}
Define the filters $\hpf_1$ and $\hpf_3$ as in \eqref{tffb:2:1d}.
Then $\{\lpf;\hpf_1,\hpf_2,\hpf_3\}$ is a double canonical tight $2$-framelet filter bank such that all the filters have symmetry.
Moreover, all finitely supported canonical tight $2$-framelet filter banks with symmetry can be obtained by the above procedure.
\end{theo}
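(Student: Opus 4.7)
My plan is to observe that the double canonical constraints \eqref{tffb:2:1d} collapse the tight framelet conditions to a single sum-of-squares identity on $\hpf_2$, and then verify both Case~1 and Case~2 produce that identity while preserving symmetry. Concretely, the relation $\wh{\hpf_1}(\go)=e^{-i\go}\ol{\wh{\lpf}(\go+\pi)}$ gives $\ol{\wh{\lpf}(\go)}\wh{\lpf}(\go+\pi)+\ol{\wh{\hpf_1}(\go)}\wh{\hpf_1}(\go+\pi)=0$ at once (the two terms telescope after a direct computation using $e^{-i(\go+\pi)}=-e^{-i\go}$), and likewise $\wh{\hpf_3}(\go)=e^{-i\go}\ol{\wh{\hpf_2}(\go+\pi)}$ gives $\ol{\wh{\hpf_2}(\go)}\wh{\hpf_2}(\go+\pi)+\ol{\wh{\hpf_3}(\go)}\wh{\hpf_3}(\go+\pi)=0$. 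Thus the analog of \eqref{tffb:0} is automatic, and the analog of \eqref{tffb:1} reduces, using $|\wh{\hpf_1}(\go)|^2=|\wh{\lpf}(\go+\pi)|^2$ and $|\wh{\hpf_3}(\go)|^2=|\wh{\hpf_2}(\go+\pi)|^2$, to the single identity
\[
|\wh{\hpf_2}(\go)|^2+|\wh{\hpf_2}(\go+\pi)|^2=1-|\wh{\lpf}(\go)|^2-|\wh{\lpf}(\go+\pi)|^2=\wh{v}(2\go).
\]

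Next I would check that both constructions of $\hpf_2$ satisfy this reduced identity. For Case~1, a direct expansion yields $|\wh{\hpf_2}(\go)|^2=\tfrac12|\wh{u}(2\go)|^2+\tfrac{\epsilon}{2}\mathrm{Re}\bigl(e^{ic_b\go}\wh{u}(2\go)^2\bigr)$; since $c_b$ is odd, the oscillatory cross-term flips sign under $\go\mapsto\go+\pi$ and cancels, leaving $|\wh{u}(2\go)|^2=\wh{v}(2\go)$. For Case~2, adding $|\wh{\hpf_2}(\go)|^2$ and $|\wh{\hpf_2}(\go+\pi)|^2$ again kills the $e^{\pm i\go}$ cross-terms and produces $|\wh{u_1}(2\go)|^2+|\wh{u_2}(2\go)|^2=\wh{v}(2\go)$. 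Finiteness and well-definedness of $u$ in Case~1 is the standard Fej\'er--Riesz lemma applied to $\wh{v}(\xi)\ge 0$, while Case~2 requires a symmetric splitting, which is precisely where the evenness of the multiplicities of zeros of $\sum_k v(k)z^k$ in $(0,1)$ enters: these interior real zeros must be paired and distributed between $u_1$ and $u_2$ so that the prescribed symmetry types $\sym\wh{u_1}/\sym\wh{u_2}=e^{-i\go}$ can be realized.

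For the symmetry claims I would use reality of $u$ (resp.\ $u_1,u_2$), so that $\ol{\wh{u}(\go)}=\wh{u}(-\go)$, to derive $\wh{\hpf_2}(-\go)=\epsilon e^{ic_b\go}\wh{\hpf_2}(\go)$ in Case~1 and an analogous symmetry identity in Case~2 coming from the compatible symmetry types of $\wh{u_1}$ and $e^{-i\go}\wh{u_2}$. Symmetry of $\lpf$ then transfers to $\hpf_1$ through the canonical formula $\wh{\hpf_1}(\go)=e^{-i\go}\ol{\wh{\lpf}(\go+\pi)}$, and symmetry of $\hpf_2$ transfers to $\hpf_3$ in the same way.

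For the concluding ``moreover'' statement I would argue as the paper itself indicates just before the theorem: the canonical relations \eqref{tffb:2:1d} force $\{\lpf;\hpf_1,\hpf_2,\hpf_3\}$ to be a \emph{type~I symmetric tight $2$-framelet filter bank} in the sense of \cite{Han:acha:2014}, so the full classification given by \cite[Algorithm~1]{Han:acha:2014}, specialized to the two admissible symmetry patterns compatible with \eqref{tffb:2:1d}, exhausts all such real-valued double canonical filter banks and yields exactly Cases~1 and~2. The main obstacle in the proof is Case~2: producing the two filters $u_1,u_2$ with the prescribed symmetry types and satisfying $|\wh{u_1}|^2+|\wh{u_2}|^2=\wh{v}$ is a nontrivial refinement of Fej\'er--Riesz, and its solvability is what dictates the parity hypothesis on the zeros of $\sum_k v(k)z^k$ in $(0,1)$; the remaining steps are essentially algebraic verifications.
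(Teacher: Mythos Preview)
Your proposal is correct and follows the same route the paper intends: the paper does not give a separate proof of this theorem but presents it as a special case of \cite[Algorithm~1]{Han:acha:2014}, after noting that the canonical relations \eqref{tffb:2:1d} automatically force the off-diagonal identity and reduce the problem to the sum-of-squares equation $|\wh{\hpf_2}(\go)|^2+|\wh{\hpf_2}(\go+\pi)|^2=\wh{v}(2\go)$. Your explicit verification of that reduction, the parity-of-$c_b$ cancellation in Case~1, the analogous cancellation in Case~2, and your deferral of the ``moreover'' classification to \cite{Han:acha:2014} are exactly what the paper has in mind; you have simply written out the algebra the paper leaves implicit.
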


The construction of real-value filters $\{u_0;u_1,u_2\}$ satisfying \eqref{sos:1} and having symmetry  has been completely solved in \cite[Theorem~2.7]{Han:acha:2013} and \cite[Lemma~2.4]{HanMo:simaa:2004}.

Now we have the main result in this paper on $6$-multiple canonical quincunx tight framelet filter banks with symmetry and vanishing moments.

\begin{theo}
\label{thm:6multiple}
Let $\lpf\in l_0(\Z)$ be a real-valued low-pass filter satisfying the condition in \eqref{cond:a} such that $\wh{\lpf}(0)=1$ and $\lpf$ has symmetry.
Then we can always construct by Theorem~\ref{thm:double:2} a finitely supported real-valued
double canonical tight $2$-framelet filter bank $\{\lpf; \hpf_1,\hpf_2,\hpf_3\}$ with symmetry and
by \cite[Theorem~2.7]{Han:acha:2013} finitely supported real-valued filters $u_1$ and $u_2$ with symmetry such that
\begin{equation}\label{partitionof1}
|\wh{a}(\go)|^2+|\wh{u_1}(\go)|^2+|\wh{u_2}(\go)|^2=1.
\end{equation}
Define two-dimensional filters $\hpf^{2D}_{\ell,k}$ as in \eqref{multiple:2D} of Theorem~\ref{thm:multiple-canonical} for $\ell=0,\ldots,3$ and $k=0,1,2$ with $\hpf_0:=a$ and $u_0:=a$. Define
$\lpf^{2D}:=\hpf^{2D}_{0,0}$.
Then
\begin{equation}\label{multple:6}
\{\lpf^{2D}; \hpf^{2D}_{1,0},\hpf^{2D}_{2,0},\hpf^{2D}_{3,0},\hpf^{2D}_{0,1},\hpf^{2D}_{1,1},
\hpf^{2D}_{2,1},\hpf^{2D}_{3,1},\hpf^{2D}_{0,2},\hpf^{2D}_{1,2},\hpf^{2D}_{2,2},
\hpf^{2D}_{3,2}\}
\end{equation}
is a $6$-multiple canonical quincunx tight framelet filter bank such that
the real-valued low-pass filter $\lpf^{2D}$ is $D_4$-symmetric, with
\[
\sr(a^{2D},M_{\sqrt{2}})=\sr(a^{2D},N_{\sqrt{2}})\ge 2\sr(a,2) \quad \mbox{and}\quad \sm(a^{2D},M_{\sqrt{2}})=\sm(a^{2D},N_{\sqrt{2}})\ge 2\sm(a,2),
\]
and all the eleven high-pass filters are real-valued and have symmetry with at least order $\min(2\sr(a,2), \lpm(a)/2)$ vanishing moments. In particular, if we take $a=a^I_{2n}$ with $n\in \N$, then we have a $6$-multiple canonical quincunx tight framelet filter bank in \eqref{multple:6} such that
\begin{enumerate}
\item[{\rm (i)}] all the high-pass filters have symmetry and at least order $n$ vanishing moments;
\item[{\rm (ii)}] the low-pass $a^{2D}=a^I_{2n}\otimes a^I_{2n}$ is $D_4$-symmetric such that $\sr(a^I_{2n}\otimes a^I_{2n}, M_{\sqrt{2}})=\sr(a^I_{2n}\otimes a^I_{2n}, N_{\sqrt{2}})\ge 4n$, $\lim_{n\to \infty} \sm(a^I_{2n}\otimes a^I_{2n}, M_{\sqrt{2}})=\infty$, and $\lim_{n\to \infty} \sm(a^I_{2n}\otimes a^I_{2n}, N_{\sqrt{2}})=\infty$;
\item[{\rm (iii)}] the tight $M_{\sqrt{2}}$-framelet (or tight $N_{\sqrt{2}}$-framelet) $\{\phi; \psi_1,\ldots, \psi_L\}$ with $L=11$ in $L_2(\R^2)$ have symmetry and arbitrarily high orders of vanishing moments and smoothness, where $\phi,\psi_1,\ldots,\psi_L$ is defined in \eqref{phi:psi}.
\end{enumerate}
\end{theo}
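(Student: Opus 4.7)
The plan is to invoke Theorem~\ref{thm:multiple-canonical} with parameters $s=2$ and $L=2$, assembling the one-dimensional double canonical symmetric tight $2$-framelet filter bank $\{a;b_1,b_2,b_3\}$ furnished by Theorem~\ref{thm:double:2} together with the real-valued symmetric triple $\{a;u_1,u_2\}$ satisfying \eqref{partitionof1} from \cite[Theorem~2.7]{Han:acha:2013}. The hypotheses of both auxiliary constructions follow directly from the assumption \eqref{cond:a} and the symmetry of $a$; Theorem~\ref{thm:multiple-canonical} then outputs the eleven high-pass filters listed in \eqref{multple:6} and certifies that this filter bank is an $s(L+1)=6$-multiple canonical quincunx tight framelet filter bank, settling the principal structural claim.

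Next I would verify $D_4$-symmetry of $a^{2D}=a\otimes a$. Because $a$ is real-valued and symmetric about a half-integer point $c$, the tensor product $a^{2D}(k_1,k_2)=a(k_1)a(k_2)$ is invariant under the coordinate swap $(k_1,k_2)\mapsto(k_2,k_1)$ and under each axis reflection about $c$; these generators span $D_4$ acting about $(c,c)$. For the individual high-pass filters, symmetry follows from the tensor structure in \eqref{multiple:2D}: for a real-valued symmetric one-dimensional filter $u_k$, the mask corresponding to $\overline{\wh{u_k}(\go+\pi)}$ is the real-valued filter $k\mapsto(-1)^{1-k}u_k(-k)$, which is symmetric (or antisymmetric) about a shifted centre. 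The canonical relation between $b_{2j}$ and $b_{2j+1}$ in the first coordinate propagates analogously, so each $b^{2D}_{j,k}$ inherits the claimed symmetry or antisymmetry in each variable.

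The sum rule and Sobolev smoothness estimates for $a^{2D}$ are a direct application of Theorem~\ref{thm:sm}(iii), giving $\sr(a^{2D},M_{\sqrt 2})\ge 2\sr(a,2)$ and, when $\sm(a,2)\ge 0$, $\sm(a^{2D},M_{\sqrt 2})\ge 2\sm(a,2)$; the equalities with the corresponding $N_{\sqrt 2}$ quantities follow from the $D_4$-symmetry of $a^{2D}$ together with the $D_4$-equivalence $N_{\sqrt 2}=E\,M_{\sqrt 2}\,F$ ($E,F\in D_4$), which identifies the two refinable functions up to a $D_4$-action. The vanishing moments of the eleven high-pass filters come from \eqref{lpf:lpm} applied to the symmetric $a^{2D}$, combined with the elementary bound $\lpm(a\otimes a)\ge\lpm(a)$ read off from the Taylor expansion of $\wh{a}(\go_1)\wh{a}(\go_2)$ at the origin. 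For the interpolatory case $a=a^I_{2n}$, the sum rule and interpolatory property give $\sr(a^I_{2n},2)=\lpm(a^I_{2n})=2n$, yielding the vanishing-moment lower bound $n$ and the sum rule $\ge 4n$; Daubechies' classical estimate $\sm(a^I_{2n},2)\to\infty$ plugs into Theorem~\ref{thm:sm}(iii) to yield $\sm(a^{2D},M_{\sqrt 2}),\sm(a^{2D},N_{\sqrt 2})\to\infty$, and $\sm(\phi)\ge\sm(a^{2D},M)$ for isotropic $M$ together with $\sm(\psi_\ell)=\sm(\phi)$ delivers item (iii).

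The main obstacle will be the careful symmetry bookkeeping for the eleven two-dimensional high-pass filters: each $b^{2D}_{2j+1,k}$ mixes the canonical shift $e^{-i\go_1}\overline{\wh{b_{2j}}(\go_1+\pi)}$ in the first coordinate with the $\pi$-conjugated factor $\overline{\wh{u_k}(\go_2+\pi)}$ in the second, so one must track simultaneously the centre of symmetry and the $\sym$-type through both factors to ensure that the resulting 2D filter is real-valued and has the claimed symmetry. This is mechanical once the symmetry algebra developed in \cite{Han:acha:2013,Han:acha:2014} and \cite[Proposition~2.1]{Han:acha:2004} is invoked, but it is the most delicate part of the verification.
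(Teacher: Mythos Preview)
Your proposal is correct and follows essentially the same route as the paper's proof: invoke Theorem~\ref{thm:multiple-canonical} with $s=2$, $L=2$ on the one-dimensional inputs from Theorem~\ref{thm:double:2} and \cite[Theorem~2.7]{Han:acha:2013}, then read off the sum rule and smoothness bounds from Theorem~\ref{thm:sm}(iii), the $M_{\sqrt{2}}/N_{\sqrt{2}}$ equalities from $D_4$-symmetry of $a^{2D}$, and the interpolatory case from Daubechies' estimate $\sm(a^I_{2n},2)\to\infty$. The paper's proof is in fact terser than yours---it dispatches the symmetry and vanishing-moment bookkeeping you flag as ``the main obstacle'' with the single line ``all other claims follow from the results and discussion before Theorem~\ref{thm:6multiple}''---so your more explicit treatment of the tensor-factor symmetry tracking is a welcome elaboration rather than a deviation.
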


\begin{proof} Since $a$ is $D_4$-symmetric, by definition of smoothness exponent, we can directly verify that $\sm_p(a^{2D},M_{\sqrt{2}})=\sm_p(a^{2D},N_{\sqrt{2}})$ for all $1\le p\le \infty$ (also see Theorem~\ref{thm:a2D2n2n} and \cite{Han:laa:2002,Han:acha:2004}).
It is known in \cite{Dau} that $\lim_{n\to \infty} \sm(a^I_{2n},2)=\infty$. By Theorem~\ref{thm:sm}, we have $\sm(a^{2D},M_{\sqrt{2}})=\sm(a^I_{2n}\otimes a^I_{2n},M_{\sqrt{2}})\ge 2\sm(a^I_{2n},2)$. Consequently, we have $\lim_{n\to \infty} \sm(a^I_{2n}\otimes a^I_{2n},M_{\sqrt{2}})=\infty$.
All other claims follow from the results and discussion before Theorem~\ref{thm:6multiple}.
\end{proof}

As proved in \cite[Theorem~1 and (2.15)]{Han:adv:2010}, there are finitely supported complex-valued orthonormal $2$-wavelet filters with symmetry and arbitrarily high orders of sum rules. As a consequence, if we relax the constrain on real-valued filters and allow complex-valued filter banks, we can have double canonical quincunx tight framelet filter banks with symmetry of form in \eqref{multiple:2D}.

\begin{cor}
For $n\in \N$, let $\lpf_n\in l_0(\Z)$ be the finitely supported symmetric complex-valued orthonormal $2$-wavelet filter with $\sr(\lpf_n,2)=2n-1$ as constructed in \cite[Theorem~1]{Han:adv:2010}.
Define
\[
\wh{\lpf^{2D}}(\go_1,\go_2):=\wh{\lpf_n}(\go_1)\wh{\lpf_n}(\go_2),
\qquad \wh{\hpf_2}(\go_1,\go_2):=\wh{\lpf_n}(\go_1)\wh{\lpf_n}(\go_2+\pi)
\]
and
\[
\wh{\hpf_1}(\go_1,\go_2):=e^{-i\go_1}\ol{\wh{\lpf^{2D}}(\go_1+\pi,\go_2+\pi)},\qquad
\wh{\hpf_3}(\go_1,\go_2):=e^{-i\go_1} \ol{\wh{\hpf_2}(\go_1+\pi,\go_2+\pi)}.
\]
Then $\{\lpf^{2D};\hpf_1,\hpf_2,\hpf_3\}$ is a double canonical quincunx tight framelet filter bank such that $\lpf^{2D}$ is $D_4$-symmetric, with
\[
\sr(\lpf^{2D},M_{\sqrt{2}})\ge 2n\quad\mbox{and}\quad
\sm(\lpf^{2D},M_{\sqrt{2}})=\sm(\lpf^{2D},N_{\sqrt{2}})\ge 2\sm(a_n,2)\to \infty,\quad \mbox{as} \; n\to \infty,
\]
and all the high-pass filters $\hpf_1,\hpf_2,\hpf_3$ have symmetry and at least order $n$ vanishing moments.
\end{cor}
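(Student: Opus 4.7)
The plan is to recognize the stated construction as the case $s=L=1$ of Theorem~\ref{thm:multiple-canonical} and then to read off its quantitative properties from Theorem~\ref{thm:sm}(iii) together with the vanishing moment identity \eqref{tffb:vm}. The key unpacking step is to set
\[
b_0:=\lpf_n,\qquad \wh{b_1}(\go):=e^{-i\go}\ol{\wh{\lpf_n}(\go+\pi)},\qquad u_0:=\lpf_n,\qquad \wh{u_1}(\go):=\wh{\lpf_n}(\go+\pi).
\]
Orthonormality of $\lpf_n$ makes $\{b_0;b_1\}$ a $1$-multiple canonical tight $2$-framelet filter bank (it is the canonical orthonormal wavelet filter bank derived from $\lpf_n$) and yields $|\wh{u_0}(\go)|^2+|\wh{u_1}(\go)|^2=1$. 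Evaluating the four filters $b^{2D}_{j,k}$ produced by \eqref{multiple:2D} reproduces precisely $\lpf^{2D},\hpf_1,\hpf_2,\hpf_3$, so Theorem~\ref{thm:multiple-canonical} delivers the double canonical quincunx tight framelet filter bank conclusion with no further work.

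For the $D_4$-symmetry of $\lpf^{2D}$, note that $\lpf_n(1-k)=\lpf_n(k)$ for all $k\in\Z$ implies that the coefficient sequence of $\lpf^{2D}=\lpf_n\otimes\lpf_n$ is invariant under each of the reflections $(k_1,k_2)\mapsto(1-k_1,k_2)$, $(k_1,k_2)\mapsto(k_1,1-k_2)$, and the swap $(k_1,k_2)\mapsto(k_2,k_1)$. These three reflections generate the full dihedral group $D_4$ centered at $\mathbf{c}=(1/2,1/2)$, which also forces $\sm(\lpf^{2D},M_{\sqrt{2}})=\sm(\lpf^{2D},N_{\sqrt{2}})$ by the same invariance argument recorded in the proofs of Theorem~\ref{thm:a2D2n2n} and Theorem~\ref{thm:6multiple}. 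The symmetries of $\hpf_1,\hpf_2,\hpf_3$ then come out of inserting the $(-1)^{k_j}$ modulation implicit in $\wh{\lpf_n}(\go_j+\pi)$ and transferring through the canonical relations, exactly as in the derivation of \eqref{sym:b1}--\eqref{sym:b2b3} in the proof of Theorem~\ref{thm:b2D2n2n}.

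The sum rule and smoothness bounds are now direct: applying Theorem~\ref{thm:sm}(iii) to $u=v=\lpf_n$ gives
\[
\sr(\lpf^{2D},M_{\sqrt{2}})=\sr(\lpf^{2D},N_{\sqrt{2}})\ge 2\sr(\lpf_n,2)=2(2n-1)\ge 2n,
\]
and, once one knows that $\sm(\lpf_n,2)\ge 0$, also $\sm(\lpf^{2D},M_{\sqrt{2}})\ge 2\sm(\lpf_n,2)$. The divergence $\sm(\lpf_n,2)\to\infty$ as $n\to\infty$ is the analog of the Daubechies smoothness estimate proved for the symmetric complex family in \cite{Han:adv:2010}, so it may be invoked directly.

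Finally, for the vanishing moment bound I would use \eqref{tffb:vm}. Orthonormality together with $\sr(\lpf_n,2)=2n-1$ gives $|\wh{\lpf_n}(\go+\pi)|^2=\bo(|\go|^{4n-2})$, hence $|\wh{\lpf_n}(\go)|^2=1+\bo(|\go|^{4n-2})$ as $\go\to 0$; tensoring then yields $\lpm(\lpf^{2D}*(\lpf^{2D})^\star)\ge 4n-2$, and combining this with the sum rule bound through \eqref{tffb:vm} produces $\min_\ell \vmo(\hpf_\ell)\ge\min(4n-2,2n-1)=2n-1\ge n$. Every step above is a direct application of a previously stated tool, so there is no real obstacle; the only genuine decision in the whole argument is the initial identification of $(u_0,u_1)$ so that the four 2D filters generated by Theorem~\ref{thm:multiple-canonical} align exactly with the ones displayed in the statement.
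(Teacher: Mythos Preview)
Your proposal is correct and follows precisely the approach the paper intends: the corollary is stated without proof, but the sentence preceding it makes clear that it is the specialization of Theorem~\ref{thm:multiple-canonical} (with $s=L=1$) to the complex symmetric orthonormal filters of \cite{Han:adv:2010}, combined with Theorem~\ref{thm:sm}(iii) for the sum rule and smoothness bounds. Your identification $b_0=u_0=\lpf_n$, $\wh{u_1}(\go)=\wh{\lpf_n}(\go+\pi)$ is exactly the intended unpacking (it parallels Corollary~\ref{theo:double_cano_Daub}), and the remaining checks on $D_4$-symmetry and vanishing moments are routine, so nothing is missing.
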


We conclude this section by presenting two examples of $6$-multiple canonical quincunx real-valued tight framelet filter banks to illustrate the result in Theorem~\ref{thm:6multiple}.

\begin{example}
\label{ex:p40}
\rm{
Consider $a = a_2^I=\{-\frac{1}{32},0,\frac{9}{32},\boxed{\tfrac{1}{2}},\frac{9}{32},0,-\frac{1}{32}\}_{[-3,3]}$
with $\sr(a,2)=4$ and $\lpm(a)=4$.
Then
\[
1-|\wh{a}(\go)|^2-|\wh{a}(\go+\pi)|^2= -\frac{1}{64}(\cos^3(2x)-9\cos^2(2x)+15\cos(2x)-7)\ge0.
\]
By Fej\'{e}r-Riesz Lemma, we can obtain $u\in l_0(\Z)$ such that $|\wh u(2\go)|^2 = 1-|\wh{a}(\go)|^2-|\wh{a}(\go+\pi)|^2$ as follows.
\[
\wh u(\go) = \frac{\sqrt{2}}{32}e^{i\go}(t_0 +t_1e^{-i\go} +t_2 e^{-i2\go}+t_3 e^{-i3\go}),
\]
where $t_0 = 2-\sqrt{3}$, $t_1=-6+\sqrt{3}$, $t_2=6+\sqrt{3}$, $t_3 = -2-\sqrt{3}$.
Define $b_1,b_2,b_3$ by
\begin{equation}\label{expl:b1b2b3}
\wh{b_1}(\go)=e^{-i\go}\ol{\wh{a}(\go+\pi)},\quad
\wh{b_2}(\go)=(\wh{u}(2\go)+e^{-i\go}\ol{\wh{u}(2\go)})/2,\quad
\wh{b_3}(\go)=e^{-i\go}\ol{\wh{b_2}(\go+\pi)}.
\end{equation}
Then,
\[
\begin{aligned}
\wh{b_1}(\go) & = e^{-i\go}\left(\frac12-\frac{9}{32}(e^{i\go}+e^{-i\go})+\frac1{32}(e^{i3\go}+e^{-i3\go})\right),\\
\wh{b_2}(\go) &=\frac{\sqrt{2}}{64}\left(t_3(e^{i3\go}+e^{-i4\go})+t_0(e^{i2\go}+e^{-i3\go})+t_2(e^{i\go}+e^{-i2\go})+t_1(1+e^{-i\go})\right),\\
\wh{b_3}(\go)& =-\frac{\sqrt{2}}{64}\left(t_3(e^{i3\go}-e^{-i4\go})-t_0(e^{i2\go}-e^{-i3\go})+t_2(e^{i\go}-e^{-i2\go})-t_1(1-e^{-i\go})\right).
\end{aligned}
\]
Note that $b_1$ is symmetric about 1 and supported on $[-2,4]$, $b_2$ is symmetric about 1/2 and supported on $[-3,4]$, $b_3$ is antisymmetric about 1/2 and supported on $[-3,4]$. The filter bank $\{a; b_1, b_2, b_3\}$ forms a double canonical tight $2$-framelet filter bank.
See \cite[Examples~4 and~9]{Han:acha:2014} for other tight $2$-framelet filter banks $\{a;b_1,b_2,b_3\}$ with symmetry derived from the interpolatory low-pass filter $a=a^I_2$.

Next, define $\wh{v}(\go):= 1-|\wh{a}(\go)|^2$.
Then,
\[
\wh{v}(\xi) = \Big|\frac{1-e^{-i\go}}{2}\Big|^4\Big|\frac{e^{-i\go}+2-\sqrt{3}}{\sqrt{4-2\sqrt{3}}}\Big|^2\Big(\frac{6+3\cos(x)-\cos^3(x)}{4}\Big).
\]
By Fej\'er-Riesz lemma,
we can obtain $u_0$ such that $|\wh{u_0}(\go)|^2 = v(\go)$ as follows:
\[
\wh{u_0}(\go) =e^{i3\go} \Big(\frac{1-e^{-i\go}}{2}\Big)^2\Big(\frac{e^{-i\go}+2-\sqrt{3}}{\sqrt{4-2\sqrt{3}}}\Big)
\Big(\frac{e^{-i\go}-r_1}{2\sqrt{2r_1}}\Big)
\frac{(e^{-2i\go}+(r_2+\ol{r_2})e^{-i\go}+|r_2|^2)}{2|r_2|}
\]
where
\[
 r_1:=c_0-\sqrt{c_0^2-1}, \; r_2:=c_1-\sqrt{c_1^2-1},
\]
with
\[
t=(3+2\sqrt{2})^{1/3}, \; c_0:=t+\frac1t,\;
c_1:=\frac{c_0}{2}-\frac{\sqrt{3}}{2} i (t-1/t).
\]
%
Define $u_1, u_2$ by
\[
\wh{u_1}(\go) = (\wh{u_0}(\go)+e^{-i\go}\ol{\wh{u_0}(\go)})/2,\quad
\wh{u_2}(\go) = (\wh{u_0}(\go)-e^{-i\go}\ol{\wh{u_0}(\go)})/2.
\]
Then, $u_1$ is symmetric about 1/2 with support $[-3,4]\cap\Z$, $u_2$ is antisymmetric about 1/2 with support $[-3,4]$, and  $|\wh a(\go)|^2+|\wh{u_1}(\go)|^2+|\wh{u_2}(\go)|^2 =1$.

Finally, we can define
\[
\{\lpf^{2D}; \hpf^{2D}_{1,0},\hpf^{2D}_{2,0},\hpf^{2D}_{3,0},\hpf^{2D}_{0,1},\hpf^{2D}_{1,1},
\hpf^{2D}_{2,1},\hpf^{2D}_{3,1},\hpf^{2D}_{0,2},\hpf^{2D}_{1,2},\hpf^{2D}_{2,2},
\hpf^{2D}_{3,2}\}
\]
as in Theorem~\ref{thm:6multiple}, which gives a $6$-multiple canonical quincunx tight framelet filter bank. $a^{2D}$ has  at least  order $4$ sum rules and is $D_4$-symmetric about the origin. All the eleven high-pass filters are real-valued and have symmetry with at least order $2$ vanishing moments.
}\end{example}

\begin{example}
\label{ex:p50}
\rm{
Consider $a=\{-\frac{3}{64},\frac{5}{64},\boxed{\tfrac{15}{32}},\frac{15}{32},\frac{5}{64},-\frac{3}{64}\}_{[-2,3]}$
with $\sr(a,2)=3$ and $\lpm(a)=4$.
Then
\[
1-|\wh{a}(\go)|^2-|\wh{a}(\go+\pi)|^2= -\frac{15}{256}(1-\cos(2x))^2\ge0.
\]
Then $\wh{u}(\go):= \frac{\sqrt{15}}{32}(2-e^{-i\go}-e^{i\go})$ satisfies $1-|\wh{a}(\go)|^2-|\wh{a}(\go+\pi)|^2 = |\wh u(2\go)|^2$.
Define $b_1,b_2,b_3$ as in \eqref{expl:b1b2b3}.
Then,
\[
\begin{aligned}
\wh{b_1}(\go) & =\frac{15}{32}(-1+e^{-i\go})+\frac{5}{64}(e^{i\go}-e^{-i2\go})+\frac{3}{64}(e^{i2\go}-e^{-i3\go}),\\
\wh{b_2}(\go) &=\frac{\sqrt{15}}{64}\left(2(1+e^{-i\go})-(e^{i\go}+e^{-i2\go})-(e^{i2\go}+e^{-i3\go})\right),\\
\wh{b_3}(\go)& =\frac{\sqrt{15}}{64}\left(2(-1+e^{-i\go})-(e^{i\go}-e^{-i2\go})+(e^{i2\go}-e^{-i3\go})\right).
\end{aligned}
\]
Note that high-pass filter $b_1$ is antisymmetric about $1/2$ and supported on $[-2,3]$, the high-pass filter $b_2$ is symmetric about $1/2$ and supported on $[-2,3]$, and the high-pass filter $b_3$ is antisymmetric about $1/2$ and supported on $[-2,3]$. The filter bank $\{a; b_1, b_2, b_3\}$ forms a double canonical tight $2$-framelet filter bank with $\vmo(b_1)=3$, $\vmo(b_2)=2$, and $\vmo(b_3)=3$. See \cite[Example~3]{Han:acha:2013} for a tight $2$-framelet filter bank $\{a;b_1,b_2\}$ with symmetry derived from the low-pass filter $a$.

Next, define $\wh{v}(\go):= 1-|\wh{a}(\go)|^2=-\frac{1}{128}(\cos(x)-1)^2(9\cos^3(x)+3\cos^2(x)-53\cos(x)-79)$.
Then,
\[
\begin{aligned}
\wh{v}(\xi) &=\Big|\frac{1-e^{-i\go}}{2}\Big|^4\frac{9(\cos(x)-c_0)(\cos(x)-c_1)(\cos(x)-\ol{c_1})}{-32},
\end{aligned}
\]
where
\[
c_0=t_1+t_2-1/9,\; c_1 =  -(t_1+t_2+2/9-\sqrt{3}i(t_1-t_2))/2, \; t_1 =8\sqrt[3]{10}/9, \; t_2 = 2\sqrt[3]{100}/9.
\]
Consequently, we can obtain $u_0$ such that $|\wh{u_0}(\go)|^2 = v(\go)$ as follows.
\[
\wh{u_0}(\go) =3e^{i2\go} \Big(\frac{1-e^{-i\go}}{2}\Big)^2\Big(\frac{e^{-i\go}-r_0}{2\sqrt{r_0}}\Big)\frac{e^{-2i\go}-(r_1+\ol{r_1})e^{-i\go}+|r_1|^2}{8|r_1|},
\]
where $r_0=c_0-\sqrt{c_0^2-1}$ and $r_1 = c_1-\sqrt{c_1^2-1}$.
Define $u_1, u_2$ by
\[
\wh{u_1}(\go) = (\wh{u_0}(\go)+e^{-i\go}\ol{\wh{u_0}(\go)})/2,\quad
\wh{u_2}(\go) = (\wh{u_0}(\go)-e^{-i\go}\ol{\wh{u_0}(\go)})/2.
\]
Then, $u_1$ is symmetric about 1/2 with support $[-2,3]\cap\Z$, $u_2$ is antisymmetric about 1/2 with support $[-2,3]\cap\Z$, and  $|\wh a(\go)|^2+|\wh{u_1}(\go)|^2+|\wh{u_2}(\go)|^2 =1$. We also have $\vmo(u_1)=2$ and $\vmo(u_2) = 3$.

Finally, we can define
\[
\{\lpf^{2D}; \hpf^{2D}_{1,0},\hpf^{2D}_{2,0},\hpf^{2D}_{3,0},\hpf^{2D}_{0,1},\hpf^{2D}_{1,1},
\hpf^{2D}_{2,1},\hpf^{2D}_{3,1},\hpf^{2D}_{0,2},\hpf^{2D}_{1,2},\hpf^{2D}_{2,2},
\hpf^{2D}_{3,2}\}
\]
as in Theorem~\ref{thm:6multiple}, which gives a $6$-multiple canonical quincunx tight frame filter bank.
 $a^{2D}$ has  at least  order 6  sum rule and is  $D_4$-symmetric about the origin. All the eleven high-pass filters are real-valued and have symmetry with at least  $2$ vanishing moments.
}
\end{example}

We remark that other $6$-multiple canonical quincunx tight framelet filter banks with high orders of vanishing moments can be obtained by considering other low-pass filters and following the above procedure.

\appendix
\section{Proofs of Theorems~\ref{thm:a2D2n2n} and~\ref{thm:sm}}
\label{sec:proofs}

\begin{proof}[Proof of Theorem~\ref{thm:a2D2n2n}]
The existence of such a filter $\lpf^{2D}_{2n,2n}$ has been proved in Proposition~\ref{prop:u:lpm}. Let $\lpf$ be such a filter $\lpf^{2D}_{2n,2n}$. We now prove the uniqueness of such a filter $\lpf$ satisfying all the properties in Theorem~\ref{thm:a2D2n2n}.

$\lpf$ having orders $2n$ sum rules  with respect to $M=M_{\sqrt{2}}$ is equivalent to
\begin{equation}\label{def:sr}
\sum_{ k\in\Z^2} \lpf(Mk+(1,0))(M k+(1,0))^\mu = \sum_{ k\in\Z^2} \lpf(M k) (M  k)^\mu \quad \forall |\mu|<2n,
\end{equation}
and  $\lpf$ has order $2n$  linear-phase moments with phase $\mathbf{c}=(1/2,1/2)$ is equivalent to
\begin{equation}\label{def:lpm}
\sum_{ k\in\Z^2} a( k) k^\mu = \mathbf{c}^\mu \quad \forall\, |\mu|<2n,
\end{equation}
where $\mu=(\mu_1,\mu_2)\in\NN_0^2$. It is easily seen that
\eqref{def:sr} and \eqref{def:lpm}  are equivalent to
\begin{equation}
\label{eq:sr-lpm}
\begin{cases}
\sum_{ k\in\Z^2} \lpf(M k+(1,0))(M k+(1,0))^\mu & = \frac12 {\bf c}^\mu\\
\\
\sum_{ k\in\Z^2} \lpf(M k) (M  k)^\mu & = \frac12 {\bf c}^\mu
\end{cases},
\quad|\mu|<2n.
\end{equation}

Define
\[
\begin{aligned}
\Lambda_0&:=\{ k=(k_1,k_2)\in\ZZ^2: k_1+k_2 \mbox{ even},  k\in[-n+1,n]^2\},
\\\Lambda_1&:=\{ k=(k_1,k_2)\in\ZZ^2: k_1-k_2 \mbox{ odd},  k\in[-n+1,n]^2\}.
\end{aligned}
\]
Then, $\#\Lambda_0 = \# \Lambda_1 = 2n^2$, $\Lambda_0\cap \Lambda_1=\emptyset$, and $\Lambda_0\cup E_1 = [-n+1,n]^2\cap \ZZ^2$. Moreover,
$\Lambda_0 = M \ZZ^2\cap [-n+1,n]^2$ and $\Lambda_1 = (M\ZZ^2+(1,0))\cap [-n+1,n]^2$.
On the other hand, consider the  index set
\[
\Gamma_{n}:=\{\mu\in\N_0^2: |\mu|<2n, \mu_2<2n-1\}\setminus\{(0,2j-1): j=1,\ldots,n-1\}.
\]
Then, it is easy to show that $\#\Lambda_0=\# \Lambda_1= \#\Gamma_{n} = 2n^2$. Using these notation and noticing $\Gamma_n$ is a subset of $\{\mu\in\NN_0^2: |\mu|<2n\}$, \eqref{eq:sr-lpm} implies that $\lpf$ must also satisfy  the following conditions.
\begin{equation}
\label{eq:sr-lpm2}
\sum_{ k\in \Lambda_\epsilon} \lpf( k) k^\mu  = \frac12 {\bf c}^\mu,  \quad \mu \in \Gamma_n, \epsilon\in\{0,1\}.
\end{equation}

Note that
\[
\# \big(\Lambda_0\cap \{\bx = (x_1,x_2)\in\RR^2: x_1+x_2 = 2j\}\big) = 4-|2j-1|, j = -n+1,\cdots, n
\]
and
\[
\#\big(\Lambda_1\cap \{\bx= (x_1,x_2)\in\RR^2: x_1-x_2 = 2j+1\}\big) = 4-|2j+1|, j = -n,\cdots, n-1.
\]
By \cite[Lemma 3.1]{HanJia:MCOM:2000},
The matrices $( k^\mu)_{ k\in \Lambda_\epsilon,\; \mu\in\Gamma_{n}}$,$\epsilon=0,1$
are non-singular. Consequently, $\lpf$ must be unique.

Item (i) follows from Proposition~\ref{prop:u:lpm}.
For item (ii), notice that $\wh{\lpf_{2n,2n}^{2D}}(\gbo) = \wh{\mathring{\lpf}}(\gbo)e^{-i\mathbf{c}\cdot \gbo}$, where
\begin{equation}
\label{def:atilde}
\wh{\mathring{\lpf}}(\gbo)
:=\wh{\lpf_{2n}^I}\left(\frac{\go_1+\go_2}{2}\right)+\wh{\lpf_{2n}^I}\left(\frac{\go_1-\go_2}{2}\right)-1.
\end{equation}
One can easily show that $\wh{\mathring{\lpf}}$ satisfies ${\wh{\mathring{\lpf}}(E^\top \cdot)} =  \wh{\mathring{\lpf}}$  for all $E\in D_4$ due to the fact that $\lpf_{2n}^I$ satisfies ${\wh{\lpf_{2n}^I}(-\go)}=\wh{\lpf_{2n}^I}(\go)$ for $\go\in\RR$.  Consequently,
\[
{\wh{\lpf_{2n,2n}^{2D}}(E^\top\gbo)}=\wh{\mathring{\lpf}}(E^\top\gbo) e^{-i{\mathbf{c}}\cdot E^\top \gbo} = \wh{\lpf_{2n,2n}^{2D}}(\gbo)e^{i{\mathbf{c}}\cdot (I_2-E^\top)\gbo}, \;\gbo\in \RR^2,
\]
which is equivalent to \eqref{def:mask-sym-group-time},
i.e., $\lpf^{2D}_{2n,2n}$ is  $D_4$-symmetric about ${\mathbf{c}}=(1/2,1/2)$.

Item (iii) is a direct consequence of \cite[Proposition~2.1]{Han:acha:2004} (also see \cite[Theorem~2.3]{Han:laa:2002}).
In fact, by $N_{\sqrt{2}}=EM_{\sqrt{2}}$ with $E=\left[ \begin{matrix} 0 &1\\ 1 &0\end{matrix}\right]$, $N_{\sqrt{2}}$ is $D_4$-equivalent to $M_{\sqrt{2}}$. Thus, by \cite[Proposition~2.1]{Han:acha:2004},
$\phi^{N_{\sqrt{2}}}=\phi^{M_{\sqrt{2}}}(\cdot+\mathring{\mathbf{c}})$, where $\mathring{\mathbf{c}}:=(M_{\sqrt{2}}-I_2)^{-1}\mathbf{c}-(N_{\sqrt{2}}-I_2)^{-1}\mathbf{c}
=(1,1)$. \eqref{phi:sym} follows from \cite[Proposition~2.1]{Han:acha:2004}.
\end{proof}

\begin{proof}[Proof of Theorem~\ref{thm:sm}]
By the definition of sum rules and $M_{\sqrt{2}}\Z^2=N_{\sqrt{2}}\Z^2$,
it is straightforward to check that $\sr(a,M_{\sqrt{2}})=\sr(a,N_{\sqrt{2}})=\sr(a,2)$.
We now prove $\sm_p(a,M_{\sqrt{2}})=\sm_p(a,2)$. Let $M=M_{\sqrt{2}}$.
By the definition of the subdivision operator in \eqref{sd:op}, we have
\begin{equation}\label{sd}
\wh{\mathcal{S}_{a,M}^nv}(\gbo)=|\det(M)|^n \wh{v}((M^\tp)^{n}\gbo) \wh{a}(\xi)\cdots\wh{a}((M^{\tp})^{n-1}\gbo).
\end{equation}
In particular, noting that $M^2=2I_2$, we have
\[
\wh{\mathcal{S}_{a,M}^n\delta}(\gbo)=2^n \wh{a}(\gbo)\cdots\wh{a}((M^\tp)^{n-1}\gbo)=
\wh{\mathcal{S}_{a,2}^{n_1}\delta}(\go_1) \wh{\mathcal{S}_{a,2}^{n_2}\delta}(\go_1+\go_2),
\]
where
\begin{equation}\label{n12}
n_1:=\lfloor\frac{n+1}{2}\rfloor, \qquad n_2:=n-n_1.
\end{equation}
Therefore, for $\mu_1,\mu_2\in \N_0:=\N\cup\{0\}$, we deduce from the above identity that
\[
[\nabla^{\mu_1}_{e_1} \nabla^{\mu_2}_{e_1+e_2} \mathcal{S}_{a,M}^n \delta](j,k)=
[\nabla^{\mu_1} \mathcal{S}_{a,2}^{n_1}\delta](j-k)[\nabla^{\mu_2} \mathcal{S}_{a,2}^{n_2}\delta](k),\qquad j,k\in \Z,
\]
from which we have
\begin{equation}\label{id}
\|\nabla^{\mu_1}_{e_1} \nabla^{\mu_2}_{e_1+e_2} \mathcal{S}_{a,M}^n \delta\|_{l_p(\Z^2)}
=\| \nabla^{\mu_1} \mathcal{S}_{a,2}^{n_1}\delta\|_{l_p(\Z)} \| \nabla^{\mu_2} \mathcal{S}_{a,2}^{n_2}\delta\|_{l_p(\Z)},\qquad \mu_1,\mu_2, n\in \N_0,
\end{equation}
where $n_1$ and $n_2$ are defined in \eqref{n12}.
Let $m:=\sr(a,2)$. By $\wh{a}(0)=1$, it is known in \cite{Han:jat:2003} and \cite[Theorem~3.1]{Han:simaa:2003} that $\rho_j(a,2)_p\ge 2^{1/p-j}$ for all $j\in \N_0$ and
\begin{equation}\label{rhoj}
\rho_j(a,2)_p=\max(2^{1/p-j},\rho_m(a,2)_p),\qquad j=0,\ldots, m.
\end{equation}
Taking $\mu_1=m$ and $\mu_2=0$ in \eqref{id}, by $\rho_0(a,2)_p\ge 2^{1/p}>0$ and $\lim_{n\to \infty}n_1/n=1/2=\lim_{n\to\infty} n_2/n$, we have
\[
\sqrt{\rho_m(a,2)_p} \sqrt{\rho_0(a,2)_p}=
\lim_{n\to \infty} \| \nabla^m \mathcal{S}_{a,2}^{n_1}\delta\|_{l_p(\Z)}^{1/n}
\lim_{n\to \infty} \| \mathcal{S}_{a,2}^{n_2}\delta\|_{l_p(\Z)}^{1/n}=
\lim_{n\to \infty} \|\nabla^{m}_{e_1} \mathcal{S}_{a,M}^n \delta\|_{l_p(\Z^2)}^{1/n}
\le \rho_m(a,M)_p.
\]
Since $\rho_0(a,2)_p\ge 2^{1/p}$, we conclude from the above inequality that
$\rho_m(a,2)_p \le 2^{-1/p} (\rho_m(a,M)_p)^2$. Consequently, by $|\det(M)|=2$ and $\sr(a,M)=m$,
we have
\[
\sm_p(a,2)=\tfrac{1}{p}-\log_2 \rho_m(a,2)_p\ge \tfrac{1}{p}-\log_2 [2^{-1/p} (\rho_m(a,M)_p)^2]
=\tfrac{2}{p}-2\log_2 \rho_m(a,M)_p=\sm_p(a,M).
\]
This proves $\sm_p(a,2)\ge \sm_p(a,M)$. Conversely, taking $\mu_1=j$ and $\mu_2=m-j$ in \eqref{id} with $0\le j\le m$, we have
\begin{equation}\label{est2}
\lim_{n\to \infty} \|\nabla^{j}_{e_1} \nabla^{m-j}_{e_1+e_2} \mathcal{S}_{a,M}^n \delta\|_{l_p(\Z^2)}^{2/n}
=\lim_{n\to \infty}\| \nabla^{j} \mathcal{S}_{a,2}^{n_1}\delta\|_{l_p(\Z)}^{2/n}
\lim_{n\to \infty} \| \nabla^{m-j} \mathcal{S}_{a,2}^{n_2}\delta\|_{l_p(\Z)}^{2/n}
\le \rho_{j}(a,2)_p\rho_{m-j}(a,2)_p.
\end{equation}
By $\nabla_{e_2} \delta=\nabla_{e_1+e_2}\delta-[\nabla_{e_1} \delta](\cdot-e_2)$,
we see that all $\nabla^{\mu_1}_{e_1}\nabla_{e_2}^{m-\mu_1}\delta$ with $\mu_1=0,\ldots,m$ are finitely linear combinations of $[\nabla^{j}_{e_1}\nabla_{e_1+e_2}^{m-j}\delta](\cdot-k), j=0,\ldots,m$ and $k\in \Z^2$.
If we can prove
\begin{equation}\label{smj}
\rho_{j}(a,2)_p\rho_{m-j}(a,2)_p \le
2^{1/p}\rho_m(a,2)_p,\qquad \forall\, j=0,\ldots, m,
\end{equation}
then it follows from \eqref{est2} that $(\rho_m(a,M)_p)^2\le 2^{1/p}\rho_m(a,2)_p$. Since $m=\sr(a,M)$ and $|\det(M)|=2$,
\[
\sm_p(a,M)=\tfrac{2}{p}-2\log_2 \rho_m(a,M)_p \ge
\tfrac{2}{p}-2\log_2\sqrt{2^{1/p} \rho_m(a,2)_p}=\tfrac{1}{p}-\log_2 \rho_m(a,2)_p=\sm_p(a,2).
\]
Hence, $\sm_p(a,M)\ge \sm_p(a,2)$ and this completes the proof of item (i).

We now prove \eqref{smj}. According to \eqref{rhoj}, we have four cases to consider.
If $\rho_j(a,2)_p=2^{1/p-j}$ and $\rho_{m-j}(a,2)_p=2^{1/p-(m-j)}$, then \eqref{smj} holds, since
\[
\rho_{j}(a,2)_p\rho_{m-j}(a,2)_p=2^{1/p-j} 2^{1/p-(m-j)}=2^{2/p-m}=2^{1/p} 2^{1/p-m}\le
2^{1/p} \rho_m(a,2)_p,
\]
where we used the fact that $\rho_m(a,2)_p\ge 2^{1/p-m}$.
If $\rho_j(a,2)_p=\rho_m(a,2)_p$ and $\rho_{m-j}(a,2)_p=2^{1/p-(m-j)}$, then \eqref{smj} holds, since $\rho_{j}(a,2)_p\rho_{m-j}(a,2)_p=2^{1/p-(m-j)} \rho_m(a,2)_p\le 2^{1/p} \rho_m(a,2)_p$.
The case $\rho_j(a,2)_p=2^{1/p-j}$ and $\rho_{m-j}(a,2)_p=\rho_m(a,2)_p$ is similar.

If $\rho_j(a,2)_p=\rho_m(a,2)_p$ and $\rho_{m-j}(a,2)_p=\rho_m(a,2)_p$, then
\[
\rho_{j}(a,2)_p\rho_{m-j}(a,2)_p=\rho_m(a,2)_p \rho_m(a,2)_p\le
2^{1/p} \rho_m(a,2)_p,
\]
where we used the inequality $\rho_m(a,2)_p\le 2^{1/p}$ which is guaranteed by our assumption $\sm(a,2)_p\ge 0$.
Therefore, \eqref{smj} is verified and this completes the proof of item (i).

We now prove item (ii). The claim $\sr(u*v,M)\ge \sr(u,M)+\sr(v,M)$ can be directly verified by using the definition of sum rules.
By \eqref{sd} and $\wh{u*v}(\gbo)=\wh{u}(\gbo)\wh{v}(\gbo)$, for $\mu,\nu\in \N_0^d$, we have
\[
\nabla^{\mu+\nu} \mathcal{S}_{u*v,M}^n\delta=|\det(M)|^{-n} [\nabla^\mu \mathcal{S}_{u,M}^n\delta]*[\nabla^\nu \mathcal{S}_{v,M}^n\delta].
\]
Consequently, by Cauchy-Schwarz inequality, we have
\[
\|\nabla^{\mu+\nu} \mathcal{S}_{u*v,M}^n\delta\|_{l_\infty(\Z^d)}\le |\det(M)|^{-n}
\|\nabla^\mu \mathcal{S}_{u,M}^n\delta\|_{l_2(\Z^d)} \| \nabla^\nu \mathcal{S}_{v,M}^n\delta\|_{l_2(\Z^d)}
\]
Let $m_1:=\sr(u,M)$ and $m_2:=\sr(v,M)$. Taking $\mu,\nu\in \N_0^d$ with $|\mu|=m_1$ and $|\nu|=m_2$ in the above inequality, we have
\begin{align*}
\lim_{n\to \infty} \|\nabla^{\mu+\nu} \mathcal{S}_{u*v,M}^n\delta\|_{l_\infty(\Z^d)}^{1/n}
&\le |\det(M)|^{-1} \lim_{n\to \infty}
\|\nabla^\mu \mathcal{S}_{u,M}^n\delta\|_{l_2(\Z^d)}^{1/n}
\lim_{n\to \infty} \| \nabla^\nu \mathcal{S}_{v,M}^n\delta\|_{l_2(\Z^d)}^{1/n}\\
&\le |\det(M)|^{-1} \rho_{m_1}(u,M)_2 \rho_{m_2}(v,M)_2.
\end{align*}
Note that any element $\eta\in \N_0^d$ with $|\eta|=m_1+m_2$ can be written as $\eta=\mu+\nu$ with $|\mu|=m_1$ and $|\nu|=m_2$ for some $\mu,\nu\in \N_0^d$. Thus, we deduce from the above inequality that $\rho_{m_1+m_2}(u*v,M)_\infty\le |\det(M)|^{-1} \rho_{m_1}(u,M)_2 \rho_{m_2}(v,M)_2$. Let $m:=\sr(u*v,M)$. By $m\ge m_1+m_2$, we have
\[
\rho_m(u*v,M)_\infty\le  \rho_{m_1+m_2}(u*v,M)_\infty\le |\det(M)|^{-1} \rho_{m_1}(u,M)_2 \rho_{m_2}(v,M)_2,
\]
from which we have
\begin{align*}
\sm_\infty(u*v,M)&=-d\log_{|\det(M)|} \rho_m(u*v,M)_\infty\ge
-d\log_{|\det(M)|} [|\det(M)|^{-1} \rho_{m_1}(u,M)_2 \rho_{m_2}(v,M)_2]\\
&=\tfrac{d}{2}-d\log_{|\det(M)|} \rho_{m_1}(u,M)_2+\tfrac{d}{2}-d\log_{|\det(M)|} \rho_{m_2}(v,M)_2=\sm_2(u,M)+\sm_2(v,M).
\end{align*}
The proof of item (ii) is completed by noting that $\sm_\infty(u*v,M)\le \sm_2(u*v,M)$ always holds.

To prove item (iii), we define $\tilde{u}(k,j):=u(k)\delta(j)$
and $\tilde{v}(j,k):=v(k)\delta(j)$ for all $j, k\in \Z$. That is,
$\tilde{u}$ is the 2D filter by identifying $u$ on $\Z$ with $\Z\times\{0\}$, while
$\tilde{v}$ is the 2D filter by identifying $v$ on $\Z$ with $\{0\}\times \Z$. Since $\sm(u,2)\ge 0$ and $\sm(v,2)\ge 0$, by item (i), we have $\sr(\tilde{u},M_{\sqrt{2}})=\sr(u,2)$, $\sr(\tilde{v},M_{\sqrt{2}})=\sr(v,2)$
and $\sm(\tilde{u},M_{\sqrt{2}})=\sm(u,2)$,
$\sm(\tilde{v},M_{\sqrt{2}})=\sm(v,2)$.
Note that $u\otimes v=\tilde{u}*\tilde{v}$. Now the claim in item (iii) follows directly from item (ii).
\end{proof}

\end{document}